\documentclass[accepted]{uai2024}

% \usepackage{icml2024}

% to compile a preprint version, e.g., for submission to arXiv, add add the
% [preprint] option:
%     \usepackage[preprint]{neurips_2023}

% to compile a camera-ready version, add the [final] option, e.g.:
%     \usepackage[final]{neurips_2023}

% to avoid loading the natbib package, add option nonatbib:
%    \usepackage[nonatbib]{neurips_2023}

\usepackage[utf8]{inputenc} % allow utf-8 input
\usepackage[T1]{fontenc}    % use 8-bit T1 fonts
\usepackage{hyperref}       % hyperlinks
\usepackage{url}            % simple URL typesetting
\usepackage{booktabs}       % professional-quality tables
\usepackage{amsfonts}       % blackboard math symbols
\usepackage{nicefrac}       % compact symbols for 1/2, etc.
\usepackage{microtype}      % microtypography
\usepackage{xcolor}         % colors
\usepackage{amsmath}
\usepackage{wrapfig}
\usepackage{amsthm}
\usepackage{graphicx}
\usepackage{url}
\usepackage{float}
\usepackage{algorithm}
\usepackage{multirow}
% \setcitestyle{numbers}
% \setcitestyle{square}
%\usepackage[round]{natbib}
% \renewcommand{\bibname}{References}
% \renewcommand{\bibsection}{\subsubsection*{\bibname}}

% \title{Calibrated Propensity Scores \\for Causal Effect Estimation}

%%%%%%%%%%%%%%%%%%%%%%%%%%%%%%%%
% THEOREMS
%%%%%%%%%%%%%%%%%%%%%%%%%%%%%%%%
% \theoremstyle{plain}
\newtheorem{theorem}{Theorem}[section]

\newtheorem{lemma}[theorem]{Lemma}
\newtheorem{corollary}[theorem]{Corollary}
\newtheorem{fact}[theorem]{Fact}
\newtheorem{task}[theorem]{Task}
% \theoremstyle{definition}

% \theoremstyle{remark}

% The \author macro works with any number of authors. There are two commands
% used to separate the names and addresses of multiple authors: \And and \AND.
%
% Using \And between authors leaves it to LaTeX to determine where to break the
% lines. Using \AND forces a line break at that point. So, if LaTeX puts 3 of 4
% authors names on the first line, and the last on the second line, try using
% \AND instead of \And before the third author name.

\usepackage{natbib} % has a nice set of citation styles and commands
    \bibliographystyle{plainnat}
    
\usepackage{mathtools} % amsmath with fixes and additions
\usepackage{booktabs} % commands to create good-looking tables
\usepackage{tikz} % nice language for creating drawings and diagrams
\usepackage[none]{hyphenat} 
%% Provided macros
% \smaller: Because the class footnote size is essentially LaTeX's \small,
%           redefining \footnotesize, we provide the original \footnotesize
%           using this macro.
%           (Use only sparingly, e.g., in drawings, as it is quite small.)

%% Self-defined macros
 % just an example

\title{Calibrated and Conformal Propensity Scores for Causal Effect Estimation}

% The standard author block has changed for UAI 2024 to provide
% more space for long author lists and allow for complex affiliations
%
% All author information is authomatically removed by the class for the
% anonymous submission version of your paper, so you can already add your
% information below.
%
% Add authors
\author[1]{\href{mailto:<ssd86@cornell.edu>?Subject=Your UAI 2024 paper}{Shachi Deshpande}}
\author[1]{Volodymyr Kuleshov}
% Add affiliations after the authors
\affil[1]{%
    Dept of Computer Science\\
    Cornell University and Cornell Tech\\
    New York, NY, USA
}

  \begin{document}
  \maketitle
  
\begin{abstract}
Propensity scores are commonly used to
estimate treatment effects from observational data.
% balance observed covariates while estimating treatment effects. %Estimates obtained through propensity score weighing can be biased when the propensity score model cannot learn the true treatment assignment mechanism. 
We argue that the probabilistic output of a learned propensity score model should be calibrated---i.e., a predictive treatment probability of 90\% should correspond to 90\% individuals being assigned the treatment group---and we propose simple recalibration techniques to ensure this property. 
%We investigate the theoretical properties of a calibrated propensity score model and its role in unbiased treatment effect estimation. 
We prove that calibration is a necessary condition for unbiased treatment effect estimation when using popular inverse propensity weighted and doubly robust estimators. We derive error bounds on causal effect estimates that directly relate to the quality of uncertainties provided by the probabilistic propensity score model and show that calibration strictly improves this error bound while also avoiding extreme propensity weights. We demonstrate improved causal effect estimation with calibrated propensity scores in several tasks including high-dimensional image covariates and genome-wide association studies (GWASs). Calibrated propensity scores improve the speed of GWAS analysis by more than two-fold by enabling the use of simpler models that are faster to train. 
  
\end{abstract}

% The standard author block has changed for UAI 2023 to provide
% more space for long author lists and allow for complex affiliations
%
% All author information is authomatically removed by the class for the
% anonymous submission version of your paper, so you can already add your
% information below.
%
% Add authors

\section{INTRODUCTION}\label{sec:intro}

% \vk{intro currently generated by ChatGPT, let's edit this and make it nice and polished}

This paper studies the problem of inferring the causal effect of an intervention from observational data. For example, consider the problem of estimating the effect of a treatment on a medical outcome or the effect of a genetic mutation on a phenotype. A key challenge in this setting is confounding---e.g., if a treatment is only given to sick patients, it may paradoxically appear to trigger worse outcomes~\citep{greenland1999confounding, VanderWeele2006-yf}.

% Causal inference from observational data is an important problem in many fields, such as healthcare and genomics. One of the major challenges in this problem is confounding, which arises when the observed covariates are not balanced between the treatment and control groups. 
Propensity score methods are a popular tool for correcting for confounding in observational data~\citep{Rosenbaum1983-rp, DAgostino1998-ex,imbens2000therole,  Lanza2013-ul}. %Lunceford2004-zu, VanderWeele2006-yf,These methods estimate the probability of receiving a treatment given observed covariates, and balance covariates based on this probability.
However, propensity score methods can become unreliable when their predictive model outputs incorrect treatment assignment probabilities~\citep{Kang_2007, smith2005matching, Lenis2018-so}. 
For example, when the propensity score model is overconfident (a known problem with neural network estimators~\cite{guo2017calibration}), predicted assignment probabilities can be too small~\citep{tan2017regularized}, which yields a blow-up in the estimated causal effects.
More generally, propensity score weighting stands to benefit from accurate uncertainty estimation \cite{kallus2020deepmatch}. % in the treatment assignment model.
% Oftentimes these probabilities can lead to invalid treatment assignments, and rare events can cause propensity scores to blow up, especially when paired with neural network estimators.
% It is therefore important to study the uncertainty in propensity score estimates, particularly with respect to calibration. 

This work argues that propensity score methods can be improved by leveraging calibrated uncertainty estimation in treatment assignment models.
Intuitively, when a calibrated model outputs a treatment probability of 90\%, then 90\% of individuals with that prediction should be assigned to the treatment group~\citep{Platt99probabilisticoutputs, kuleshov2018accurate}.
We argue that calibration is a necessary condition for propensity score models and it also addresses the aforementioned problems of model overconfidence.
% however, it is typically not maintained in existing propensity score models. 
%
% Calibration is important for several reasons. Incorrect probabilities can lead to invalid treatment assignments, and rare events can cause propensity scores to blow up, especially when using neural network estimators. This work establishes that calibration is a necessary condition for propensity scoring and identifies settings where recalibration provides strict improvements. 

Off-the-shelf propensity score models are typically uncalibrated \cite{kallus2020deepmatch};
our work introduces algorithms that provably enforce uncertainty calibration in these models.  %provides  theoretical analysis, and demonstrates the usefulness of calibrated propensities on several tasks, including genome-wide association studies. We also show that calibration can reduce computational requirements when applied to simpler propensity score models.
Post-processing of propensity weights is often done via trimming~\citep{crump2009dealing, li2018balancing}, but this can introduce bias by eliminating information from propensity weights below a pre-selected trimming threshold~\citep{li2018addressing}. Propensity score calibration reduces variance from extreme propensity weights without introducing bias from trimming thresholds. %but choosing optimal thresholds is dependent on the problem and wrong thresholds can introduce bias~\citep{li2018addressing}. 
Approaches that balance covariates during optimization to obtain propensity weights have demonstrated theoretical and empirical advantages in causal effect estimation~\citep{hainmueller_2012, Chan2016-lf, zhao2017covariate, zubizarrata2015jose, ning2018robust}, but choosing appropriate covariate balancing conditions requires substantial knowledge of the observational study~\citep{benmichael2021balancing}. %Even though uncertainty calibration does not directly optimize covariate balance, 
Uncertainty calibration is simpler to implement and can be combined with any base propensity model and methods like trimming or covariate balancing without changing the model optimization procedure. 

In summary, this paper makes the following contributions: (1) we provide formal arguments that establish calibration as a necessary condition for unbiased treatment effect estimation, %using popular estimators based on inverse propensity scores, 
prove the reduction of variance by avoiding extreme propensity weights and show improved error bounds on the causal effect estimates by enforcing calibration; %of commonly used inverse propensity weighted and doubly robust estimators; explain the benefits of uncertainty calibration in propensity score models; 
(2) we propose simple algorithms that enforce calibration; (3) we provide theoretical guarantees on the calibration and  regret of these algorithms; (4) we demonstrate the effectiveness of calibrated propensities in several tasks and show improvement in the speed of high-dimensional genome-wide association studies (GWASs) by more than two-fold. 
% proposes simple recalibration techniques for ensuring that the probabilistic output of a learned propensity score model is calibrated. We investigate the theoretical properties of a calibrated propensity score model and its role in unbiased treatment effect estimation. Our results demonstrate improved causal effect estimation with calibrated propensity scores in several settings, highlighting the importance of calibration in propensity score methods.

\section{BACKGROUND}
\label{sec:background}
% \vk{copied from NeurIPS22, needs editing}
\paragraph{Notation}
Formally, we are given an observational dataset $\mathcal{D}=\{(x^{(i)},y^{(i)},t^{(i)})\}_{i=1}^n$ consisting of $n$ units, each characterized by features $x^{(i)} \in \mathcal{X} \subseteq \mathbb{R}^d$, a binary treatment $t^{(i)} \in \{0,1\}$, and a scalar outcome $y^{(i)} \in \mathcal{Y} \subseteq \mathbb{R}$. 
We assume $\mathcal{D}$ consists of i.i.d.~realizations of random variables $X, Y, T \sim P$ from a data distribution $P$.
Although we assume binary treatments and scalar outcomes, our approach naturally extends beyond this setting.
The feature space $\mathcal{X}$ can be any continuous or discrete set.

\subsection{Causal Effect Estimation Using Propensity Scoring}

% \vk{copied from NeurIPS22, needs editing}

% We also use $z^{(i)} \in \mathbb{R}^p$ to model latent confounding factors that influence both the treatment and the outcome \citep{louizos2017causal}. 
% We assume that $(x^{(i)},y^{(i)},t^{(i)}, z^{(i)})$ represent i.i.d.~realizations of random variables $X, Y, T, Z \sim \mathbb{P}$ from a data distribution $\mathbb P$ that decomposes as $\mathbb{P}(X|Z)\mathbb{P}(Y|Z,T)\mathbb{P}(T|Z)\mathbb{P}(Z)$ (thus $Z$ is indeed a confounder). 
We seek to estimate the true effect of $T=t$ in terms of its average treatment effect (ATE).
\begin{align}
& Y[x,t] =  \mathbb{E}[Y | X =x, \text{do}(T=t)]\\
&\text{ATE} = \mathbb{E}[Y[x,1] - Y[x,0]],
\end{align}
where $ \text{do}(\cdot)$ denotes an intervention \citep{pearl2000models}. We assume strong ignorability, i.e., $(Y(0), Y(1)) \perp T | X $ and $0 < P(T|X) < 1,$ for all $X \in \mathcal{X}, T \in \{0, 1\}$, where $Y(0)$ and $Y(1)$ denote potential outcomes. We also make the stable unit treatment value assumption (SUTVA), which states that there is a unique value of outcome $Y_i(t)$ corresponding to unit $i$ with input $x_i$ and treatment $t$~\citep{Rosenbaum1983-rp}. Under these assumptions, the propensity score defined as $e(X) = P(T=1|X)$ satisfies the conditional independence  $(Y(0), Y(1)) \perp T | e(X)$~\citep{Rosenbaum1983-rp}. Propensity score also acts as a balancing score, i.e. $X \perp T | e(X)$.
Thus, ATE can be expressed as $\tau = \mathbb{E} \bigg(\frac{TY}{e(X)} - \frac{(1-T)Y}{1-e(X)}\bigg).$
The Inverse Propensity of Treatment Weight (IPTW) estimator uses an approximate model $Q(T=1|X)$ of $P(T=1|X)$ to produce an estimate $\hat{\tau}$ of the ATE, which is computed as $$ \hat{\tau_1} = \frac{1}{n}\sum_{i=1}^n \bigg( \frac{t^{(i)}y^{(i)}}{Q(T=1|x^{(i)})} - \frac{(1-t^{(i)})y^{(i)}}{1-Q(T=1|x^{(i)})}\bigg).$$ 
%We also define the Augmented Inverse Propensity Score Weight (AIPW) estimator in Appendix~\ref{apdx:estimators}. 
The Augmented Inverse Propensity Score Weight (AIPW) estimator uses an outcome model $f(X, T)$ to approximate the potential outcome $Y[X, T]$, thus computing ATE as  
$\hat{\tau_2} = \hat{\tau_1} + \frac{1}{n}\sum_{i=1}^n \big[f(x_i, T=1)\big(1-\frac{t_i}{Q(T=1|x^{(i)})}\big)  - f(x_i, T=0)\big(1-\frac{(1-t_i)}{1-Q(T=1|x^{(i)})}\big)\big].$ This doubly robust estimator can produce accurate ATE estimates when either the propensity model or the outcome model is correctly specified~\citep{Robins1994estimation}.  
%\vk{$t_i$ vs $t^{(i)}$}\vk{I think it's ok to use on display eqn here}
% Propensity scores are widely applied to high-dimensional, unstructured covariates~\citep{pryzant2017predicting,veitch2019using,veitch2020adapting}; however, when covariates are high-dimensional, a neural approximator for $e(X) = P(T|X)$ may output volatile and miscalibrated probabilities close to $\{0,1\}$, leading to unreliable effect estimates~\citep{kallus2020deepmatch}.

% \begin{itemize}
    
%     \item Need to define propensity scoring in detail. Check the papers on propensity scoring from David Blei's group cited in NeuriPS22 for details.
%     \item Potential outcomes framework, SUTVA, ignorability and positivity \vk{briefly?}
%     \item Except for the most standard estimator, it's okay to delegate details to the appendix.
% \end{itemize}

\subsection{Calibrated and Conformal Prediction}

This paper seeks to evaluate and improve the uncertainty of propensity scores.
% Propensity score methods require estimating probabilistic models $Q(T|X)$ of $P(T|X)$; this paper explores the evaluation and the creation of such models.
% \vk{copied from UAI2023, needs editing}
A standard tool for evaluating predictive uncertainties is a proper loss (or proper scoring rule) $L : \Delta_\mathcal{Y} \times \mathcal{Y} \to \mathbb{R}$, defined over the set of distributions $\Delta_\mathcal{Y}$ over $\mathcal{Y}$ and a realized outcome 
$y \in \mathcal{Y}$. Examples of proper losses include the L2 or the log-loss.
% Given another $G \in \Delta_\mathcal{Y}$, we use the notation $L(F,G)$ for the expected score
% $
% L(F,G) = \mathbb{E}_{y \sim G} L(F, y).
% $
%
% \paragraph{Calibration and Sharpness.}
% What are the qualities of a good probabilistic prediction, as measured by a proper scoring rule?
It can be shown that a proper score is a sum of the following terms \citep{gneiting2007probabilistic}:
$\text{proper loss} = \text{calibration} 
% \underbrace{
- \text{sharpness} + \text{irreducible term}.
%} _\text{refinement term}.
$
% Thus, there are precisely two qualities that define an ideal forecast: calibration and sharpness.
% These represent precisely the two desiderata that matter for evaluating uncertainties. 
%We examine each of them next.

\paragraph{Calibration.}
Intuitively, calibration means that a 90\% confidence interval contains the outcome  about $90\%$ of the time.
Sharpness means that confidence intervals should be tight. Maximally tight and calibrated confidence intervals are Bayes optimal.
In the context of propensity scoring methods for binary treatments, we say that a propensity score model $Q$ is calibrated  if the true probability of $T=1$ conditioned on predicting a probability $p$ matches the predicted probability: 
\begin{equation}
    P(T=1 \mid Q(T=1|X) = p) = p \;\; \forall p \in [0,1]
\end{equation}
% Calibration by itself is not sufficient to produce a useful forecast. For example, predicting marginal data probability $P(T = 1)$ for $T=1$ yields calibrated predictions; however it does not use the features $X$. In order to be useful, forecasts must also be {\em sharp}: intuitively, predicted confidence intervals should be as tight as possible.
% More formally, we want the entropy of the predicted probabilities to be small. 
%This is captured by proper scoring rules as part of a refinement term (see Table \ref{tbl:properlosses}), which equals an irreducible term minus a sharpness term \citep{murphy1973vector,brocker2009decomposition}. The latter is maximized when we minimize the scoring rule.

% We say that $L$ is {\em proper scoring rule} or a {\em proper loss} if it is minimized by $G$ when $G$ is the true distribution for $y$:
% $
% L(F,G) \geq L(G,G) \text{ for all $F$}.
% $
% %When $S$ is proper, we also refer to it as a proper loss.
% One example is the log-likelihood $L(F,y) = \log f(y)$. 
% %where $f$ is the probability density or probability mass function of $F$. 
% Another example is the check score for $\tau \in [0,1]$:
% \begin{equation}
% \rho_\tau(F, y) = \tau (y-F^{-1}(\tau)) \text{ if $y \geq f$ else $(1-\tau)(F^{-1}(\tau)-y)$}
% \end{equation}

\paragraph{Calibrated and Conformal Prediction.} Out of the box, most models $Q$ are not calibrated. Calibrated and conformal prediction yield calibrated forecasts by comparing observed and predicted frequencies on a hold-out dataset \citep{shafer2007tutorial,kuleshov2018accurate, angelopoulos2021gentle, vovk2005defensive}. %\vk{need refs in neurips format []}

\section{CALIBRATED PROPENSITY SCORES}

% \subsection{Uncertainty Estimation for Propensity Scoring}

% Propensity scoring methods rely on a model $Q(T|X)$ to approximate the true probability $P(T|X)$. 
% If $Q(T=t|X)$ accurately represents the probability of assigning treatment $t$ to an individual with covariates $X$, an IPTW estimator will correctly estimate the treatment effect.
%
We start with the observation
% Crucially, 
that a good propensity scoring model $Q(T|X)$ must not only correctly output the treatment assignment, but also accurately estimate predictive uncertainty. Specifically, the {\em probability} of the treatment assignment must be correct, not just the class assignment. 
While a Bayes optimal $Q$ will perfectly estimate uncertainty, suboptimal models will need to balance various aspects of predictive uncertainty, such as calibration and sharpness.
This raises the question: what predictive uncertainty estimates work best for causal effect estimation using propensity scoring?

\subsection{Calibration: A Necessary Condition for Propensity Scoring Model}
% \sd{Rewrite 'necessary condition' - since theorem 3.1 does not establish this for all outcome functions?}
This paper argues that calibration improves propensity-scoring methods. Intuitively, if the model $Q(T=1|X)$ predicts a treatment assignment probability of 80\%, then 80\% of these predictions should receive the treatment. If the prediction is larger or smaller,  the downstream IPTW estimator will overcorrect or undercorrect for the biased treatment allocation; see below for a simple example.

In other words, calibration is a {\em necessary condition} for a correct propensity scoring model. We formalize this intuition below, and we provide examples in Appendix~\ref{apdx:calibration-necessary} where an IPTW estimator fails when it is not calibrated.
%\vk{todo: put small example from proof into the appendix (after 5/17 is okay)}

\begin{theorem}
 When $Q(T|X)$ is not calibrated, there exists an outcome function such that an IPTW estimator based on $Q$ yields an incorrect estimate of the true causal effect almost surely.
% For each uncalibrated model $Q(T|X)$, the set of data distributions $P$ for which an IPTW estimator yields incorrect  probabilities has measure one.
\end{theorem}
\begin{proof}[Example]
Consider $\mathcal{X} = \mathcal{T} = \mathcal{Y} = \{0,1\}$. Let $ P(T=1|X=0)=p_0,  P(T=1|X=1)=p_1$ and $P(X=1)=0.5$. Let us assume that $Q(T=1|X=0) = q_0$ and $Q(T=1|X=1)=q_1$. When $Q(T|X)$ is uncalibrated, $\exists i \in \{0,1\}, p_i \neq q_i$. 

If $p_1 \neq q_1$, we set $Y=X \oplus T$ ($\oplus$ is logical `AND'), and the IPTW estimator based on $Q$ obtains $\tau' =  \frac{0.5.p_1}{q_1}.$ Here, true ATE $\tau=0.5$. 

If $p_0 \neq q_0$, we set $Y=\bar{X} \oplus \bar{T}$ ($\bar{V}$ denotes logical negation of binary variable $V$), and the IPTW estimator based on $Q$ obtains $\tau' =\frac{-0.5(1-p_0)}{1-q_0}$. Here true ATE $\tau=-0.5$. 

Please note that we require the model $Q$ to be uncalibrated and not necessarily inconsistent. 
\end{proof}
% Consider a toy binary setting where $\mathcal{X} = \mathcal{T} = \{0,1\}, \mathcal{Y} = \{0,1\}^2$.

% We set $Y = (X \oplus T, \bar{X} \oplus \bar{T}) $, $ P(T=1|X=0)=p_0,  P(T=1|X=1)=p_1$ and $P(X=1)=0.5$ such that $\oplus$ is logical `AND' and $\bar{V}$ denotes logical negation of binary variable $V$. We see that true ATE is $\tau=(0.5, -0.5)$. Let us assume that $Q(T=1|X=0) = q_0$ and $Q(T=1|X=1)=q_1$. Thus, with IPTW estimator based on $Q$, we estimate $\tau' = \mathbb{E} \bigg(\frac{TY}{Q(T=1|X)} - \frac{(1-T)Y}{1-Q(T=1|X)}\bigg) = (-\frac{0.5(1-p_0)}{1-q_0}, \frac{0.5.p_1}{q_1}).$ The treatment effect $\tau'=\tau$ only when $q_0=p_0$ and $q_1=p_1$, which is not true if $Q$ is not calibrated. 

Please refer to Appendix~\ref{apdx:calibration-necessary} for a full proof. Appendix~\ref{apdx:doubly-robust} also proves the following theorem for the AIPW estimator.  
\begin{theorem}
\label{thrm-aipw-calibration-necessary}
When propensity model $Q(T|X)$ is not calibrated and the  outcome model f(X, T) is inaccurate for $X \in \{X: Q(T=1|X)=q\} \subseteq \mathcal{X}$ such that $P(T=1| Q(T=1|X')=q) \neq q$, then there exists a true outcome function such that the doubly robust AIPW estimator based on Q and f yields an incorrect estimate of true causal effects almost surely.
\end{theorem}
%Theorem~\ref{apdx:thrm-calibration-necessary} in 
 Thus, for the AIPW estimator, calibration is a necessary condition when the outcome model is inaccurate.%\sd{resolve possible confusion over requirement that model be uncalibrated and not inconsistent} %in regions where the propensity score model is uncalibrated. 

% Fortunately, calibration is also a property that is enforceable---we provide algorithms that ensure calibration in the next section. 

% and show that calibration can be enforced in a post-hoc manner without impacting initial model performance, as measured by a proper loss function.
% and show that miscalibration error decreases as $O(1/\sqrt{n})$, where $n$ is the size of a small additional recalibration dataset. Additionally, calibration can be enforced in a post-hoc manner without impacting initial model performance, as measured by a proper loss function.

% \begin{enumerate}
%     \item First, we can try to intuitively explain why a model may fail if it's not calibrated. Maybe use a small table with a small example in it.
%     \item Then, we can state a formal theorem. It could say something along the lines of: for each uncalibrated model $q$, there exist data distributions $p$ where you are not calibrated (and almost all $p$ satisfy this).
% \end{enumerate}

\subsection{Calibrated Uncertainties Improve Propensity Scoring Models}

In addition to being a necessary condition, we also identify settings in which calibration is either sufficient or prevents common failure modes of IPTW estimators. Specifically, we identify and study two such regimes: (1) accurate but over-confident propensity scoring models (e.g., neural networks \citep{guo2017calibration}); (2) high-variance IPTW estimators that take as input numerically small propensity scores.

% \subsubsection{Bounding the Error of Causal Effect Estimation Using Proper Scores}
\subsubsection{Error Bound on Causal Effect Estimates}

Our first step for studying the role of calibration is to relate the error of an IPTW estimator to the difference between a model $Q(T|X)$ and the true $P(T|X)$. 
We define $\pi_{t,y}(Q) = \sum_x P(y|x,t)\frac{P(t|x)}{Q(t|x)}P(x)$ to be the estimated probability of $y$ given $t$ with a propensity score model $Q$.
It is not hard to show that the true $Y[t] := \mathbb{E}_X Y[X,t] = \mathbb{E}_X \mathbb{E}[Y|X=x, \mathrm{do}(T=t)]$ can be written as $\sum_{y} y \pi_{y,t}(P)$ (see Appendix~\ref{apdx:calibrated-uncertainties-improve-propensity}).
Similarly, the estimate of an IPTW estimator with propensity model $Q$ in the limit of infinite data tends to $\hat Y_Q[1] - \hat Y_Q[0]$, where $\hat Y_Q[t]:= \sum_{y} y \pi_{y,t}(Q)$. We may bound the expected L1 ATE error $|Y[1] - Y[0] - (\hat Y_Q[1] - \hat Y_Q[0])|$ by $\sum_t |Y[t] - \hat Y_Q[t]| \leq \sum_t \sum_y |y| \cdot |\pi_{y,t}(P) - \pi_{y,t}(Q)|$.

% We define the error $\hat \tau_t$ of an IPTW estimator for treatment $T=t$  as 
% $(\hat\tau_t(P) - \hat\tau_t(Q))^2, $
% where $\hat \tau_t(P) = \mathbb{E}_{X, Y \sim R_t} \frac{Y}{Q(T=t|X)}$ is the expected estimate of the causal effect using a propensity score model $Q$ and $R_t \propto P(Y=1 | X, T=t) P(T=t|X) P(X)$ is the distribution of $X, Y$ conditioned on $T=t$. By definition, the optimal estimate is $\hat\tau(P)$. \sd{causal effect is $\hat\tau_1(P) - \hat\tau_0(P)$?}

Our first lemma bounds the error $|\pi_{y,t}(P) - \pi_{y,t}(Q)|$ as a function of the difference between $Q(T|X)$ and the true $P(T|X)$. A bound on the ATE error follows as a simple corollary.
\begin{lemma}
The expected error $|\pi_{y,t}(P) - \pi_{y,t}(Q)|$ induced by an IPTW estimator with propensity score model $Q$ is bounded as
\begin{equation}
|\pi_{y,t}(P) - \pi_{y,t}(Q)| \leq \mathbb{E}_{X \sim R_{y,t}}[ \ell_\chi(P_t,Q_t)^\frac{1}{2}],    
\end{equation}
where $R_{y,t} \propto P(Y=y | X, T=t) P(X)$ is a data distribution and $\ell_\chi(P_t,Q_t)= \left( 1- \frac{P(T=t|X)}{Q(T=t|X)} \right)^2$ is the $chi$-squared loss between the true propensity score and the model $Q$. 
\end{lemma}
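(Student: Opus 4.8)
The plan is to expand both quantities explicitly from the definition of $\pi_{y,t}$, factor out the common measure $P(Y=y\mid X, T=t)\,P(X)$, recognize the resulting average as an expectation under $R_{y,t}$, and let the $\chi$-squared loss emerge after a single application of the triangle inequality. The whole argument is an exercise in bookkeeping around the normalizing constant of $R_{y,t}$.

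First I would evaluate $\pi_{y,t}(P)$ by substituting $Q=P$ into $\pi_{y,t}(Q)=\sum_x P(y\mid x,t)\frac{P(t\mid x)}{Q(t\mid x)}P(x)$, which collapses the ratio $P(t\mid x)/P(t\mid x)=1$ and gives $\pi_{y,t}(P)=\sum_x P(y\mid x,t)\,P(x)$. Subtracting the two expressions and pulling out the common factor yields
\[
\pi_{y,t}(P)-\pi_{y,t}(Q)=\sum_x P(y\mid x,t)\,P(x)\left(1-\frac{P(t\mid x)}{Q(t\mid x)}\right).
\]
Writing $Z_{y,t}=\sum_{x'} P(y\mid x',t)\,P(x')$ for the normalizing constant of $R_{y,t}$, so that $R_{y,t}(x)=P(y\mid x,t)\,P(x)/Z_{y,t}$, the sum above is exactly $Z_{y,t}\,\mathbb{E}_{X\sim R_{y,t}}\!\left[1-P(t\mid X)/Q(t\mid X)\right]$. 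Taking absolute values, moving $|\cdot|$ inside the expectation by the triangle inequality, and observing that $|1-P(t\mid X)/Q(t\mid X)|=\ell_\chi(P,Q)^{1/2}$ directly from the definition of $\ell_\chi$, I obtain $|\pi_{y,t}(P)-\pi_{y,t}(Q)|\le Z_{y,t}\,\mathbb{E}_{X\sim R_{y,t}}[\ell_\chi(P,Q)^{1/2}]$.

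The last step, and essentially the only place requiring care, is discarding the prefactor $Z_{y,t}$. Because outcomes are treated as discrete here, $P(y\mid x',t)\le 1$ and hence $Z_{y,t}=\sum_{x'}P(y\mid x',t)\,P(x')\le \sum_{x'}P(x')=1$, so dropping the factor only weakens the inequality and the stated bound follows. I expect this normalization accounting---identifying $Z_{y,t}$ as the normalizer of $R_{y,t}$ and confirming $Z_{y,t}\le 1$---to be the main (and nearly the only) subtlety, since in a continuous-outcome formulation $P(y\mid x',t)$ becomes a density that need not be bounded by $1$, in which case the prefactor would have to be carried through explicitly rather than absorbed.
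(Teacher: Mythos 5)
Your proof is correct and follows essentially the same route as the paper's: expand $\pi_{y,t}(P)-\pi_{y,t}(Q)$, factor out the common measure $P(Y=y\mid X,T=t)P(X)$, apply the triangle inequality, and recognize the result as an expectation under $R_{y,t}$ of $\ell_\chi(P,Q)^{1/2}$. You are in fact slightly more careful than the paper, which writes the final identification $\sum_X P(Y=y\mid X,T=t)P(X)\,\ell_\chi(P,Q)^{1/2} = \mathbb{E}_{X\sim R_{y,t}}[\ell_\chi(P,Q)^{1/2}]$ as an equality and silently absorbs the normalizing constant $Z_{y,t}\le 1$ that you track explicitly (along with the valid caveat that this absorption needs revisiting for continuous outcomes).
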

% \sd{Are we bounding just $(\hat\tau_1(P) - \hat\tau_1(Q))^2$?  proof will work for $(\hat\tau_0(P) - \hat\tau_0(Q))^2$ too. Wouldn't the bound also depend on $\mathbb{E} (Y^2)$?}
\begin{proof}[Proof (Sketch)]
Note that
$
|\pi_{y,t}(P) - \pi_{y,t}(Q)|
 \leq \mathbb{E}_{X\sim R_{y,t}} \left| 1- \frac{P(T=t|X)}{Q(T=t|X)} \right| 
 \leq \mathbb{E}_{R_{y,t}} \ell_\chi(P_t,Q_t)^\frac{1}{2}
$
\end{proof}
See Appendix~\ref{apdx:error-bound} for the full proof. 
\begin{corollary}
\label{corollary}
Let $|y| \leq K$ for all $y\in\mathcal{Y}$.
The error of an IPTW estimator with propensity score model $Q$ is bounded by $2|\mathcal{Y}| K \max_{y,t} \mathbb{E}_{R_{y,t}} \ell_\chi(P_t,Q_t)^\frac{1}{2}.$ 
%\sd{We pull either $K$ or $\max_{y,t} \mathbb{E}_{R_{y,t}} \ell_\chi(P,Q)^\frac{1}{2}$ out of summation $\sum_{y, t}$. If we pulled both, shouldn't we multiply the bound by $|\mathcal{Y}||\mathcal{T}|=2|\mathcal{Y}|$?}
% \begin{equation}
% E  = (\hat\tau(P) - \hat\tau(Q))^2  \leq \mathbb{E}_{X, Y \sim R}[ \ell_\chi(P,Q)],    
% \end{equation}
\end{corollary}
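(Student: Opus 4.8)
The plan is to chain together the ATE error decomposition already established in the text with the per-term bound supplied by the preceding lemma. First I would recall the starting inequality derived just above the lemma, namely that the expected L1 ATE error satisfies
\begin{equation}
|Y[1] - Y[0] - (\hat Y_Q[1] - \hat Y_Q[0])| \leq \sum_t \sum_y |y| \cdot |\pi_{y,t}(P) - \pi_{y,t}(Q)|.
\end{equation}
This reduces the problem of bounding the global ATE error to controlling each individual term $|\pi_{y,t}(P) - \pi_{y,t}(Q)|$, weighted by $|y|$.

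Second, I would substitute the lemma into every summand, replacing $|\pi_{y,t}(P) - \pi_{y,t}(Q)|$ by its upper bound $\mathbb{E}_{X\sim R_{y,t}}[\ell_\chi(P,Q)^{1/2}]$. I would then apply the hypothesis $|y| \le K$ to extract a common factor of $K$ from every term, and relax each expectation to its worst case over indices, i.e.\ $\mathbb{E}_{R_{y,t}}\ell_\chi(P,Q)^{1/2} \le \max_{y,t}\mathbb{E}_{R_{y,t}}\ell_\chi(P,Q)^{1/2}$, so that a single $\chi$-squared quantity dominates all summands uniformly.

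Finally I would count the number of summands: the outer sum over $t$ contributes a factor $2$ (since $t\in\{0,1\}$), and the inner sum over $y$ contributes a factor $|\mathcal{Y}|$. Multiplying these against the common bound $K\max_{y,t}\mathbb{E}_{R_{y,t}}\ell_\chi(P,Q)^{1/2}$ yields exactly $2|\mathcal{Y}| K \max_{y,t}\mathbb{E}_{R_{y,t}}\ell_\chi(P,Q)^{1/2}$, as claimed.

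Since every step is either a direct substitution from the lemma or a crude relaxation, there is no genuine analytic obstacle here; the corollary is essentially bookkeeping layered on top of the lemma. The only points requiring care are the constants: one must attribute the factor $2$ to the binary-treatment sum and $|\mathcal{Y}|$ to the outcome sum, and must take the $\max$ \emph{jointly} over $(y,t)$ so that the worst-case term bounds all $2|\mathcal{Y}|$ summands at once. This also implicitly presumes $\mathcal{Y}$ is finite (or discretized) so that $|\mathcal{Y}|$ is well defined; if $\mathcal{Y}$ is continuous the identical argument goes through with the sum over $y$ replaced by an integral and $|\mathcal{Y}|$ by the corresponding measure of the outcome space.
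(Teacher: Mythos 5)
Your proposal is correct and follows essentially the same route as the paper: the appendix derivation (Appendix~\ref{apdx:error-bound}) chains the same decomposition $\sum_t \sum_y |y|\,|\pi_{y,t}(P)-\pi_{y,t}(Q)|$ with the lemma's per-term bound, and the corollary then follows by exactly your bookkeeping ($|y|\leq K$, taking the joint $\max$ over $(y,t)$, and counting the $2|\mathcal{Y}|$ summands). Your closing remark about finiteness of $\mathcal{Y}$ matches the paper's implicit assumption of a discrete outcome space.
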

%\sd{Reviewer points out: $\ell_\chi$(P, Y) is proper scoring rule where Y is observed data, $\ell_\chi$(P, Q) is not}
Note that $\ell_\chi$ is obtained from a proper scoring rule: it is small only if $Q$ correctly captures the probabilities in $P$. A model that accurately outputs treatment assignment, but that does not output correct probability will have a large $\ell_\chi$; conversely, when $Q=P$, the bound equals to zero and the IPTW estimator is perfectly accurate.
To the best of our knowledge, this is the first bound that relates the accuracy of an IPTW estimator directly to the quality of uncertainties of the probabilistic model $Q$. Corollary~\ref{apdx:corollary} in Appendix~\ref{apdx:doubly-robust} obtains a similar upper bound on error of the doubly robust AIPW estimator that is proportional to the chi-squared loss $l_X$. 

% \subsubsection{Calibration Reduces Variance of Inverse Probability Estimators}
\subsubsection{Calibration Reduces Variance of Estimators}

A common failure mode of IPTW estimators arises when the probabilities from a propensity scoring model $Q(T|X)$ are small or even equal to zero---division by $Q(T|X)$ then causes the IPTW estimator to take on very large values or be undefined. Furthermore, when $Q(T|X)$ is small, small changes in its value cause large changes in the IPTW estimator, which induces problematically high variance. % in the resulting estimates.
This failure mode also affects the doubly robust AIPW estimator, although it is more stable than the IPTW estimator. 

Here, we show that calibration can help mitigate this failure mode. If $Q$ is calibrated, then it cannot take on abnormally small values relative to $P$. Specifically, if $P(T=t|X)$ is larger than some $\delta >0$ such that $ \delta < 1/2 $, then any prediction from a calibrated estimate $Q$ of $P$ has to be larger than $\delta>0$ as well. In other words, division by small numbers cannot be a greater problem than in the true model. 

\begin{theorem}
\label{variance-reduction}
    Let $P$ be the data distribution, and suppose that $1 - \delta > P(T|X) > \delta$ for all $T, X$ and let $Q$ be a calibrated model relative to $P$. Then $1 - \delta > Q(T|X) > \delta$ for all $T, X$ as well.
\end{theorem}
\begin{proof}[Proof (Sketch)]
The proof is by contradiction. Suppose $Q(T=1|x) = q$ for some $x$ and $q < \delta$. Then because $Q$ is calibrated, of the times when we predict $q$, we have $P(T=1|Q(T=1|X) = q) = q <\delta$, which is impossible since $P(T=1|x) > \delta$ for every $x$. 

See Appendix~\ref{apdx:variance-reduction} for the full proof. 
\end{proof}

% Idea: argue that there are setting where a calibrated model improves the error bound over an uncalibrated model. For example, if the model separates classes sufficiently well.
% \subsubsection{Calibration Improves Error Bounds on Causal Effect Estimate} 
\subsubsection{Calibration Improves Error Bounds}

We show that calibration strictly improves our $\ell_\chi$ bound on the IPTW error.
\begin{theorem}
\label{thrm:error-bound-lx}
    Let $\ell_1$ be the expected bound on the error of an uncalibrated IPTW estimator $Q_1$ in Corollary~\ref{corollary}, and let $\ell_2$ be the bound for $Q_2$, the recalibrated version of $Q_1$ with $\ell_\chi^{1/2}$ as the choice of loss $L$ to train the recalibrator. Then as the size of the calibration set $n \to \infty$ we have $\ell_2 \leq \ell_1$ with equality iff $Q_1 = Q_2$.
\end{theorem}
\vspace{-4mm}
% \sd{The proof refers to contents from subsequent section; can move it to a different location}
\begin{proof}[Proof (Sketch)]
    The part of $\ell_1, \ell_2$ that depends on $Q \in \{Q_1, Q_2\}$ is $L(Q, T) = \mathbb{E}_X \mathbb{E}_{T|X} \ell_\chi(Q(T=1|X), T)^{1/2}$.  In Section \ref{sec:algorithms}, we show that when we perform recalibration, 
    % by minimizing the loss function $L(Q, T)$, 
    it follows that $L(Q_2, T) = L(R \circ Q_1, T) \leq L(Q_1, T) + o(n)$ for a recalibrator $R$. As $n \to \infty$, $R \to B, $ where B is a Bayes optimal recalibrator. If $Q_1 \neq Q_2$, then $L(Q_2, T) \neq L(Q_1, T)$ because $L$ is strictly proper. Conversely, when $Q_1=Q_2$ clearly $\ell_1 = \ell_2$. Hence, the claim follows.
\end{proof}
Please refer to Appendix~\ref{apdx:iptw-error-bound} for a complete proof. Theorem~\ref{apdx:thrm:error-bound-lx} in Appendix~\ref{apdx:doubly-robust} proves a similar result for the AIPW estimator when the outcome model is inaccurate.
% \subsubsection{Calibration Improves the Accuracy of Causal Effect Estimation}
\subsubsection{Calibration and Accurate Causal Effect Estimation}
\label{apdx:cal-improves-accuracy}
%Calibration by itself is not sufficient to correctly estimate treatment effects. %For example, consider defining $Q(T|X)$ as the marginal $P(T)$: this $Q$ is calibrated, but cannot accurately estimate treatment effects.
If the model $Q$ is accurate enough to discriminate between different treatments (as might be the case with a powerful neural network), then calibration can ensure accurate IPTW estimates. This is a strong condition in practice. Please refer to Appendix~\ref{apdx:cal-improves-accuracy1} for a detailed theoretical analysis.  

% Our next section defines another
Below, we also show that a post-hoc recalibrated model $Q'$ has vanishing regret $\ell(Q',Q)$ with respect to a base model $Q$ and a proper loss $\ell$ (including $\ell_\chi$ used in our calibration bound).

\section{ALGORITHMS FOR CALIBRATED PROPENSITY SCORING}\label{sec:algorithms}

% \vk{need to clean up because material is taken from icml22}

\subsection{A Framework for Calibrated Propensity Scoring}
% \begin{wrapfigure}{R}{0.52\textwidth}
\begin{figure}
% \vspace{-1cm}
\begin{minipage}{0.52\textwidth}
  \begin{algorithm}[H]
    \caption{Calibrated Propensity Scoring}
    \label{alg:cal_prop_scoring}
    % \begin{algorithmic}
      % \STATE 
      1. Split $\mathcal{D}$ into training set $\mathcal{D}'$ and calibration set $\mathcal{C}$ 
      % \sd{note: we use cross-val splits}\vk{no problem}
      \\
      2. Train a propensity score model $Q(T|X)$ on $\mathcal{D}'$ \\
      3. Train recalibrator $R$ over output of $Q$ on $\mathcal{C}$ \\
      4. Apply IPW with $R \circ Q$ as prop.~score model
    % \end{algorithmic}
  \end{algorithm}
\end{minipage}
\vspace{-0.5 cm}
\end{figure}

Next, we propose algorithms that produce calibrated propensity scoring models. Our approach is outlined in Algorithm \ref{alg:cal_prop_scoring}; it differs from standard propensity scoring methods by the addition of a post-hoc recalibration step (step \#3) after training the model $Q$. 
% which recalibrates
% augmenting algorithms have the structure outlined in : they follow the structure
% We first introduce a framework for calibrated propensity scoring---a three-step process that relies on a calibration subroutine (Algorithm \ref{alg:cal_prop_scoring}).

The recalibration step 
in Algorithm \ref{alg:cal_prop_scoring} implements a post-hoc recalibration procedure \citep{Platt99probabilisticoutputs,kuleshov2018accurate} and is outlined in Algorithm \ref{alg:recalibrate}.  
The key idea is to learn an auxiliary model $R : [0,1] \to [0,1]$ such that the joint model $R \circ Q$ is calibrated. 
Below, we argue that if $R$ can approximate the density $P(T=1|Q(T|X)=p)$, $R \circ Q$ will be calibrated \cite{kuleshov2018accurate,pmlr-v162-kuleshov22a}.

Learning $R$ that approximates $P(T=1|Q(T|X)=p)$ requires specifying (1) a model class for $R$ and (2) a learning objective $\ell$.
One possible model class for $R$ are non-parametric kernel density estimators over $[0,1]$; their main advantage is that they can provably learn the one-dimensional conditional density $P(T=1|Q(T|X)=p)$. Examples of such algorithms are RBF kernel density estimation or isotonic regression.
Alternatively, one may use a family of  parametric models for $R$: e.g., logistic regression, neural networks.
Such parametric recalibrators can be implemented easily within deep learning frameworks and work well in practice, as we later demonstrate empirically.

Our learning objective for $R$ can be any proper scoring rule
% A natural choice of learning objective between $R$ and $P$ is a proper scoring rule 
such as the L2 loss, the log-loss, or the Chi-squared loss.
Optimizing it is a standard supervised learning problem. 
% \begin{wrapfigure}{R}{0.5\textwidth}
\begin{figure}
% \vspace{-0.5cm}
\begin{minipage}{0.5\textwidth}
  \begin{algorithm}[H]
    \caption{Recalibration Step}
    \label{alg:recalibrate}
      \textbf{Input:}
    Pre-trained model $Q : \mathcal{X} \to [0,1]$, recalibrator $R : [0,1] \to [0,1]$, calibration set $\mathcal{C}$\\
  \textbf{Output:}
    Recalibrated model $R \circ Q : \mathcal{X} \to [0,1]$.
   \begin{enumerate}
%    \item Fit the base model on $\mathcal{D}$: $\min_{H} \sum_{(x, y) \in \mathcal{D}} L(H(x), y)$
    \item Create a recalibrator training set: \\$\mathcal{S} = \{ (Q(x), y) \mid x, y \in \mathcal{C}\}$
    \item Fit the recalibration model $R$ on $\mathcal{S}$: \\
    $\min_{R} \sum_{(p,y) \in \mathcal{S}} L\left(R(p), y\right)$ 
   \end{enumerate}
  \end{algorithm}
\end{minipage}
\vspace{-0.5cm}
\end{figure}
% \begin{itemize}
%     \item Provide an algorithm for propensity scoring (as a separate float) that contains recalibration as subroutine
%     \item Then provide an algorithm (as a separate float) for recalibration, where the recalibrator is a free parameter.
%     \item Discuss different choices of recalibrator.
%     \begin{enumerate}
%         \item  parameteric: LR, NB, NN. Mention pros and cons.
%         \item non-parametric: KNN, IR
%     \end{enumerate}
% \end{itemize}

\subsection{Ensuring Calibration in Propensity Scoring Models}

Next, we seek to show that Algorithms \ref{alg:cal_prop_scoring} and \ref{alg:recalibrate} provably yield a calibrated model $R \circ Q$. This shows that the desirable property of calibration can be maintained in practice.

% separate this out into a proof section and an interpretation (calibration is a free lunch) section?
% \subsection{Calibration is a Free Lunch}

% Next, we prove that Algorithm \ref{alg:recal} indeed produces models that perform calibrated risk minimization. 

% We start with some notation. 
% Let $L$ be a proper loss and let $L_c$ be its associated calibration loss derived from the calibration-reliability decomposition of $L$. 
% Let $\mathcal{L}^{(H)}, \mathcal{L}^{(H)}_c$ denote the expected values of $L, L_c$ for a classifier $H$:
% \begin{align*}
%     \mathcal{L}^{(H)} = \mathbb{E}_{X,Y}[L(H(X),Y]
%     & & \mathcal{L}_C^{(H)} = \mathbb{E}_{X,Y}[L_C(H(X),Q],
% \end{align*}
% where $Q(y) := \mathbb{P}(Y=y \mid H(X))$ is the conditional distribution of $Y$ given a forecast of $H(X)$. Let $\mathcal{L}^*$ denote the optimal value of $\mathcal{L}^{(H)}$.
\paragraph{Notation}
% We start with some notation. 
We have a calibration dataset $\mathcal{C}$ of size $m$ sampled from $P$ and we train a recalibrator $R : [0,1] \to [0,1]$
over the outputs of a base model $Q$ to minimize a proper loss $L$. We denote the Bayes-optimal recalibrator by $B := P(T=1\mid Q(X))$;
the probability of $T=1$ conditioned on the forecast $(R \circ Q)(X)$ is $S := P(T=1 \mid (R \circ Q)(X))$.
% We are interested in the expectations of 
% various losses over $X,T$; 
To simplify notation, we omit the variable $X$, 
when taking expectations over $X,T$,
e.g. $\mathbb{E}[L(R \circ Q,T)] = \mathbb{E}[L(R(Q(X)),T)]$.

Our first claim is that if we can perform density estimation, then we can ensure calibration.
We first formally define the task of density estimation.
\begin{task}[Density Estimation]
\label{ass:density}
The model $R$ approximates the density $B := P(T=t\mid Q(X))$. The expected proper loss of $R$ tends to that of $B$ as $m \to \infty$ such that w.h.p.:
% The recalibrator $R$ learns a Bayes-optimal density by minimizing the proper loss $L$. 
% Given a dataset of size $T$, and a proper loss $L$ the recalibrator $R$ learns to approximate 
% The Bayes-optimal recalibrator $B(Y=y\mid H(X))$ such that w.h.p.~we have
\begin{align*}
\mathbb{E}[L(B\circ Q,T)] 
\leq \mathbb{E}[L(R \circ Q,T)] < \mathbb{E}[L(B \circ Q,T)] + \delta
    % L^* \leq \mathbb{E}[L((R\circ H)(X),Y)] \leq L^* + \delta
\end{align*}
% \mathbb{E}[L(R(H(X)),Y)]
where $\delta> 0$, $\delta = o(m^{-k}), k>0$ is a bound that decreases with $m$.% and $\mathbb{E}[L(B\circ H,Y)]$ is the irreducible loss. 
% \sd{should this be $o(m^{-k})$ where k is positive?}
\end{task}

% Although parametric neural network density estimators are not necessarily guaranteed to solve this task (due to non-convexity), we demonstrate empirically that in practice they do, and also one can evaluate their performance on a held out dataset.
Note that non-parametric kernel density estimation is formally guaranteed to solve one-dimensional density estimation given enough data.
%For some recalibrators, e.g., neural nets, it may not provably hold (e.g., because of non-convexity). However, neural networks are effective density estimators in practice, and we can quantify whether they estimate density well on a hold-out set.
%This assumption provably holds for many non-parametric density estimation methods.

\begin{fact}[\citet{wasserman2004all}]
When $R$ implements kernel density estimation and $L$ is the log-loss, Task~\ref{ass:density} is solved with $\delta=o(1/m^{2/3})$.
\end{fact}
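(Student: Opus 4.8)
The plan is to read Task~\ref{ass:density} in the special case $L=$ log-loss, where ``solving'' the task means showing that the excess expected loss (the regret) of the recalibrator $R$ over the Bayes-optimal $B$ decays polynomially in $m$ at the stated rate. The first move is to rewrite this regret in a tractable form. For the log-loss and a binary label $T$, the per-point excess risk at a value $p=Q(X)$ is exactly the KL divergence between the Bernoulli with true parameter $B(p)=P(T=1\mid Q(X)=p)$ and the Bernoulli with predicted parameter $R(p)$, so that
\[
\mathbb{E}[L(R\circ Q,T)]-\mathbb{E}[L(B\circ Q,T)] = \mathbb{E}_{p}\big[\mathrm{KL}(\mathrm{Bern}(B(p)) \,\|\, \mathrm{Bern}(R(p)))\big].
\]
The entire task therefore reduces to controlling this expected KL divergence, and the strategy is to (i) upper-bound it by a squared estimation error and (ii) substitute the known one-dimensional nonparametric rate for that error.

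Second, I would convert the KL term into an $L^2$ estimation error. The overlap part of strong ignorability guarantees $\delta<P(T\mid X)<1-\delta$, and since $B(p)$ is an average of such conditional probabilities it inherits the same two-sided bound. Provided $R$ is likewise bounded away from $0$ and $1$ (which can be enforced by a harmless clipping to $[\delta,1-\delta]$), a second-order Taylor expansion of the KL divergence around $R(p)=B(p)$ yields $\mathrm{KL}(\mathrm{Bern}(B(p)) \,\|\, \mathrm{Bern}(R(p))) \le C_\delta\,(B(p)-R(p))^2$ with a constant $C_\delta$ depending only on $\delta$. Taking expectations, the regret is at most $C_\delta\,\mathbb{E}_p[(B(p)-R(p))^2]$, i.e.\ a weighted integrated squared error of estimating the one-dimensional function $B$.

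Third, I would invoke Wasserman's kernel-density guarantee. The key structural point is that $R$ acts on the scalar $Q(X)\in[0,1]$, so estimating $B$ is a genuinely one-dimensional problem irrespective of the covariate dimension $d$. Writing $B(p)=\pi_1 g_1(p)/(\pi_1 g_1(p)+\pi_0 g_0(p))$, where $g_t$ is the density of $Q(X)$ conditioned on $T=t$ and $\pi_t=P(T=t)$, one estimates each $g_t$ by a one-dimensional KDE on the corresponding subsample and $\pi_t$ by empirical frequencies. The standard result of \citet{wasserman2004all} gives a mean integrated squared error of order $m^{-4/5}$ under the twice-differentiability assumption of that theorem, which is $o(m^{-2/3})$ (the boundary case of mere Lipschitz smoothness gives the slower order $m^{-2/3}$). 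Because the denominator is bounded below by overlap, this rate passes through the ratio to give $\mathbb{E}_p[(B(p)-R(p))^2]=o(m^{-2/3})$, and combined with the previous step this bounds the regret and establishes the claimed $\delta=o(m^{-2/3})$ with high probability.

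The main obstacle is the passage from a pure density-estimation statement to a guarantee on the conditional probability (equivalently, regression function) $B$: Wasserman's rate is for estimating a density, whereas $B$ is a \emph{ratio} of densities, so I must verify that the estimation error propagates through the ratio with controlled constants, which is precisely where the overlap lower bound on the denominator is used. A secondary technical point is the boundary behaviour in the second step: the KL-to-squared-error bound degrades as $B$ or $R$ approaches $0$ or $1$, so the argument genuinely relies on the overlap assumption (and on clipping $R$) to keep $C_\delta$ finite, without which the regret need not be controlled by the $L^2$ error at all.
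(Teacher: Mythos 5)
The paper never proves this Fact: it is imported wholesale from \citet{wasserman2004all} as a textbook result on one-dimensional kernel density estimation, so there is no internal proof to compare against and your attempt must be judged as a reconstruction of the standard argument. Your skeleton is the right one and matches how the surrounding theory (Theorem 4.1 and its proof in Appendix G) uses the Fact: for the log-loss the regret $\mathbb{E}[L(R\circ Q,T)]-\mathbb{E}[L(B\circ Q,T)]$ is exactly $\mathbb{E}_p[\mathrm{KL}(\mathrm{Bern}(B(p))\,\|\,\mathrm{Bern}(R(p)))]$; overlap gives $B(p)\in(\delta,1-\delta)$, and after clipping $R$ the KL is dominated by $C_\delta\,(B(p)-R(p))^2$; and the one-dimensional KDE rate $O(m^{-4/5})$ under twice differentiability is indeed $o(m^{-2/3})$, which also correctly explains why the Fact states the conservative exponent $2/3$ rather than $4/5$.

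There is, however, one genuine flaw in your third step: the claim that ``the denominator is bounded below by overlap'' is false. Overlap bounds the conditional probability $P(T=1\mid X)$, hence the \emph{ratio} $B(p)=\pi_1 g_1(p)/f(p)\in(\delta,1-\delta)$ where $f=\pi_1 g_1+\pi_0 g_0$ is the marginal density of the score $Q(X)$; it says nothing about a lower bound on $f$ itself, which can be arbitrarily small or zero on parts of $[0,1]$ (typically is, since $Q$ concentrates its outputs). Pointwise, the plug-in ratio error scales like $(|\hat g_1-g_1|+|\hat g_0-g_0|)/f(p)$, so the weighted risk $\mathbb{E}_p[(B-R)^2]=\int (B-R)^2 f$ is controlled by terms of the form $\int(\Delta_1^2+\Delta_0^2)/f$, and a MISE bound $\int \Delta_t^2=O(m^{-4/5})$ does not tame this without a density lower bound --- the weight $f$ in the risk does not cancel the $1/f^2$ from the ratio. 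To close the gap you must either add the assumption that $f$ is bounded below on its support (together with the smoothness of $g_t$ that Wasserman's theorem already requires), or bypass the ratio construction entirely and estimate $B$ by one-dimensional kernel (Nadaraya--Watson) or local-polynomial regression of $T$ on $Q(X)$, for which the regression MISE rate $O(m^{-4/5})$ holds directly under analogous design-density and smoothness conditions. A secondary, smaller gap: Task 4.1 asks for a bound holding w.h.p., while MISE is an in-expectation statement, so a Markov or concentration step is needed to convert --- though the paper itself is equally loose on this point.
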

% such as kernel density estimation \citep{wasserman2006all}. It may even hold when $R$ is a sufficiently expressive neural network, although we cannot prove it.
% Note that rate of $\delta$ is good when dimensionality is low.

%We now prove two key lemmas. We show that Algorithm \ref{alg:recal} outputs calibrated forecasts without reducing the performance of the base model, as measured by regret relative to loss $L$.

We now show that when we can solve Task~\ref{ass:density}, our approach yields models that are asymptotically calibrated in the sense that their calibration error tends to zero as $m \to \infty$.

\begin{theorem}%[Calibration]
\label{lem:calibration}
The model $R \circ Q$ is asymptotically calibrated and
the calibration error $\mathbb{E}[L_c(R \circ Q,S)] < \delta$ for $\delta = o(m^{-k}), k>0$ w.h.p.
% $\mathcal{L}_C \leq \epsilon$ when $T \to \infty$.
%\vk{todo: copy statement and proof to appendix}
\end{theorem}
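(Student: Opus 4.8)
The plan is to combine the calibration–sharpness decomposition of a proper loss (stated in Section~\ref{sec:background}) with the near-optimality guarantee of Task~\ref{ass:density}. For any forecaster $F$ and proper loss $L$, that decomposition can be written as
$$\mathbb{E}[L(F,T)] = \mathbb{E}[L_c(F,S_F)] + \mathbb{E}[L(S_F,T)],$$
where $S_F := P(T=1\mid F(X))$ is the calibrated forecast induced by $F$, the first term $\mathbb{E}[L_c(F,S_F)]\ge 0$ is the calibration error (the Bregman divergence between $F$ and $S_F$ generated by $L$, which collapses sharpness and the irreducible term), and the second is the refinement. Specializing to $F = R\circ Q$, for which $S_{R\circ Q} = S$, gives $\mathbb{E}[L(R\circ Q,T)] = \mathbb{E}[L_c(R\circ Q,S)] + \mathbb{E}[L(S,T)]$.

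First I would observe that the Bayes-optimal recalibrator $B\circ Q = P(T=1\mid Q(X))$ is itself perfectly calibrated: by the tower property $P(T=1\mid (B\circ Q)(X)) = B\circ Q$, so its calibration term vanishes and its proper loss equals its own refinement. Substituting the decomposition of $R\circ Q$ into the upper bound from Task~\ref{ass:density}, $\mathbb{E}[L(R\circ Q,T)] < \mathbb{E}[L(B\circ Q,T)]+\delta$, and rearranging yields
$$\mathbb{E}[L_c(R\circ Q,S)] < \big(\mathbb{E}[L(B\circ Q,T)] - \mathbb{E}[L(S,T)]\big) + \delta.$$
It therefore suffices to show the parenthesized difference is nonpositive, i.e. that $\mathbb{E}[L(S,T)]\ge\mathbb{E}[L(B\circ Q,T)]$.

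The key step—and the part I expect to be the main obstacle—is this refinement inequality. I would establish it by a data-processing argument: since $R$ is a deterministic map, $S = P(T=1\mid (R\circ Q)(X))$ is measurable with respect to $\sigma((R\circ Q)(X))\subseteq \sigma(Q(X))$, so $S$ is the Bayes-optimal conditional-mean forecast against the coarser $\sigma$-algebra while $B\circ Q = P(T=1\mid Q(X))$ is Bayes-optimal against the finer one. Because the expected proper loss of a conditional-mean forecast can only decrease as the conditioning $\sigma$-algebra is refined (the Bayes risk of a proper scoring rule is monotone in information), we obtain $\mathbb{E}[L(S,T)]\ge\mathbb{E}[L(B\circ Q,T)]$. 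The care required here is to verify this monotonicity for the specific proper loss $L$ used in recalibration, which I would derive from Jensen's inequality applied to the convex entropy generating $L$ together with the tower property; the case where $R$ is not injective (so $S$ and $B\circ Q$ genuinely differ) is what makes the inequality meaningful rather than an identity.

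Combining the two steps gives $\mathbb{E}[L_c(R\circ Q,S)] < \delta$ with $\delta = o(m^{-k})$. This holds with high probability because it is inherited directly from the high-probability guarantee of Task~\ref{ass:density}, every other step being deterministic. Since $\delta\to 0$ as $m\to\infty$, the calibration error vanishes and $R\circ Q$ is asymptotically calibrated, as claimed.
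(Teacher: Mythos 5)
Your proof is correct, and it follows the same overall architecture as the paper's proof---decompose the proper loss of $R \circ Q$ into a calibration term plus a remainder, sandwich the loss using Task~\ref{ass:density}, and cancel the remainder against $\mathbb{E}[L(B \circ Q,T)]$---but you execute the crucial cancellation differently. The paper invokes the three-way decomposition of \citet{kullflach2015novel}, $\mathbb{E}[L(R \circ Q,T)] = \mathbb{E}[L_c(R \circ Q,S)] + \mathbb{E}[L_g(S, B\circ Q)] + \mathbb{E}[L(B \circ Q,T)]$, and discards the grouping-loss term $\mathbb{E}[L_g(S, B\circ Q)] \geq 0$ as a cited fact. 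You instead use the two-term calibration--refinement decomposition $\mathbb{E}[L(R\circ Q,T)] = \mathbb{E}[L_c(R\circ Q,S)] + \mathbb{E}[L(S,T)]$ and prove the needed inequality $\mathbb{E}[L(S,T)] \geq \mathbb{E}[L(B\circ Q,T)]$ yourself, from $\sigma\big((R\circ Q)(X)\big) \subseteq \sigma\big(Q(X)\big)$, the tower property, and Jensen's inequality applied to the Bayes-risk function of the proper loss. These routes are mathematically equivalent: the paper's grouping loss is precisely the gap $\mathbb{E}[L(S,T)] - \mathbb{E}[L(B\circ Q,T)]$ whose nonnegativity you establish, so your data-processing argument is a self-contained proof of the fact the paper imports from the literature. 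Your version buys independence from the Kull--Flach machinery---all it needs is that the Bayes risk $H(s) = \inf_q \, \mathbb{E}_{T \sim \mathrm{Bern}(s)}[L(q,T)]$ is concave, being an infimum of affine functions of $s$---while the paper's version buys brevity and an explicit name for the discarded term. Two wording slips, neither of which affects correctness: the parenthetical ``collapses sharpness and the irreducible term'' belongs to the refinement term $\mathbb{E}[L(S,T)]$, not to the calibration term; and the entropy generating a proper \emph{loss} is concave, not convex (your Jensen step requires exactly this concavity, in the direction you use it).
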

See Appendix~\ref{apdx:asymptotic-calibration} for the full proof. 

\subsection{No-regret Calibration}

% Show that calibration does not make the chi-squared loss worse, which doesn't make the bound on the estimator error worse.

Next, we show that Algorithms \ref{alg:cal_prop_scoring} and \ref{alg:recalibrate} produce a model $R \circ Q$ that is asymptotically just as good as the original $Q$
as measured by the proper loss $L$.
% In other words, our calibration procedure can never make things worse.

\begin{theorem}
\label{lem:loss}
The recalibrated model has asymptotically vanishing regret relative to the base model: $\mathbb{E}[L(R \circ Q,T)] \leq \mathbb{E}[L(Q,T)] + \delta,$ where $\delta >0, \delta=o(m)$. % is a bound that decreases with $m$. 
% \vk{todo: proof to appendix}
\end{theorem}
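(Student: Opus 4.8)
The plan is to prove the regret bound in two stages: first relate the loss of the recalibrated model to the loss of the Bayes-optimal recalibrator using the density-estimation guarantee, and then show that the Bayes-optimal recalibrator is never worse than the base model under any proper loss. For the first stage, I would simply invoke Task~\ref{ass:density}: because $R$ solves one-dimensional density estimation, its expected proper loss lies within $\delta$ of that of $B := P(T=1\mid Q(X))$ with high probability, giving
\[
\mathbb{E}[L(R\circ Q,T)] \;\le\; \mathbb{E}[L(B\circ Q,T)] + \delta .
\]
This absorbs the $\delta$ term and reduces the theorem to the single deterministic inequality $\mathbb{E}[L(B\circ Q,T)] \le \mathbb{E}[L(Q,T)]$.

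The heart of the argument is this latter inequality, which I would establish through the defining property of a proper scoring rule: for a proper loss $L$, the map $q \mapsto \mathbb{E}_{T\sim p}[L(q,T)]$ is minimized at $q=p$. Applied conditionally on the random variable $Q(X)$, the conditional law of $T$ given $Q(X)$ has mean exactly $B(Q(X))$, so for any measurable $g:[0,1]\to[0,1]$,
\[
\mathbb{E}\big[L(g(Q(X)),T)\,\big|\,Q(X)\big] \;\ge\; \mathbb{E}\big[L(B(Q(X)),T)\,\big|\,Q(X)\big].
\]
Taking total expectations yields $\mathbb{E}[L(B\circ Q,T)] \le \mathbb{E}[L(g\circ Q,T)]$ for every such $g$. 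The observation that closes the proof is that the base model corresponds to the identity recalibrator: choosing $g=\mathrm{id}$ gives $g(Q(X))=Q(X)$, so $\mathbb{E}[L(B\circ Q,T)] \le \mathbb{E}[L(Q,T)]$. Chaining this with the first-stage bound delivers the claim.

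I expect the main obstacle to be the rigorous justification of the conditional step: I must verify that properness transfers from its unconditional form to the level of the $\sigma$-algebra generated by $Q(X)$, and that $B(Q(X))$ is genuinely the conditional mean of $T$ rather than merely a calibrated surrogate. A subtle prerequisite is that $Q(X)$ be a valid element of $\Delta_{\{0,1\}}$ so that the identity recalibrator is admissible in the comparison; this holds because $Q$ outputs a probability in $[0,1]$. Finally, I would note that the regret vanishes because $\delta=o(m^{-k})$ is inherited from Task~\ref{ass:density}, so the recalibrated model is asymptotically no worse than the base model $Q$.
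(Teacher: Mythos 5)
Your proposal follows essentially the same two-step route as the paper's proof: first use Task~\ref{ass:density} to get $\mathbb{E}[L(R \circ Q,T)] \leq \mathbb{E}[L(B \circ Q,T)] + \delta$, then conclude via the inequality $\mathbb{E}[L(B \circ Q,T)] \leq \mathbb{E}[L(Q,T)]$, i.e.\ the Bayes-optimal recalibrator $B$ is no worse than the identity recalibrator. The only difference is that the paper discharges this second inequality by citation to prior work, whereas you derive it directly from properness of $L$ applied conditionally on $Q(X)$ (using that the conditional law of the binary $T$ given $Q(X)$ is Bernoulli with parameter $B(Q(X))$), which is a correct and more self-contained rendering of the same argument.
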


\begin{proof}[Proof (Sketch)]
%The claim holds by empirical risk minimization. Since $R \circ H$ minimizes $L$, but is more expressive than $H$ and $R$ can represent the identity map (by Assumption \ref{ass:density}).
Solving Task \ref{ass:density} implies $\mathbb{E}[L(R \circ Q,T)] \leq \mathbb{E}[L(B \circ Q,T)] + \delta \leq \mathbb{E}[L(Q,T)] + \delta$; the second inequality holds because a Bayes-optimal $B$ has lower loss than an identity mapping.
\end{proof}

 See Appendix~\ref{apdx:no-regret} for the full proof. Thus, given enough data, we are guaranteed to produce calibrated forecasts and preserve base model performance as measured by $L$ (including $L_\chi$ used in our calibration bound).
% In particular, if $L$ is chosen to be the chi-squared loss, we will not worsen our bound on the calibration error.

% Note that our analysis provides {\em finite-sample} and not only asymptotic bounds on the regret and calibration error---the bounds are stated in terms of $\delta$, which is $o(m)$. The bound $\delta$ on the calibration error directly depends on the finite-sample bound on the generalization error of the algorithm used as the recalibrator.

% \subsection{Doubly Robust Estimation and Calibrated Propensities}
% Due to sensitivity of the IPTW estimator toward mis-specification of propensity score model, ~\citet{Robins2000-kz} propose doubly robust Augmented Inverse Propensity Weighted (AIPW) estimator for ATE. The AIPW estimate is asymptotically unbiased when either the treatment assignment (propensity) model or the outcome model is well-specified, but this assumption is rarely satisfied in real world. Following the ideas 
% % We define the outcome model as $f(X=x, T=t)$ to approximate the outcome $Y[X=x, T=t]$ as defined in Section~\ref{sec:background}.

% % With this, we define the AIPW estimator as $\hat{\tau} = \frac{1}{n}\sum_{i=1}^n \Bigg[f(X_i, T=1) - f(X_i, T=0) + \frac{T_i (Y_i-f(X_i, T=1))}{e(X_i)} - \frac{(1-T_i)(Y_i-f(X_i, T=0))}{1-e(X_i)}\Bigg].$

% \sd{to write}

\section{EMPIRICAL EVALUATION}\label{sec:empirical}

We perform experiments on several observational studies to evaluate calibrated propensity score models. We cover different types of treatment assignment mechanisms, base propensity score models, and varying dimensionality of observed covariates. 
% \begin{wrapfigure}{r}{4 cm}
% \centering
% \vspace{1 cm}
% \includegraphics[scale=0.18]{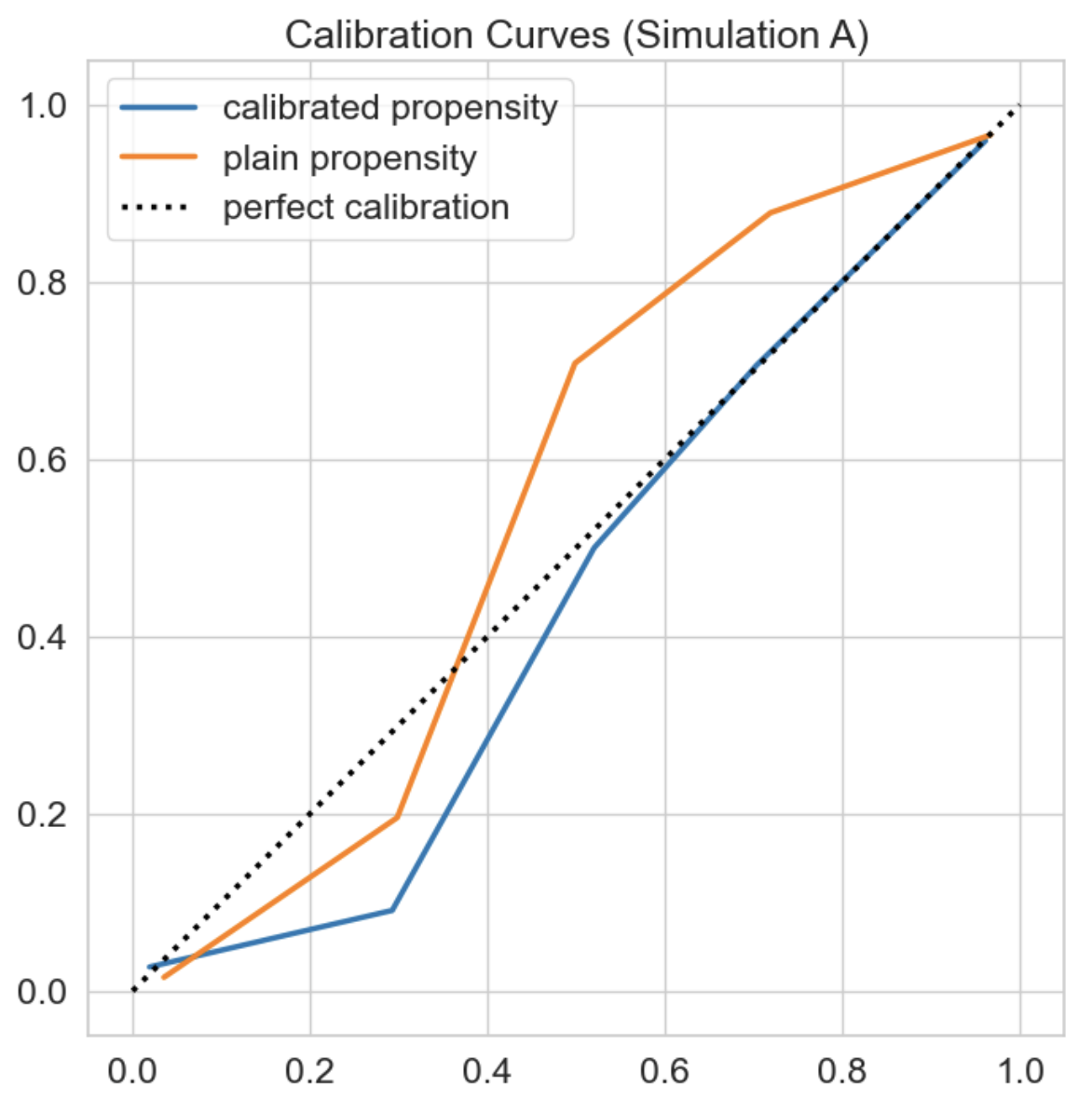}
% % \includegraphics[scale=0.18]{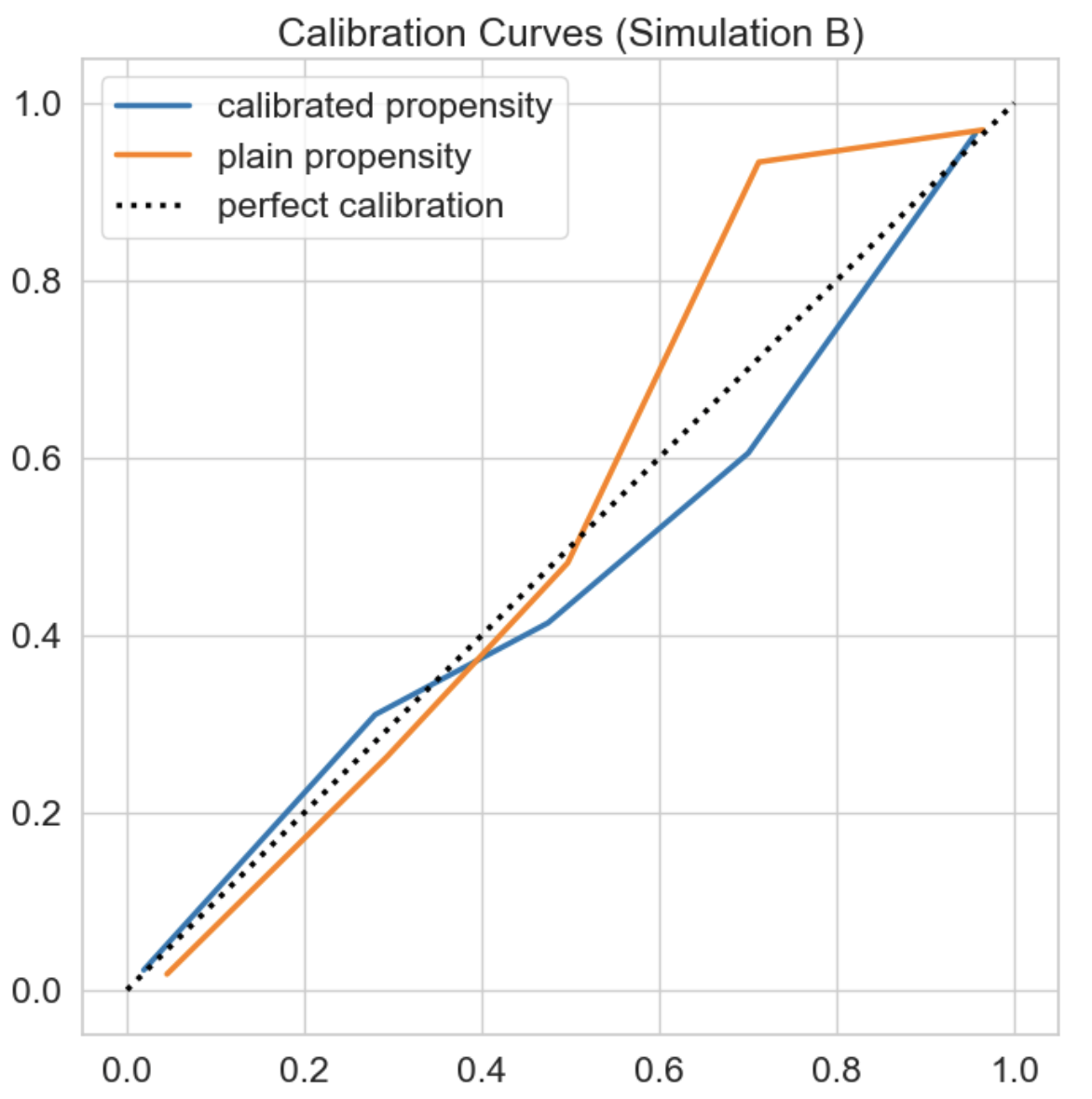}
% % \includegraphics[scale=0.18]{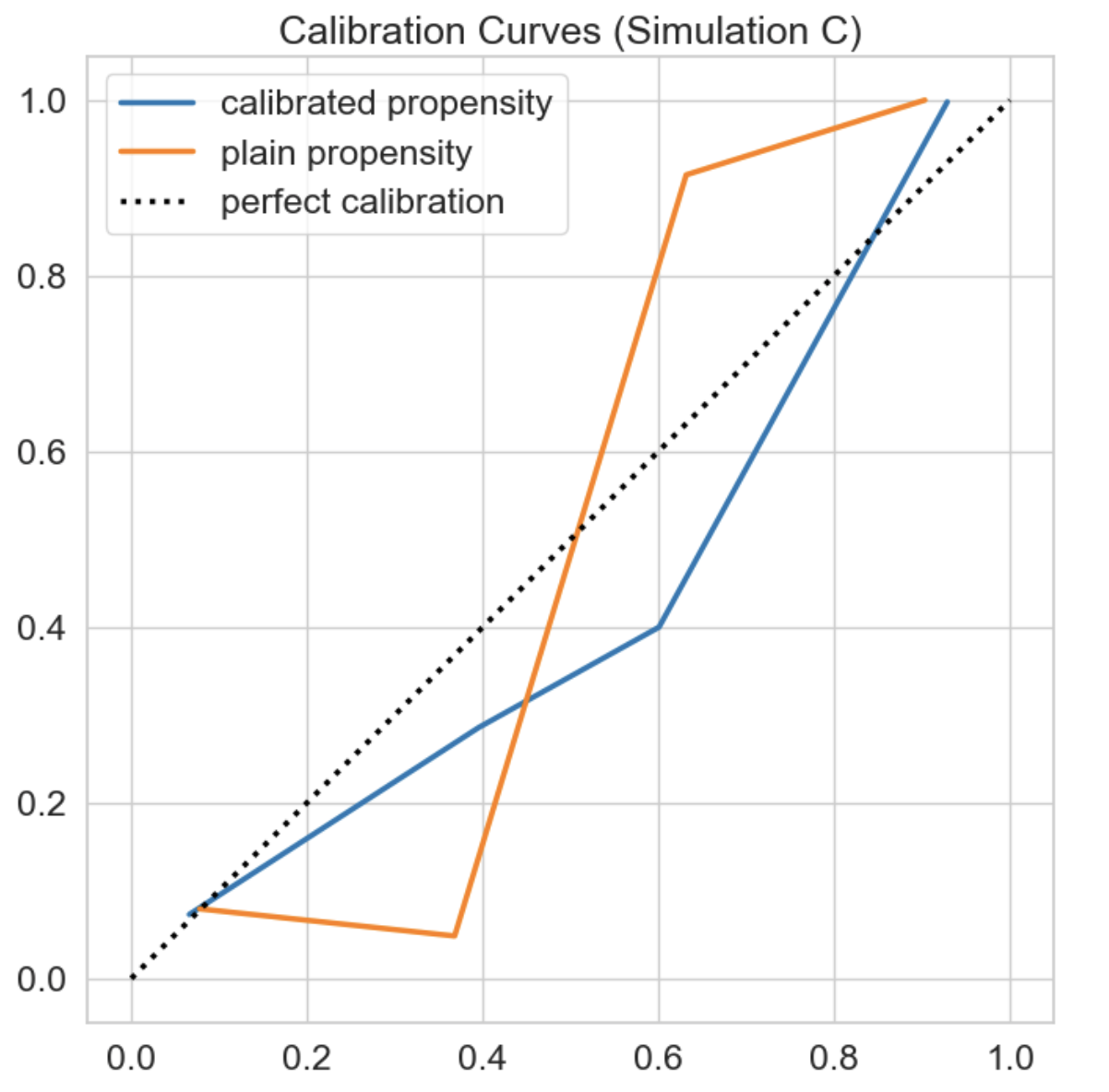}
% % \includegraphics[scale=0.18]{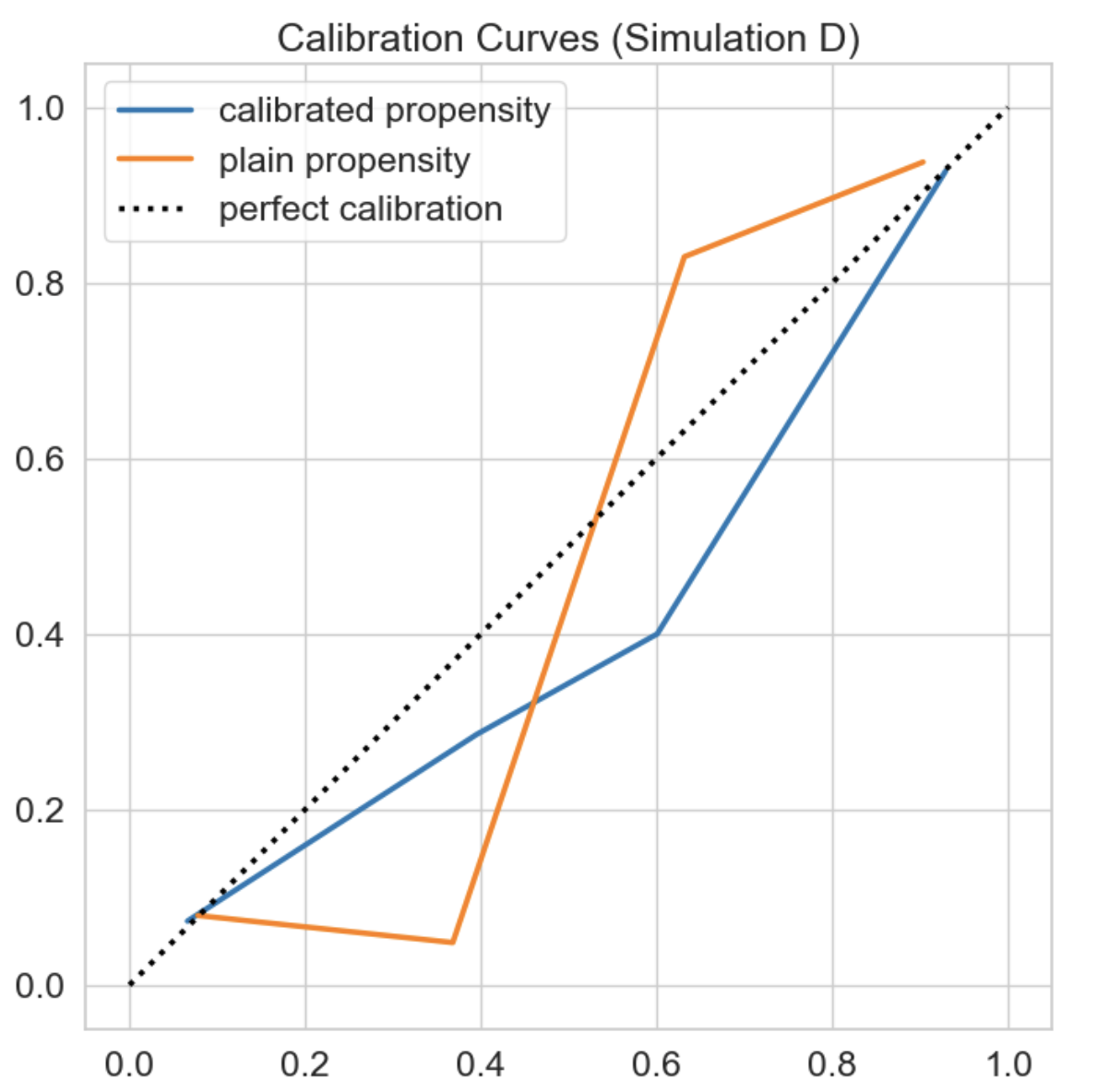}

% \caption{Calibration of propensity score model for Drug Effectiveness Study.\sd{Maybe overlap with propensity histograms}} 
% \vspace{-1cm}
% \label{fig:calib_curve_simA}
% \end{wrapfigure}
% \begin{wrapfigure}{r}{6 cm}
\begin{figure}
\centering
% \vspace{-0.3 cm}
\includegraphics[scale=0.25]{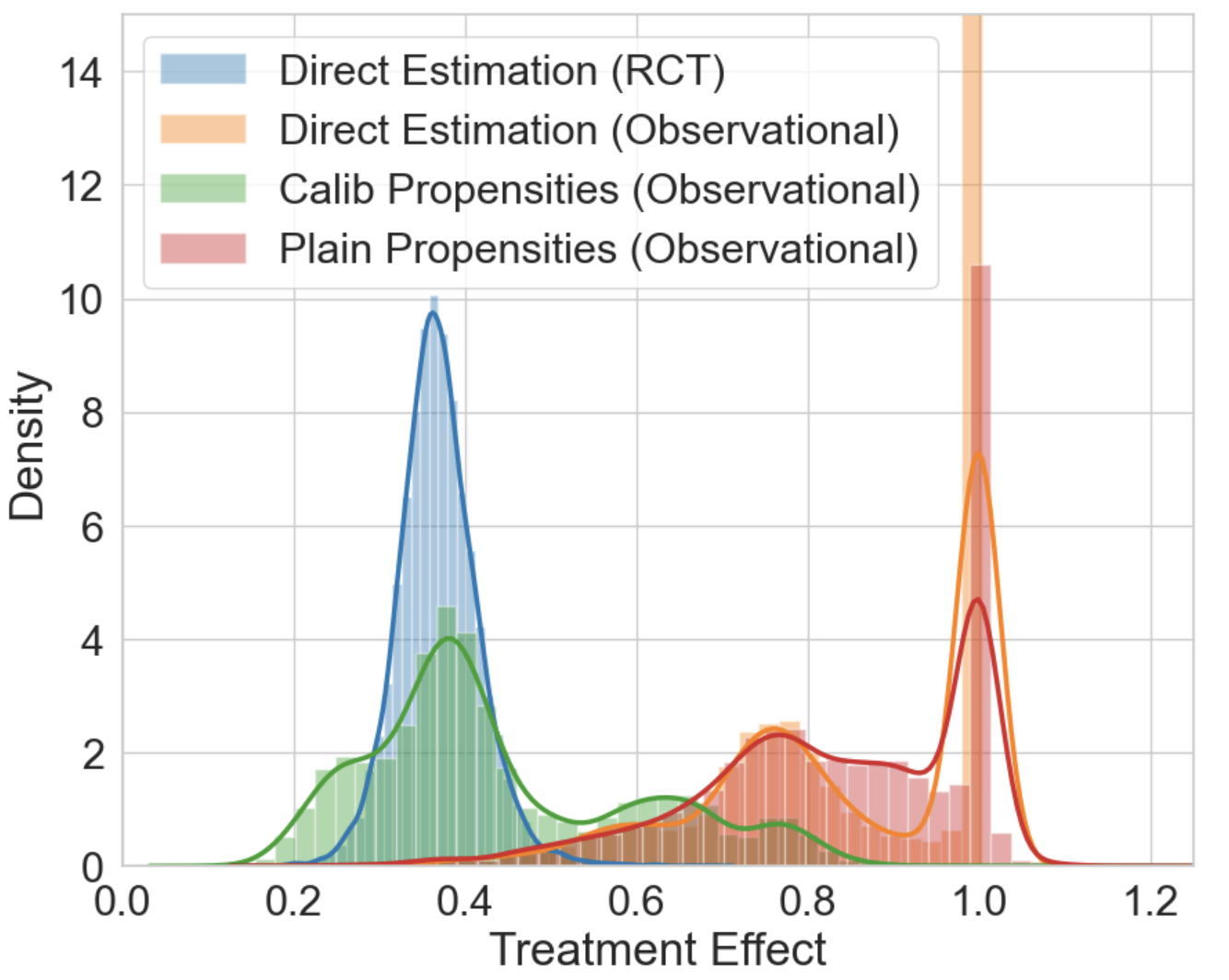}
\caption{Recalibrating Propensity Score Model Reduces the Bias in Estimating Treatment Effect from Observational Data.} %The distribution of treatment effect estimates is shown in blue when using RCT and in yellow when using observational data directly. Using recalibrated propensity scores shifts the distribution of treatment effect estimates closest to the blue region.
 
\vspace{-0.5cm}
\label{fig:compare_toy_hist}
\end{figure}

\paragraph{\textbf{Setup.}} 
% We simulate datasets containing covariates $X$, binary treatment $T$ and scalar outcome $Y$. We train a propensity score model using the generated dataset and utilize it to estimate the average treatment effect (ATE). 
We use  the Inverse-Propensity Treatment Weight (IPTW) and Augmented Inverse Propensity Weight (AIPW) estimators in our experiments. We compare the estimates obtained through calibrated propensities with several baselines including estimators based on uncalibrated propensity scores. We use sigmoid or isotonic regression as the recalibrator and utilize cross-validation splits to generate the calibration dataset (Appendix~\ref{apdx:additional-details-on-cal-algorithm}). We measure the performance in terms of the absolute error in estimating ATE as $\epsilon_{ATE} = |\hat{\tau} - \tau|,$ where $\tau$ is the true treatment effect and $\hat{\tau}$ is our estimated treatment effect. %When $\tau$ is a vector, we use the $l_2$ norm to report $\epsilon_{ATE} = \|\hat{\tau} - \tau\|_2$.

\paragraph{\textbf{Analysis of Calibration}.} We evaluate the calibration of the propensity score model using expected calibration error (ECE), defined as 
${\mathbb{E}_{p \sim Q(T=1|X) }[|P(T=1|Q(T=1|X)=p) - p|]}$, where $Q(T=1|X)$ models the treatment assignment mechanism $P(T=1|X)$. 
To compute ECE, we divide the probabilistic output range $[0, 1]$ into equal-sized intervals $\{I_0,I_1,..,I_M\}$ such that we can generate buckets ${\{B_i\}_{i=1}^M}$, where ${B_i=\{(X, T, Y) | Q(T=1|X) \in I_i\}}$.  The estimated ECE is then computed as $${\text{ECE} = \sum_{i=1}^{M} \frac{|B_i|}{|\bigcup_{j=1}^M B_j|}|\text{avg}_i(B_i) - \text{pred}_i(B_i)|},$$ where $\text{avg}_i(B_i)=\sum_{j=1}^{|B_i|} T_j /|B_i|$ and 
${\text{pred}_i(B_i) = \sum_{j=1}^{|B_i|} Q(T=1|X_j) /|B_i|}$. 
%\vk{did we forget to square some of the errors?}

\subsection{Drug Effectiveness Study}
% \begin{table}
% \begin{wraptable}{r}{5.5cm}
% \vspace{-0.8cm}
% \caption{Comparison of different base propensity score models. }
% % \hspace{0.1cm}
% \vspace{0.1cm}
% \centering
% \small
% \begin{tabular}{lccccr}
% \toprule
% Base model & $\varepsilon_{ATE}$(Plain) & $\varepsilon_{ATE}$ (Calib) & \\
% \midrule
%  Log. Reg  & 0.479  {(0.005)} 
%  & 0.091 (0.022)
%  \\
%  MLP & 0.455 (0.042) & 0.027 (0.031)
% \\
% %  Decision Tree & 0.506 (0.003) & 0.000
% % (0.000)
% %  & 0.504
% % (0.003) & 0.000
% % (0.000)\\
% % Adaboost & 0.506 (0.003) 
% %  & 0.504
% % (0.003) \\
%  SVM & 0.485
% (0.004) & 0.454
% (0.013) \\
%  Naive Bayes & 0.471
% (0.003) & 0.021
% (0.018) \\
% \bottomrule

% \end{tabular}
% \vspace{-0.2cm}
% \label{table:comp-basemodels}
% % \end{wraptable}
% \end{table}
% \setlength{\tabcolsep}{2.5pt}
\begin{table*}[ht]
\caption{Recalibrating the Output of Propensity Score Model Results in Lower Error in Estimating Causal Effects. Reduction in ECE ($\Delta (ECE)$) implies that the calibration of the model improves with our technique. Results consisting of  $\varepsilon_{ATE}$ are averaged over 10 experimental repetitions and braces contain the standard error.}
%The baselines consist of weighing with plain propensities (Plain), trimmed propensities (Trim), stabilized weights (SW) and covariate balancing (Cov. Bal.).
% \vspace{-0.6cm}
% \caption{Recalibrating Propensities. True ATE in all the simulations below is 0.368.}
%Reduction in ECE ($\Delta (ECE)$) implies that the calibration of the model improves with our technique. Results consisting of  $\varepsilon_{ATE}$ are averaged over 10 experimental repetitions and braces contain the standard error.
\small
\centering

\begin{tabular}{llcccc}
\toprule
% 
%&{Setting} & \multicolumn{4}{c}{$\varepsilon_{ATE}$} \\%& \multicolumn{4}{c}{PEHE} \\
 & Setting & Sim A & Sim B & Sim C & Sim D \\%& Sim A & Sim B & Sim C & Sim D \\
\midrule
 & Naive & 0.498 (0.003) & 0.223 (0.003) & 0.279 (0.004) & 0.280 (0.006) \\%& 0.263 (0.002) & 0.075 (0.002) & 0.105 (0.002) & 0.103 (0.003)\\
% & T learner & 0.098 (0.005) & 0.200 (0.001) & 0.153 (0.002) &  0.058 (0.006) & 0.017 (0.001) & 0.062 (0.001) & 0.047 (0.001) & 0.035 (0.001)\\
\midrule
 & Plain propensities & 0.348  (0.035) & 0.211 (0.002) & 0.164 (0.002) & 0.075 (0.004) \\%& 0.149 (0.024) & 0.068 (0.001) & 0.052 (0.001) & \textbf{0.031 (0.001)}\\
 & Trimmed ~\citep{Lee2011-nv} & 0.481 (0.004) &  0.210 (0.002) & 0.153 (0.002) & 0.074 (0.004) \\%& 0.245 (0.004) & 0.067 (0.001) & 0.046 (0.001) & \textbf{0.031 (0.001)}\\
& Stablized Wt~\citep{Xu2010-jv}  & 0.422 (0.016) &  0.210 (0.002) & 0.158 (0.003) & 0.078 (0.005) \\%& 0.195 (0.013) & 0.076 (0.002) & 0.114 (0.004) & 0.112 (0.005)\\
\midrule
 & Covariate Balancing~\citep{tan2020regularized} & {0.509 (0.003)} & {\textbf{0.190 (0.002)}}& {0.169 (0.007)} & {0.092 (0.013)} \\%&\multirow{1}{*}{0.280 (0.003)} & \multirow{1}{*}{\textbf{0.056 (0.001)}} & \multirow{1}{*}{0.050 (0.003)} & \multirow{1}{*}{0.107 (0.007)}\\
% & \citet{tan2020regularized} & & & & & & & & \\
\midrule
 & Calibrated (Ours) & 0.107 (0.029) & {0.195 (0.002)} & \textbf{0.148 (0.001)} & \textbf{0.048 (0.010)} \\%& 0.047 (0.010) & \textbf{0.057 (0.001)} & \textbf{0.042 (0.001)} & \textbf{0.032 (0.001)}\\
  & Calibrated + Trimmed &  0.115 (0.028) & \textbf{0.193 (0.002)} & \textbf{0.148 (0.001)} & \textbf{0.048 (0.010)} \\%& 0.049 (0.010) & \textbf{0.057 (0.001)} & \textbf{0.042 (0.001)} &  \textbf{0.032 (0.001)} \\
 & Calibrated + Stablized Wt & \textbf{0.057 (0.026)} & {0.194 (0.002)} & \textbf{0.148 (0.001)} & \textbf{0.045 (0.009)} \\%& \textbf{0.030 (0.007)} & \textbf{0.057 (0.001)} & \textbf{0.042 (0.001)} &  0.033 (0.001)\\
 \midrule
 & $\Delta(ECE)$ & 0.010 (0.001) & 0.014 (0.001) & 0.025 (0.002) & 0.019 (0.001) \\%& 0.010 (0.001) & 0.014 (0.001) & 0.025 (0.002) & 0.019 (0.001) \\
\bottomrule
\end{tabular}
\vspace{-2mm}
\label{table:toy-expr}
\end{table*}

\begin{table*}[ht]

\caption{Reduction in ATE Estimation Error $\varepsilon_{ATE}$ with Structured and Unstructured Covariates.}
\vspace{0.2cm}
% \hspace{0.1cm}
\small
\centering
\begin{tabular}{lccccr}
\toprule
% % 
% Setting &  Naive  & \multicolumn{2}{c}{Plain Propensities} &  & \multicolumn{2}{c}{Uncertainty Recalibration} & $\Delta$(ECE) \\

%  & estimation &  &  & &   &  &  \\
% \midrule
%  Image Covariate & 0.187 (0.010) & & 0.107 (0.029) &  &  & 0.095 (0.005) & 0.137 (0.046) \\
%  Binary Covariate & 0.176 (0.019) & & 0.052 (0.011) &  & & 0.099 (0.008) & 0.112 (0.029)\\

% 
Setting &  Naive Est.  & {Plain Propensities} &  {Uncertainty Recalibration} & $\Delta$(ECE) \\
\midrule
 Image Covariate & 0.187 (0.010) & 0.107 (0.029) & 0.095 (0.005) & 0.137 (0.046) \\
 Binary Covariate & 0.176 (0.019) & 0.091 (0.011)  & 0.085 (0.008) & 0.112 (0.029)\\
\bottomrule
\vspace{-0.2cm}
\end{tabular}

\label{table:mnist-expr}
\end{table*}

We simulate an observational study of recovery time from disease in response to the administration of a drug~\citep{florian}.  The decision to treat an individual with the drug is dependent on the covariates specified as age, gender, and severity of disease. We use logistic regression as the propensity score model.  In Figure~\ref{fig:compare_toy_hist}, we see that weighing using recalibrated propensities allows us to approximate the distribution of individual treatment effect estimates better than uncalibrated propensities. Here, treatment effect estimates $\tau$ are computed as ratio $\mathbb{E}[Y[x,1]]/\mathbb{E}[Y[x,0]]$. %In Figure~\ref{fig:compare_prop_scores}, we compare the histogram of propensity scores before and after calibration. 
The true average causal effect is 0.368. Please refer to Appendix~\ref{apdx:drug-effectiveness} for details on the simulation, models used, and calibration plots. 
% \begin{wrapfigure}{r}{6 cm}
% \begin{figure}
% \centering
% % \vspace{-0.5 cm}
% \includegraphics[scale=0.25]{}
% % \vspace{-0.5cm}

% \caption{Histogram of propensities pre- and post-calibration. Calibration reduces the occurrence of numerically small scores.} %The distribution of treatment effect estimates is shown in blue when using RCT and in yellow when using observational data directly. Using recalibrated propensity scores shifts the distribution of treatment effect estimates closest to the blue region.
 
% \vspace{-1cm}
% \label{fig:compare_prop_scores}
% \end{figure}
% \sd{Explain the tables better - variation in results}

In Table~\ref{table:toy-expr}, we employ different treatment assignment mechanisms in each simulated observational study, allowing us to compare mechanisms that may or may not be well-specified by a linear model, e.g., Simulation C uses the logical AND condition while Simulation D uses the logical XOR condition to assign treatment. (Appendix~\ref{apdx:drug-effectiveness}). 
We see that calibrated propensities produce lower absolute error in estimating average treatment effect ($\epsilon_{ATE}$) under varying mechanisms. Here, the naive estimation computes the outcomes without weighing the samples with propensities. Uncertainty-calibrated propensities reduce bias more consistently as compared to weighing with plain propensity scores, propensity trimming~\citep{Lee2011-nv}, stabilized weights~\citep{Xu2010-jv} and regularized covariate balancing optimization of propensity weights~\citep{tan2020regularized}. Since the optimal level of trimming is difficult to determine, it can sometimes increase bias, as seen in Simulation A. Similarly, the design of balancing equations impacts bias reduction in the covariate balancing approach~\citep{benmichael2021balancing} to calibration. We use the RCAL package~\citep{tan2020RCAL} to implement the covariate balancing baseline. Table~\ref{table:toy-expr-pehe} in Appendix~\ref{apdx:additional-experiments} reports the PEHE (Precision in Estimation of Heterogenous Effect) metric for all the experimental settings in Table~\ref{table:toy-expr} and demonstrates similar improvement with calibrated propensities. Table~\ref{apdx:table:comp-basemodels} in Appendix~\ref{apdx:additional-experiments} demonstrates the effectiveness of calibration over six different base propensity models (including logistic regression) that approximate a fixed treatment assignment function. 

% In Table~\ref{table:comp-basemodels}, we also compare a range of base propensity score models where the true treatment assignment function is non-linear logical XOR (Appendix ~\ref{apdx:drug-effectiveness}). We see the benefits of calibration across varying degrees of mis-specification in the base model. After calibration, non-linear MLP and SVM (RBF) show the best $\varepsilon_{ATE}$, while mis-specified linear models like logistic regression also show consistent reduction in $\varepsilon_{ATE}$. Table~\ref{apdx:table:comp-basemodels} in Appendix~\ref{apdx:additional-experiments} contains additional details including ECE as calibration metric and we observe greater reduction in bias ($\varepsilon_{ATE}$) with lowering ECE. 

In summary, calibrated propensities approximate the true distribution of individual treatment effects better and reduce the occurrence of numerically low scores. They reduce the error in ATE estimation across different propensity score models and treatment assignment mechanisms. In real-world observational studies, where we don't know the true treatment assignment mechanism, calibration can be useful to improve the treatment effect estimates from a potentially misspecified model. %\sd{TODO: Also plot relationship between ECE and ATE error OR ECE and covariate balance} 

\subsection{Unstructured Image Covariates}
We simulate a simple observational study following ~\citet{louizos2017causal} and \citet{deshpande2022deep} such that variables $X, T, Y \sim \mathbb{P}$ are binary and the true ATE is zero. Appendix~\ref{apdx:unstructured-covars} contains a detailed description of this simulation. We also introduce an unstructured image covariate $\mathbf{X}$ that represents $X$ as a randomly chosen MNIST image of a zero or one, depending on whether $X=0$ or $X=1$. Specifically, ${\mathbb{P}(\mathbf{X}|X=1)}$ is uniform~over MNIST images of `1' and ${\mathbb{P}(\mathbf{X}|X=0)}$ is uniform~over MNIST images of `0'.

We use a multi-layer perceptron as the propensity score model and recalibrate its output. In Table~\ref{table:mnist-expr}, we compare the IPTW estimates for ATE using binary $X$ and image $\mathbf{X}$ covariates (with $28 \times 28 = 784$ dimensions). The ECE is higher for the plain propensity score model trained on image covariates, indicating higher miscalibration with increasing covariate dimensions. We see that recalibration improves ATE estimates on high-dimensional image covariates.

% \sd{TODO: Add plots to compare the calibrated propensities in the two cases.Explain/provide more intuition into results/why this experiment is useful}
\begin{table*}[ht]
% \vspace{-0.8cm}
\caption{GWAS with Calibrated Propensities. We compare IPTW and AIPW estimates using calibrated propensity scores against standard baselines and a specialized GWAS analysis system (LMM/LIMIX).}
% \hspace{0.1cm}
\vspace{0.2cm}
\centering
\small
\begin{tabular}{lcccccc}
\toprule
Dataset	& Spatial & 	Spatial & 	Spatial & 	HGDP	& TGP \\
& ($\alpha$=0.1)& 	 ($\alpha$=0.3)&  ($\alpha$=0.5)&  	& \\
\midrule
Naive	& 16.23 (0.91)	& 11.76 (0.84)	& 9.81 (0.69)& 	11.82 (0.11)	&  12.24 (0.71) \\
PCA	& 9.60 (0.37)	& 9.54 (0.41)	& 9.38 (0.38) & 	 	11.69 (0.20) & 	10.73 (0.38) \\
FA	& 9.55 (0.34) & 9.53 (0.44) & 	9.23 (0.30) & 11.65 (0.16)	& 10.59 (0.32) \\
LMM	 & 10.24 (0.41) & 9.58 (0.45) & \textbf{8.15 (0.40)} & \textbf{10.09 (0.35)} & \textbf{9.44 (0.57)} \\
\midrule
IPTW (Calib) 	& \textbf{8.13 (0.35)} & 	\textbf{8.69 (0.56)} & 	\textbf{8.32 (0.34)} & 	10.86 (0.13) & 	\textbf{9.57 (0.58)} \\
IPTW (Plain) & 12.56 (1.25) & 10.22 (0.81) & 9.09 (0.48) & 11.62 (0.12)	& 11.76 (0.86) \\
AIPW (Calib)	& 8.94 (0.29)	& 9.00 (0.58)& 	8.59 (0.39)  & 	11.06 (0.12) & 	10.32 (0.43) \\
AIPW (Plain)	& 13.89 (0.76) & 	10.46 (0.72) & 	8.99 (0.51)	&  11.38 (0.11)	& 11.56 (0.65) \\
$\Delta_{ECE}$ & 0.022 (0.001) & 0.016 (0.007) & 0.015 (0.001) & 0.011 (0.001)& 0.022 (0.001) \\
\bottomrule
\vspace{-0.2cm}
\end{tabular}

\label{table:gwas-basic}
\end{table*}
\subsection{Genome-Wide Association Studies}

Genome-Wide Association Studies (GWASs) attempt to estimate the treatment effect of genetic mutations (called SNPs) on individual traits (called phenotypes) from observational datasets. Each SNP acts as a treatment. Confounding occurs because of hidden ancestry: individuals with shared ancestry have correlated genes and phenotypes.

The key takeaways can be summarized as follows. First, {\em recalibration enables off-the-shelf IPTW and AIPW estimators to match or outperform a state-of-the-art GWAS analysis system} (LMM/LIMIX; see Tables \ref{table:gwas-basic} and \ref{table:gwas-increase-causal}). Second, our method  {\em enables the use of propensity score models that would otherwise be unusable} due to the poor quality of their uncertainty estimates (e.g., Naive Bayes; see Table \ref{table:gwas-nb-vs-lr}). Third, leveraging new types of propensity score models that are fast to train (such as Naive Bayes), {\bf improves the speed of GWAS analysis by more than two-fold} (see Table \ref{table:gwas-nb-vs-lr}).

\paragraph{Setup}
We simulate the genotypes and phenotypes of individuals following a range of standard models as described in Appendix~\ref{apdx:sim-gwas}. The outcome is simulated as 
$Y = \beta^T G + \alpha^T Z + \epsilon,$
where $G$ is the vector of SNPs, $Z$ contains the hidden confounding variables, $\epsilon$ is noise distributed as Gaussian, $\beta$ is the vector of treatment effects corresponding to each SNP and $\alpha$ holds coefficients for the hidden confounding variables. 
 We assume that the aspect of hidden population structure in $Z$ that needs to be controlled for is fully contained in the observed genetic data to ensure ignorability~\citep{Lin2011-tv}.
To estimate the average marginal treatment effect corresponding to each SNP, we iterate successively over the vector of SNPs such that the selected SNP is treatment $T$ and all the remaining SNPs are covariates $X$ for predicting the phenotypic outcome $Y$. The outcome is a vector of estimated treatment effects $\hat{\beta}$ corresponding to the vector of SNPs. We measure $\varepsilon_{ATE}$ as the $l_2$ norm of the difference between true and estimated marginal treatment effect vectors. 

We use calibrated propensity scores with the IPTW and AIPW estimators to compute these treatment effects. We compare the performance of these estimators with standard methods to perform GWAS, including Principal Components Analysis (PCA)~\citep{price2006pca, price2010new}, Factor Analysis (FA), and Linear Mixed Models (LMMs)~\citep{yu2006unified, lippert2011fast}, implemented in the popular LIMIX library~\citep{lippert2014limix}. Unless mentioned otherwise, 1\% of total SNPs are causal and we have 4000 individuals in the dataset. 

\begin{table*}[!ht]
% \vspace{-0.8cm}
\caption{Increasing Proportion of Causal SNPs. Calibrated propensities reduce the bias in treatment effect estimation across all setups and compare favorably against standard GWAS methods.}
% \hspace{0.1cm}
\vspace{0.2cm}
\centering
\small
\begin{tabular}{lcccccc}
\toprule
Method &	1\% Causal SNPs & 2\% Causal SNPs & 5\% Causal SNPs & 10\% Causal SNPs \\
\midrule
Naive  &	22.408 (5.752) &	15.150 (2.213) &	23.388 (5.021) &	14.846 ( 2.272) \\
PCA &18.104 (5.378)& 13.699 (2.413) &15.837 (3.331) &	11.683 (0.983) \\
FA & 18.532 (3.641)& 14.166 (2.259) & 16.855 (2.764) & 11.963 (0.958) \\
LMM	 & 17.575 (3.408) & 13.896 (2.152) &	14.681 (3.366)	& 10.108 (0.827) \\
\midrule
IPTW (Calib) &	\textbf{17.237 (3.054)}	& \textbf{13.113 (1.775)} &	\textbf{14.587 (3.432)} &	\textbf{8.625 (0.838)} \\
IPTW (Plain) &	19.297 (3.425) &	14.372 (1.482) &	18.290 (3.788) &	11.859 (0.95240) \\
AIPW (Calib) & 17.647 (3.208) & 13.382 (1.676) &	15.166 (3.597) &	9.078 (0.928) \\
AIPW (Plain)& 20.652 (3.286) &	13.720 (1.798) &	21.321 (4.750)	 & 12.904 (1.990) \\
%$\Delta_{ECE}$ 	& 0.0215 (0.000)	& 0.0215 (0.000)	 & 0.0215 (0.000) & 0.0215(0.000) \\
\bottomrule
\vspace{-0.4cm}
\end{tabular}
\label{table:gwas-increase-causal}
\end{table*}

In Table~\ref{table:gwas-basic}, we demonstrate the effectiveness of estimators using calibrated propensities on five different GWAS datasets (Appendix~\ref{apdx:sim-gwas}). Here, we have a total of 100 SNPs. In Table~\ref{table:gwas-increase-causal}, we increase the proportion of causal SNPs for the Spatial simulation and continue to see improved performance under calibration. In Table~\ref{table:apdx: gwas-comp} (Appendix~\ref{apdx:sim-gwas}), we compare five different base models to learn propensity scores over six standard GWAS simulations and show that calibration improves the performance in each case. The performance of plain Naive Bayes as the base propensity score model is very poor owing to the simplistic conditional independence assumptions, but calibration improves its performance significantly. In Table~\ref{table:gwas-nb-vs-lr}, we compare the computational throughput of calibrated Naive Bayes as the propensity score model with logistic regression. Here, we have a total of 1000 SNPs. We see that using calibrated Naive Bayes obtains performance competitive with logistic regression at a significantly higher throughput. Please refer to Appendix~\ref{apdx:additional-experiments} for results on additional GWAS datasets.  %Thus, we can model propensity scores over high-dimensional covariates with simpler models and obtain improved causal estimates at a significantly smaller computational cost by using calibration. 

\begin{table}[h]
% \begin{wraptable}{r}{7cm}
% \vspace{-1cm}
\caption{Calibrated Naive Bayes Yields Lower $\epsilon_{ATE}$ (IPTW) and Uses Lower Computational Resources As Compared to Logistic Regression.}
% \hspace{0.1cm}
\vspace{0.2cm}
\centering
\small
\begin{tabular}{lcccccc}
\toprule
Method & 	$\epsilon_{ATE}$ &	Tput (SNPs/sec)  \\
\midrule
% Naive	&		21.349 (3.640) & - \\
% PCA		&		19.186 (3.758)  & -\\
% FA	&	19.016 (3.852)  & -\\
LMM		&	19.908 (3.592)  & -\\
% \midrule
 Calibrated NB 
 & \textbf{18.210 (1.705)} & 47.6 \\
 % & $\epsilon_{ATE}$ & 21.319 (3.704) \\
		
  % \midrule
  Plain NB 
  &  1455.992 (185.084) & 68.6 \\
  % & $\epsilon_{ATE}$ & 336.671 (63.148) \\
	
 % \midrule
 Calibrated LR 
 & 23.618 (3.832) & 19.5  \\
% & $\epsilon_{ATE}$ & 22.795 (4.249) \\

  % \midrule
  Plain LR 
  %& IPTW & 19.297 (3.425)  & 29.309 (2.773)  & 23.525 (4.530) \\
 &  27.921 (4.713) &	20.1\\
		
\bottomrule
\vspace{-0.4cm}
\end{tabular}

\label{table:gwas-nb-vs-lr}
\end{table}

% \paragraph{\textbf{Plots to add}} 
% \begin{enumerate}
%     \item Calibration curves and propensity score histograms for toy example
%     \item Relate/plot calibration error with `balancing' property and error in treatment effect estimate before/after calibration
%     \item Plots similar to blog post in the drug recovery example - should cite
% \end{enumerate}

\section{RELATED WORK}\label{sec:related}
Isotonic regression~\citep{mizil2005predicting} and Platt scaling~\citep{Platt99probabilisticoutputs} are used to calibrate uncertainties over discrete outputs. This concept has been extended to regression calibration~\citep{kuleshov2018accurate} and online calibration~\citep{kuleshov2017estimating}. %nd structured prediction~\citep{kuleshov2015structured}. 
Calibrated uncertainties have been used to improve deep reinforcement learning~\citep{malik2019calibrated, pmlr-v162-kuleshov22a}, Bayesian optimization~\citep{deshpande2023calibrated}, etc. %natural language processing~\citep{kumar2019calibration}, %, Conformal prediction has been used to generate calibrated predictive sets and this was extended to work with distribution shifts~\citep{tibshirani2019conformal, Barber2022Conformal}. 
    
\citet{Lenis2018-so} and~\citet{Kang_2007} demonstrate the degradation in treatment effect estimation due to misspecified treatment and outcome models. Various modifications of propensity scores weights and different notions of calibration have been proposed to reduce the bias in treatment effect estimation~\citep{Imai2014-wi, zhao2017covariate, ning2018robust, Van_Der_Laan2023, Sturmer2007-mf, crump2009dealing, li2018balancing, Xu2010-jv}. Appendix~\ref{apdx:comparison-with-related-work} compares our work with these approaches in more details.

As compared to covariate balancing calibration~\citep{Imai2014-wi} that modifies the underlying optimization procedure for obtaining balancing weights, our notion of calibration is simpler to implement and does not modify the optimization of the  propensity score model. Unlike techniques like propensity weight trimming~\citep{crump2009dealing}, our method does not introduce bias from throwing away weights beyond a pre-selected threshold.  \citet{Van_Der_Laan2023} and ~\citet{yadlowsky2022calibrationerror} apply the following notion of calibration for hetereogenous treatment effect (HTE) estimation: The average HTE of units with a given predicted HTE is equal to the shared predicted value.  The goal of causal isotonic regression~\citep{Van_Der_Laan2023} is to ensure more directly that the predicted HTE outcome is reliable for different sub-groups of the population. Our work, on the other hand, calibrates the uncertainty outcome of the propensity score model that weighs the treated and control outcomes to achieve covariate balance.  Although both calibration methods can be implemented using isotonic regression, the calibration guarantees are different. Our definition ensures that we avoid extreme propensity weights while balancing covariates and improve the error bounds on causal effect estimates. Our approach to calibration is applicable to HTE estimation (Appendix~\ref{apdx:additional-experiments}, Table~\ref{table:toy-expr-pehe}) and can be used with mis-specified propensity models that produce extreme weights. Applying our method to calibrate propensity scores in HTE estimation could be an interesting way to reduce the issue with extreme propensity weights while performing causal isotonic regression~\citep{Van_Der_Laan2023}. 

Uncertainty calibration can be combined independently with other methods like trimming~\citep{crump2009dealing, li2018balancing}, stabilized weights~\citep{Xu2010-jv},  covariate balancing techniques~\citep{hainmueller_2012, Chan2016-lf, zhao2017covariate, zubizarrata2015jose, ning2018robust}, etc. to improve the quality of ATE estimates. 
%Several techniques have been proposed to trim extreme propensity weights~\citep{crump2009dealing, li2018balancing}, but trimming can introduce bias while reducing variance of the estimates~\citep{li2018addressing}. Propensity score calibration reduces variance from extreme propensity weights without increasing the bias. %but choosing optimal thresholds is dependent on the problem and wrong thresholds can introduce bias~\citep{li2018addressing}. 
%Approaches that balance covariates during optimization to obtain propensity weights have demonstrated theoretical and empirical advantages in causal effect estimation~\citep{hainmueller_2012, Chan2016-lf, zhao2017covariate, zubizarrata2015jose, ning2018robust}, but choosing appropriate covariate balancing conditions requires substantial knowledge of the observational study~\citep{benmichael2021balancing}. 

%~\citet{Lin2011-tv} rigorously define propensity score-based techniques to correct for confounding in Genome-Wide Association Studies (GWASs). ~\citet{Zhao2009propensityscore, Zhao2012analyzinggenetic} propose techniques to balance both genetic and non-genetic covariates using propensity scores. % ~\citet{Zhao2018-ht} combine propensity scores with linear dimension reduction for improved efficiency. 
%Other techniques to correct for confounding in GWAS include Principal Components Analysis~\citep{price2006pca}, Genomic Control~\citep{Devlin1999-mr}, Stratification Scores~\citep{Epstein2007-tz} and Linear Mixed Models~\citep{lippert2011fast}. 
\section{DISCUSSION AND CONCLUSIONS}
\label{sec:discussion}
%\sd{Add some conclusions on theoretical results, can modify this based on final structure of paper}
     True treatment assignment mechanisms in observational studies are rarely known. Mis-specified propensity score models and outcome models may lead to biased treatment effect estimation~\citep{Kang_2007, Lenis2018-so}.  
     We proposed a simple technique to perform post-hoc calibration of the propensity score model. % and examined the theoretical properties of calibration for treatment effect estimation. 
     We show that calibration is a necessary condition to obtain accurate treatment effects and calibrated uncertainties improve propensity scoring models.  Empirically, we show that our technique reduces bias in estimates across a range of treatment assignment functions and base propensity score models. %As compared to calibration by optimizing the covariate balancing property~\citep{Imai2014-wi}, our procedure is simpler and does not require any modification to the training of the base propensity score model. 
     Propensity score models over high-dimensional, unstructured covariates like images, text, and genomic sequences are harder to specify, and we show that we can improve treatment effect estimates for such covariates over a range of base models including the popular logistic regression. We can calibrate simpler models like Naive Bayes over high-dimensional covariates and obtain higher computational throughput while maintaining competitive performance as measured by the error in treatment effect estimation. 
     \paragraph{Limitations of Calibrated Propensities.} Calibration can ensure accurate causal effect estimates when the propensity score model Q can discriminate between different treatments. For example, if the propensity model outputs the marginal treatment distribution, i.e., $Q(T|X) = P(T)$, then $Q$ is perfectly calibrated but cannot estimate accurate treatment effects. Ensuring that Q can discriminate between different treatments is a strong condition and we discuss this further in Appendix~\ref{apdx:cal-improves-accuracy1}. When we use calibrated propensity scores for causal effect estimation, we assume that the observed covariates contain information on all the confounders. In the presence of unobserved confounders that cannot be recovered, calibrating the propensity scores will not be helpful.

% \paragraph{Limitations and Future Directions}
% We perform an empirical evaluation for observational studies with binary treatments, but our calibration procedure can be potentially applied to multi-valued and continuous treatments. We leave this as future work. Our GWAS experiments were performed on a range of standard simulation models, but it will be interesting to extend these experiments to include non-genetic covariates, a higher number of SNPs, and real-world genotype matrices. Additionally,  the calibration of outcome models is an exciting direction for future work.
\newpage
% \input{src/conclusions}

% References
% \bibliographystyle{icml2024}
\bibliography{bibliography}

\begin{thebibliography}{66}
\providecommand{\natexlab}[1]{#1}
\providecommand{\url}[1]{\texttt{#1}}
\expandafter\ifx\csname urlstyle\endcsname\relax
  \providecommand{\doi}[1]{doi: #1}\else
  \providecommand{\doi}{doi: \begingroup \urlstyle{rm}\Url}\fi

\bibitem[{1000 Genomes Project Consortium} et~al.(2015){1000 Genomes Project Consortium}, Auton, Brooks, Durbin, Garrison, Kang, Korbel, Marchini, McCarthy, McVean, and Abecasis]{1000_Genomes_Project_Consortium2015-wg}
{1000 Genomes Project Consortium}, Adam Auton, Lisa~D Brooks, Richard~M Durbin, Erik~P Garrison, Hyun~Min Kang, Jan~O Korbel, Jonathan~L Marchini, Shane McCarthy, Gil~A McVean, and Gon{\c c}alo~R Abecasis.
\newblock A global reference for human genetic variation.
\newblock \emph{Nature}, 526\penalty0 (7571):\penalty0 68--74, October 2015.

\bibitem[{A. Smith} and {E. Todd}(2005)]{smith2005matching}
Jeffrey {A. Smith} and Petra {E. Todd}.
\newblock Does matching overcome lalonde's critique of nonexperimental estimators?
\newblock \emph{Journal of Econometrics}, 125\penalty0 (1):\penalty0 305--353, 2005.
\newblock ISSN 0304-4076.
\newblock \doi{https://doi.org/10.1016/j.jeconom.2004.04.011}.
\newblock URL \url{https://www.sciencedirect.com/science/article/pii/S030440760400082X}.
\newblock Experimental and non-experimental evaluation of economic policy and models.

\bibitem[Angelopoulos and Bates(2021)]{angelopoulos2021gentle}
Anastasios~N. Angelopoulos and Stephen Bates.
\newblock A gentle introduction to conformal prediction and distribution-free uncertainty quantification, 2021.
\newblock URL \url{https://arxiv.org/abs/2107.07511}.

\bibitem[Ben-Michael et~al.(2021)Ben-Michael, Feller, Hirshberg, and Zubizarreta]{benmichael2021balancing}
Eli Ben-Michael, Avi Feller, David~A. Hirshberg, and José~R. Zubizarreta.
\newblock The balancing act in causal inference, 2021.

\bibitem[Bergström et~al.(2020)Bergström, McCarthy, Hui, Almarri, Ayub, Danecek, Chen, Felkel, Hallast, Kamm, Blanché, Deleuze, Cann, Mallick, Reich, Sandhu, Skoglund, Scally, Xue, Durbin, and Tyler-Smith]{hgdp2020}
Anders Bergström, Shane~A. McCarthy, Ruoyun Hui, Mohamed~A. Almarri, Qasim Ayub, Petr Danecek, Yuan Chen, Sabine Felkel, Pille Hallast, Jack Kamm, Hélène Blanché, Jean-François Deleuze, Howard Cann, Swapan Mallick, David Reich, Manjinder~S. Sandhu, Pontus Skoglund, Aylwyn Scally, Yali Xue, Richard Durbin, and Chris Tyler-Smith.
\newblock Insights into human genetic variation and population history from 929 diverse genomes.
\newblock \emph{Science}, 367\penalty0 (6484):\penalty0 eaay5012, 2020.
\newblock \doi{10.1126/science.aay5012}.
\newblock URL \url{https://www.science.org/doi/abs/10.1126/science.aay5012}.

\bibitem[Chan et~al.(2016)Chan, Yam, and Zhang]{Chan2016-lf}
Kwun Chuen~Gary Chan, Sheung Chi~Phillip Yam, and Zheng Zhang.
\newblock Globally efficient non-parametric inference of average treatment effects by empirical balancing calibration weighting.
\newblock \emph{J. R. Stat. Soc. Series B Stat. Methodol.}, 78\penalty0 (3):\penalty0 673--700, June 2016.

\bibitem[Crump et~al.(2009)Crump, Hotz, Imbens, and Mitnik]{crump2009dealing}
Richard~K. Crump, V.~Joseph Hotz, Guido~W. Imbens, and Oscar~A. Mitnik.
\newblock {Dealing with limited overlap in estimation of average treatment effects}.
\newblock \emph{Biometrika}, 96\penalty0 (1):\penalty0 187--199, 01 2009.
\newblock ISSN 0006-3444.
\newblock \doi{10.1093/biomet/asn055}.
\newblock URL \url{https://doi.org/10.1093/biomet/asn055}.

\bibitem[D'Agostino(1998)]{DAgostino1998-ex}
Ralph~B D'Agostino.
\newblock Propensity score methods for bias reduction in the comparison of a treatment to a non-randomized control group.
\newblock \emph{Stat. Med.}, 17\penalty0 (19):\penalty0 2265--2281, October 1998.

\bibitem[Deshpande and Kuleshov(2023)]{deshpande2023calibrated}
Shachi Deshpande and Volodymyr Kuleshov.
\newblock Calibrated uncertainty estimation improves bayesian optimization, 2023.

\bibitem[Deshpande et~al.(2022)Deshpande, Wang, Sreenivas, Li, and Kuleshov]{deshpande2022deep}
Shachi Deshpande, Kaiwen Wang, Dhruv Sreenivas, Zheng Li, and Volodymyr Kuleshov.
\newblock Deep multi-modal structural equations for causal effect estimation with unstructured proxies.
\newblock In S.~Koyejo, S.~Mohamed, A.~Agarwal, D.~Belgrave, K.~Cho, and A.~Oh, editors, \emph{Advances in Neural Information Processing Systems}, volume~35, pages 10931--10944. Curran Associates, Inc., 2022.

\bibitem[Fairley et~al.(2019)Fairley, Lowy-Gallego, Perry, and Flicek]{hgdp}
Susan Fairley, Ernesto Lowy-Gallego, Emily Perry, and Paul Flicek.
\newblock {The International Genome Sample Resource (IGSR) collection of open human genomic variation resources}.
\newblock \emph{Nucleic Acids Research}, 48\penalty0 (D1):\penalty0 D941--D947, 10 2019.
\newblock ISSN 0305-1048.
\newblock \doi{10.1093/nar/gkz836}.
\newblock URL \url{https://doi.org/10.1093/nar/gkz836}.

\bibitem[Fan~Li and Zaslavsky(2018)]{li2018balancing}
Kari Lock~Morgan Fan~Li and Alan~M. Zaslavsky.
\newblock Balancing covariates via propensity score weighting.
\newblock \emph{Journal of the American Statistical Association}, 113\penalty0 (521):\penalty0 390--400, 2018.
\newblock \doi{10.1080/01621459.2016.1260466}.
\newblock URL \url{https://doi.org/10.1080/01621459.2016.1260466}.

\bibitem[Gneiting et~al.(2007)Gneiting, Balabdaoui, and Raftery]{gneiting2007probabilistic}
T.~Gneiting, F.~Balabdaoui, and A.~E. Raftery.
\newblock Probabilistic forecasts, calibration and sharpness.
\newblock \emph{Journal of the Royal Statistical Society: Series B (Statistical Methodology)}, 69\penalty0 (2):\penalty0 243--268, 2007.

\bibitem[Greenland et~al.(1999)Greenland, Pearl, and Robins]{greenland1999confounding}
Sander Greenland, Judea Pearl, and James~M. Robins.
\newblock {Confounding and Collapsibility in Causal Inference}.
\newblock \emph{Statistical Science}, 14\penalty0 (1):\penalty0 29 -- 46, 1999.
\newblock \doi{10.1214/ss/1009211805}.
\newblock URL \url{https://doi.org/10.1214/ss/1009211805}.

\bibitem[Guo et~al.(2017)Guo, Pleiss, Sun, and Weinberger]{guo2017calibration}
Chuan Guo, Geoff Pleiss, Yu~Sun, and Kilian~Q. Weinberger.
\newblock On calibration of modern neural networks, 2017.

\bibitem[Hainmueller(2012)]{hainmueller_2012}
Jens Hainmueller.
\newblock Entropy balancing for causal effects: A multivariate reweighting method to produce balanced samples in observational studies.
\newblock \emph{Political Analysis}, 20\penalty0 (1):\penalty0 25–46, 2012.
\newblock \doi{10.1093/pan/mpr025}.

\bibitem[Hirano et~al.(2003)Hirano, Imbens, and Ridder]{hirano2003efficient}
Keisuke Hirano, {Guido W.} Imbens, and Geert Ridder.
\newblock Efficient estimation of average treatment effects using the estimated propensity score.
\newblock \emph{Econometrica}, 71\penalty0 (4):\penalty0 1161--1189, 2003.
\newblock ISSN 0012-9682.
\newblock \doi{10.1111/1468-0262.00442}.

\bibitem[Imai and Ratkovic(2014)]{Imai2014-wi}
Kosuke Imai and Marc Ratkovic.
\newblock Covariate balancing propensity score.
\newblock \emph{J. R. Stat. Soc. Series B Stat. Methodol.}, 76\penalty0 (1):\penalty0 243--263, January 2014.

\bibitem[Imbens(2000)]{imbens2000therole}
Guido Imbens.
\newblock The role of the propensity score in estimating dose-response functions.
\newblock \emph{Biometrika}, 87\penalty0 (3):\penalty0 706--710, 2000.

\bibitem[Imbens(2004)]{Imbens2004-uy}
Guido~W Imbens.
\newblock Nonparametric estimation of average treatment effects under exogeneity: A review.
\newblock \emph{Rev. Econ. Stat.}, 86\penalty0 (1):\penalty0 4--29, February 2004.

\bibitem[Kallus(2020)]{kallus2020deepmatch}
Nathan Kallus.
\newblock Deepmatch: Balancing deep covariate representations for causal inference using adversarial training.
\newblock In \emph{International Conference on Machine Learning}, pages 5067--5077. PMLR, 2020.

\bibitem[Kang and Schafer(2007)]{Kang_2007}
Joseph D.~Y. Kang and Joseph~L. Schafer.
\newblock Demystifying double robustness: A comparison of alternative strategies for estimating a population mean from incomplete data.
\newblock \emph{Statistical Science}, 22\penalty0 (4), nov 2007.
\newblock \doi{10.1214/07-sts227}.
\newblock URL \url{https://doi.org/10.1214%2F07-sts227}.

\bibitem[Kuleshov and Deshpande(2022)]{pmlr-v162-kuleshov22a}
Volodymyr Kuleshov and Shachi Deshpande.
\newblock Calibrated and sharp uncertainties in deep learning via density estimation.
\newblock In Kamalika Chaudhuri, Stefanie Jegelka, Le~Song, Csaba Szepesvari, Gang Niu, and Sivan Sabato, editors, \emph{Proceedings of the 39th International Conference on Machine Learning}, volume 162 of \emph{Proceedings of Machine Learning Research}, pages 11683--11693. PMLR, 17--23 Jul 2022.
\newblock URL \url{https://proceedings.mlr.press/v162/kuleshov22a.html}.

\bibitem[Kuleshov and Ermon(2017)]{kuleshov2017estimating}
Volodymyr Kuleshov and Stefano Ermon.
\newblock Estimating uncertainty online against an adversary.
\newblock In \emph{AAAI}, pages 2110--2116, 2017.

\bibitem[Kuleshov et~al.(2018)Kuleshov, Fenner, and Ermon]{kuleshov2018accurate}
Volodymyr Kuleshov, Nathan Fenner, and Stefano Ermon.
\newblock Accurate uncertainties for deep learning using calibrated regression, 2018.

\bibitem[Kull and Flach(2015)]{kullflach2015novel}
Meelis Kull and Peter Flach.
\newblock Novel decompositions of proper scoring rules for classification: Score adjustment as precursor to calibration.
\newblock In Annalisa Appice, Pedro~Pereira Rodrigues, V{\'i}tor Santos~Costa, Carlos Soares, Jo{\~a}o Gama, and Al{\'i}pio Jorge, editors, \emph{Machine Learning and Knowledge Discovery in Databases}, pages 68--85, Cham, 2015. Springer International Publishing.
\newblock ISBN 978-3-319-23528-8.

\bibitem[Lanza et~al.(2013)Lanza, Moore, and Butera]{Lanza2013-ul}
Stephanie~T Lanza, Julia~E Moore, and Nicole~M Butera.
\newblock Drawing causal inferences using propensity scores: a practical guide for community psychologists.
\newblock \emph{Am. J. Community Psychol.}, 52\penalty0 (3-4):\penalty0 380--392, December 2013.

\bibitem[Lee et~al.(2010)Lee, Lessler, and Stuart]{Lee2010-eu}
Brian~K Lee, Justin Lessler, and Elizabeth~A Stuart.
\newblock Improving propensity score weighting using machine learning.
\newblock \emph{Stat. Med.}, 29\penalty0 (3):\penalty0 337--346, February 2010.

\bibitem[Lee et~al.(2011)Lee, Lessler, and Stuart]{Lee2011-nv}
Brian~K Lee, Justin Lessler, and Elizabeth~A Stuart.
\newblock Weight trimming and propensity score weighting.
\newblock \emph{PLoS One}, 6\penalty0 (3):\penalty0 e18174, March 2011.

\bibitem[Lee and Valliant(2009)]{Lee2009Estimation}
Sunghee Lee and Richard Valliant.
\newblock Estimation for volunteer panel web surveys using propensity score adjustment and calibration adjustment.
\newblock \emph{Sociological Methods \& Research}, 37\penalty0 (3):\penalty0 319--343, 2009.
\newblock \doi{10.1177/0049124108329643}.
\newblock URL \url{https://doi.org/10.1177/0049124108329643}.

\bibitem[Lenis et~al.(2018)Lenis, Ackerman, and Stuart]{Lenis2018-so}
David Lenis, Benjamin Ackerman, and Elizabeth~A Stuart.
\newblock Measuring model misspecification: Application to propensity score methods with complex survey data.
\newblock \emph{Comput. Stat. Data Anal.}, 128:\penalty0 48--57, December 2018.

\bibitem[Li et~al.(2018)Li, Thomas, and Li]{li2018addressing}
Fan Li, Laine~E Thomas, and Fan Li.
\newblock {Addressing Extreme Propensity Scores via the Overlap Weights}.
\newblock \emph{American Journal of Epidemiology}, 188\penalty0 (1):\penalty0 250--257, 09 2018.
\newblock ISSN 0002-9262.
\newblock \doi{10.1093/aje/kwy201}.
\newblock URL \url{https://doi.org/10.1093/aje/kwy201}.

\bibitem[Lin and Zeng(2011)]{Lin2011-tv}
D~Y Lin and D~Zeng.
\newblock Correcting for population stratification in genomewide association studies.
\newblock \emph{J. Am. Stat. Assoc.}, 106\penalty0 (495):\penalty0 997--1008, September 2011.

\bibitem[Lippert et~al.(2011)Lippert, Listgarten, Liu, Kadie, Davidson, and Heckerman]{lippert2011fast}
Christoph Lippert, Jennifer Listgarten, Ying Liu, Carl~M Kadie, Robert~I Davidson, and David Heckerman.
\newblock Fast linear mixed models for genome-wide association studies.
\newblock \emph{Nature methods}, 8\penalty0 (10):\penalty0 833--835, 2011.

\bibitem[Lippert et~al.(2014)Lippert, Casale, Rakitsch, and Stegle]{lippert2014limix}
Christoph Lippert, Francesco~Paolo Casale, Barbara Rakitsch, and Oliver Stegle.
\newblock Limix: genetic analysis of multiple traits.
\newblock \emph{BioRxiv}, 2014.

\bibitem[Lopez and Gutman(2017)]{lopez2017estimation}
Michael~J. Lopez and Roee Gutman.
\newblock {Estimation of Causal Effects with Multiple Treatments: A Review and New Ideas}.
\newblock \emph{Statistical Science}, 32\penalty0 (3):\penalty0 432 -- 454, 2017.
\newblock \doi{10.1214/17-STS612}.
\newblock URL \url{https://doi.org/10.1214/17-STS612}.

\bibitem[Louizos et~al.(2017)Louizos, Shalit, Mooij, Sontag, Zemel, and Welling]{louizos2017causal}
Christos Louizos, Uri Shalit, Joris Mooij, David Sontag, Richard Zemel, and Max Welling.
\newblock Causal effect inference with deep latent-variable models.
\newblock \emph{arXiv preprint arXiv:1705.08821}, 2017.

\bibitem[Malik et~al.(2019)Malik, Kuleshov, Song, Nemer, Seymour, and Ermon]{malik2019calibrated}
Ali Malik, Volodymyr Kuleshov, Jiaming Song, Danny Nemer, Harlan Seymour, and Stefano Ermon.
\newblock Calibrated model-based deep reinforcement learning, 2019.

\bibitem[McCaffrey et~al.(2004)McCaffrey, Ridgeway, and Morral]{McCaffrey2004-cr}
Daniel~F McCaffrey, Greg Ridgeway, and Andrew~R Morral.
\newblock Propensity score estimation with boosted regression for evaluating causal effects in observational studies.
\newblock \emph{Psychol. Methods}, 9\penalty0 (4):\penalty0 403--425, December 2004.

\bibitem[Niculescu-Mizil and Caruana(2005)]{mizil2005predicting}
Alexandru Niculescu-Mizil and Rich Caruana.
\newblock Predicting good probabilities with supervised learning.
\newblock In \emph{Proceedings of the 22nd International Conference on Machine Learning}, ICML '05, page 625–632, New York, NY, USA, 2005. Association for Computing Machinery.
\newblock ISBN 1595931805.
\newblock \doi{10.1145/1102351.1102430}.
\newblock URL \url{https://doi.org/10.1145/1102351.1102430}.

\bibitem[Ning et~al.(2018)Ning, Peng, and Imai]{ning2018robust}
Yang Ning, Sida Peng, and Kosuke Imai.
\newblock Robust estimation of causal effects via high-dimensional covariate balancing propensity score, 2018.

\bibitem[Pearl et~al.(2000)]{pearl2000models}
Judea Pearl et~al.
\newblock Models, reasoning and inference.
\newblock \emph{Cambridge, UK: CambridgeUniversityPress}, 19, 2000.

\bibitem[Platt(1999)]{Platt99probabilisticoutputs}
John~C. Platt.
\newblock Probabilistic outputs for support vector machines and comparisons to regularized likelihood methods.
\newblock In \emph{ADVANCES IN LARGE MARGIN CLASSIFIERS}, pages 61--74. MIT Press, 1999.

\bibitem[Price et~al.(2006)Price, Patterson, Plenge, Weinblatt, NA, and D.]{price2006pca}
AL~Price, NJ~Patterson, RM~Plenge, ME~Weinblatt, Shadick NA, and Reich D.
\newblock Principal components analysis corrects for stratification in genome-wide association studies., 2006.

\bibitem[Price et~al.(2010)Price, Zaitlen, Reich, and Patterson]{price2010new}
Alkes~L Price, Noah~A Zaitlen, David Reich, and Nick Patterson.
\newblock New approaches to population stratification in genome-wide association studies.
\newblock \emph{Nature reviews genetics}, 11\penalty0 (7):\penalty0 459--463, 2010.

\bibitem[Robins et~al.(1994)Robins, Rotnitzky, and Zhao]{Robins1994estimation}
James Robins, Andrea Rotnitzky, and Lue~Ping Zhao.
\newblock Estimation of regression coefficients when some regressors are not always observed.
\newblock \emph{Journal of the American Statistical Association}, 89\penalty0 (427):\penalty0 846--866, 1994.
\newblock \doi{10.1080/01621459.1994.10476818}.
\newblock URL \url{https://doi.org/10.1080/01621459.1994.10476818}.

\bibitem[Rosenbaum and Rubin(1983)]{Rosenbaum1983-rp}
Paul~R Rosenbaum and Donald~B Rubin.
\newblock The central role of the propensity score in observational studies for causal effects.
\newblock \emph{Biometrika}, 70\penalty0 (1):\penalty0 41, April 1983.

\bibitem[Shafer and Vovk(2007)]{shafer2007tutorial}
Glenn Shafer and Vladimir Vovk.
\newblock A tutorial on conformal prediction, 2007.

\bibitem[Smith et~al.(2020)Smith, Maringe, Rachet, Mansournia, Zivich, Cole, and Luque-Fernandez]{smith2020tutorial}
Matthew~J. Smith, Camille Maringe, Bernard Rachet, Mohammad~A. Mansournia, Paul~N. Zivich, Stephen~R. Cole, and Miguel~Angel Luque-Fernandez.
\newblock Tutorial: Introduction to computational causal inference using reproducible stata, r and python code, 2020.

\bibitem[St{\"u}rmer et~al.(2007{\natexlab{a}})St{\"u}rmer, Schneeweiss, Rothman, Avorn, and Glynn]{Sturmer2007-ko}
Til St{\"u}rmer, Sebastian Schneeweiss, Kenneth~J Rothman, Jerry Avorn, and Robert~J Glynn.
\newblock Performance of propensity score calibration--a simulation study.
\newblock \emph{Am. J. Epidemiol.}, 165\penalty0 (10):\penalty0 1110--1118, May 2007{\natexlab{a}}.

\bibitem[St{\"u}rmer et~al.(2007{\natexlab{b}})St{\"u}rmer, Schneeweiss, Rothman, Avorn, and Glynn]{Sturmer2007-mf}
Til St{\"u}rmer, Sebastian Schneeweiss, Kenneth~J Rothman, Jerry Avorn, and Robert~J Glynn.
\newblock Performance of propensity score calibration--a simulation study.
\newblock \emph{Am. J. Epidemiol.}, 165\penalty0 (10):\penalty0 1110--1118, May 2007{\natexlab{b}}.

\bibitem[Tan(2019)]{tan2020regularized}
Z~Tan.
\newblock {Regularized calibrated estimation of propensity scores with model misspecification and high-dimensional data}.
\newblock \emph{Biometrika}, 107\penalty0 (1):\penalty0 137--158, 12 2019.
\newblock ISSN 0006-3444.
\newblock \doi{10.1093/biomet/asz059}.
\newblock URL \url{https://doi.org/10.1093/biomet/asz059}.

\bibitem[Tan(2017)]{tan2017regularized}
Zhiqiang Tan.
\newblock Regularized calibrated estimation of propensity scores with model misspecification and high-dimensional data, 2017.

\bibitem[Tan and Sun(2020)]{tan2020RCAL}
Zhiqiang Tan and Baoluo Sun.
\newblock \emph{RCAL: Regularized calibrated estimation}, 2020.
\newblock R package version 2.0.

\bibitem[Van Der~Laan et~al.(2023)Van Der~Laan, Ulloa-Perez, Carone, and Luedtke]{Van_Der_Laan2023}
Lars Van Der~Laan, Ernesto Ulloa-Perez, Marco Carone, and Alex Luedtke.
\newblock Causal isotonic calibration for heterogeneous treatment effects.
\newblock In \emph{Proceedings of the 40th International Conference on Machine Learning}, volume 202 of \emph{Proceedings of Machine Learning Research}, pages 34831--34854. PMLR, 2023.

\bibitem[VanderWeele(2006)]{VanderWeele2006-yf}
Tyler VanderWeele.
\newblock The use of propensity score methods in psychiatric research.
\newblock \emph{Int. J. Methods Psychiatr. Res.}, 15\penalty0 (2):\penalty0 95--103, June 2006.

\bibitem[Vovk et~al.(2005)Vovk, Takemura, and Shafer]{vovk2005defensive}
Vladimir Vovk, Akimichi Takemura, and Glenn Shafer.
\newblock Defensive forecasting.
\newblock In \emph{Proceedings of the Tenth International Workshop on Artificial Intelligence and Statistics, {AISTATS} 2005, Bridgetown, Barbados, January 6-8, 2005}, 2005.
\newblock URL \url{http://www.gatsby.ucl.ac.uk/aistats/fullpapers/224.pdf}.

\bibitem[Wang and Blei(2019)]{wang2019blessings}
Yixin Wang and David~M Blei.
\newblock The blessings of multiple causes.
\newblock \emph{Journal of the American Statistical Association}, 114\penalty0 (528):\penalty0 1574--1596, 2019.

\bibitem[Wasserman(2004)]{wasserman2004all}
Larry Wasserman.
\newblock \emph{Nonparametric Curve Estimation}, pages 303--326.
\newblock Springer New York, New York, NY, 2004.
\newblock ISBN 978-0-387-21736-9.
\newblock \doi{10.1007/978-0-387-21736-9_20}.
\newblock URL \url{https://doi.org/10.1007/978-0-387-21736-9_20}.

\bibitem[Weir and Cockerham(1984)]{weir1984estimating}
Bruce~S Weir and C~Clark Cockerham.
\newblock Estimating f-statistics for the analysis of population structure.
\newblock \emph{evolution}, pages 1358--1370, 1984.

\bibitem[Wilhelm(2017)]{florian}
Florian Wilhelm.
\newblock Causal inference and propensity score methods.
\newblock \url{https://florianwilhelm.info/2017/04/causal_inference_propensity_score/}, 2017.
\newblock URL \url{https://florianwilhelm.info/2017/04/causal_inference_propensity_score/}.

\bibitem[Xu et~al.(2010)Xu, Ross, Raebel, Shetterly, Blanchette, and Smith]{Xu2010-jv}
Stanley Xu, Colleen Ross, Marsha~A Raebel, Susan Shetterly, Christopher Blanchette, and David Smith.
\newblock Use of stabilized inverse propensity scores as weights to directly estimate relative risk and its confidence intervals.
\newblock \emph{Value Health}, 13\penalty0 (2):\penalty0 273--277, March 2010.

\bibitem[Xu and Yadlowsky(2022)]{yadlowsky2022calibrationerror}
Yizhe Xu and Steve Yadlowsky.
\newblock Calibration error for heterogeneous treatment effects.
\newblock In Gustau Camps-Valls, Francisco J.~R. Ruiz, and Isabel Valera, editors, \emph{Proceedings of The 25th International Conference on Artificial Intelligence and Statistics}, volume 151 of \emph{Proceedings of Machine Learning Research}, pages 9280--9303. PMLR, 28--30 Mar 2022.
\newblock URL \url{https://proceedings.mlr.press/v151/xu22c.html}.

\bibitem[Yu et~al.(2006)Yu, Pressoir, Briggs, Bi, Yamasaki, Doebley, McMullen, Gaut, Nielsen, Holland, et~al.]{yu2006unified}
Jianming Yu, Gael Pressoir, William~H Briggs, Irie~Vroh Bi, Masanori Yamasaki, John~F Doebley, Michael~D McMullen, Brandon~S Gaut, Dahlia~M Nielsen, James~B Holland, et~al.
\newblock A unified mixed-model method for association mapping that accounts for multiple levels of relatedness.
\newblock \emph{Nature genetics}, 38\penalty0 (2):\penalty0 203--208, 2006.

\bibitem[Zhao(2017)]{zhao2017covariate}
Qingyuan Zhao.
\newblock Covariate balancing propensity score by tailored loss functions, 2017.

\bibitem[Zubizarreta(2015)]{zubizarrata2015jose}
José~R. Zubizarreta.
\newblock Stable weights that balance covariates for estimation with incomplete outcome data.
\newblock \emph{Journal of the American Statistical Association}, 110\penalty0 (511):\penalty0 910--922, 2015.
\newblock \doi{10.1080/01621459.2015.1023805}.
\newblock URL \url{https://doi.org/10.1080/01621459.2015.1023805}.

\end{thebibliography}
% \input{src/checklist}
% \newpage
\onecolumn

\title{Calibrated Propensity Scores for Causal Effect Estimation (Supplementary Material)}
\maketitle
% \aistatstitle{Calibrated Propensity Scores for Causal Effect Estimation (Supplementary Material and Code)}
\appendix

%Please find our code at this anonymous link \href{https://anonymous.4open.science/r/CalibratedPropensitiesDemo-0529/README.md}{https://anonymous.4open.science/r/CalibratedPropensitiesDemo-0529/README.md}

\section{COMPARING UNCERTAINTY CALIBRATION WITH OTHER NOTIONS OF CALIBRATION FOR CAUSAL EFFECT ESTIMATION}
\label{apdx:comparison-with-related-work}
Since true propensity model is often unknown in observational studies, the model we use to learn it is likely misspecified. Different parametric and non-parametric models have been proposed to learn propensity scores~\citep{McCaffrey2004-cr, hirano2003efficient, Imbens2004-uy, Lee2010-eu}. Various strategies have been proposed to improve (and calibrate) the propensity score weights. 
\paragraph{Trimming and Overlapping Weights.} When using inverse propensity score weights in causal effect estimators, small deviations in propensity score values can cause large errors in treatment effect estimation. Hence, several strategies have been proposed to trim the extreme propensity score weights~\citep{crump2009dealing, li2018addressing, li2018balancing}. Trimming is known to introduce bias in causal effect estimates as we may lose the information on the magnitude of propensity scores from units that correspond to differences in covariate distributions. It is hard to determine the optimal trimming threshold upfront without sufficient knowledge of the observational study. This problem becomes more pronounced as we increase the complexity of the problem (e.g., multiple treatments~\citep{lopez2017estimation}). Calibration, on the other hand, does not throw away the information contained within propensity scores weights below an arbitrarily chosen threshold. At the same time, it ensures that we do not produce propensity scores lower than the true propensity score (Theorem~\ref{variance-reduction}). 
 Overlapping weights avoid extreme propensity weights by modifying the target population to include units that are more likely to obtain either of the binary treatments\citep{li2018balancing}, while uncertainty calibrated propensities do not need to modify the target population. %Thus, calibrated propensities can be used to estimate causal effects for various target populations, e.g., average treatment effect (ATE), average treatment effect of the treated (ATT) and average treatment effect for the overlapping population (ATO). 

\paragraph{High-dimensional Covariates.} Propensity score models also become unstable and show high variance when covariates are high-dimensional.  When performing a direct adjustment of confounding without propensity scores, the estimation problem becomes more complex as the number of covariates increases (e.g., insufficient number of units to estimate outcome reliably for each combination of covariates). In our experiments with genome-wide association studies, we show that simple propensity models can be used for causal effect estimation with high-dimensional covariates through uncertainty calibration. Thus, calibrating propensities can allow us to estimate causal effects with simple (and potentially mis-specified) propensity score models when applying g-computation is infeasible due to high dimensionality.

\paragraph{Covariate Balancing Calibration.} Since the true propensity model is not known, researchers often modify and refit the propensity score model until covariate balance is achieved. Several techniques have been proposed to avoid this cyclic procedure and obtain covariate balance during optimization of the propensity weights ~\citep{hainmueller_2012, Imai2014-wi, ning2018robust, zhao2017covariate, zubizarrata2015jose, Chan2016-lf}.  Covariate balancing calibration is based on this idea and it solves an optimization problem such that we find weights that balance any averaged function of the covariates in treatment and control groups~\citep{benmichael2021balancing}. While these approaches show theoretical and empirical success in improving causal effect estimation, choices such as setting the appropriate balance conditions within the optimization problem require substantial knowledge of the observational study.  Weights from the true propensity score are a solution to these balancing conditions. However, designing appropriate covariate balancing conditions becomes harder as the dimensionality of the covariates increases. This is more challenging in the presence of covariate interactions (e.g., certain combinations of covariates representing socio-economic variables make an individual more likely to take up smoking as a treatment variable) and continuous covariates~\citep{benmichael2021balancing}. Uncertainty calibration of a potentially misspecified propensity score model does not change the base model optimization procedure and  is simpler to implement on high-dimensional covariates. Thus, it can be more effective when we do not have enough information on the observational study to calibrate (optimize) using appropriate covariate balancing conditions. 

\paragraph{Causal Isotonic Calibration.} ~\citet{Van_Der_Laan2023} propose causal isotonic calibration to improve the estimation of heterogeneous treatment effects (HTEs). Their work enforces a different notion of calibration on the HTE prediction: The average HTE of units with a given predicted HTE is equal to the shared predicted value. The goal of their work is to ensure more directly that the predicted HTE outcome is reliable for different sub-groups of the population. \citet{yadlowsky2022calibrationerror} propose a technique to compute the calibration error while estimating heterogeneous treatment effects (HTEs) following this definition of calibration. Our work, on the other hand, calibrates the uncertainty outcome of the propensity score model that weighs the treated and control outcomes to achieve covariate balance. Our definition of calibration ensures that the number of units receiving treatment, given X \% predicted probability of receiving treatment, is equal to X \%. Although both calibration methods can be implemented using isotonic regression (with/without cross-validation splits to train the recalibrator), the calibration guarantees are different. Our definition ensures that we avoid extreme propensity weights while balancing covariates and improve the error bounds on causal effect estimates. Applying our method to calibrate propensity scores in HTE estimation could be an interesting way to reduce the issue with extreme propensity weights while performing causal isotonic regression~\citep{Van_Der_Laan2023} (e.g., in the case of high-dimensional/complex covariates). Although we only present results with the ATE metric in our main paper, our method can be applied to HTE estimation. Table~\ref{table:toy-expr-pehe} in Appendix~\ref{apdx:additional-experiments} demonstrates that propensity score calibration also improves HTE estimation consistently in the drug effectiveness experiment from Table~\ref{table:toy-expr} in the main paper.  It is also possible to apply our method independently to avoid extreme propensity scores when estimating the calibration error as proposed by ~\citet{yadlowsky2022calibrationerror}.

% While causal isotonic regression~\citep{Van_Der_Laan2023} performs calibration of HTEs more directly, it requires the propensity score model or outcome model to be sufficiently accurate for their approach to work. Ensuring this is difficult in practice, especially when we do not know the true treatment assignment mechanism in observational studies. Our uncertainty calibration method can be applied to mis-specified propensity models (possibly producing extreme weights) as demonstrated in several experiments. 

% They do not propose techniques to enforce their notion of calibration, but demonstrate the effectiveness of their method to compute calibration error as a metric while estimating HTEs. Our work proposes methods to enforce propensity score calibration and demonstrates consistent improvement in the accuracy of ATE estimates. Table 6 in our Appendix demonstrates that propensity score calibration also improves HTE estimation consistently in the drug effectiveness experiment

%Additionally,  However, it is hard to ensure this in practice, especially when the true treatment assignment mechanism is unknown in observational studies. Our method can be applied to mis-specified propensity score models that produce extreme weights. Hence, 

\paragraph{Other Ideas.} Other notions of propensity score calibration have been discussed in literature spanning survey sampling, missing data problems and causal inference~\citep{Lee2009Estimation, Sturmer2007-ko}. However, these methods utilize a different setup (for example, access to validation dataset with information on extra variables) to perform calibration. Our method performs calibration under absence of hidden confounding and does not require accessing extra  datasets (our calibration dataset can be generated with cross-validation). 

\section{ESTIMATORS FOR AVERAGE TREATMENT EFFECTS}

\label{apdx:estimators}
We expressed ATE as $\tau = \mathbb{E} \bigg(\frac{TY}{e(X)} - \frac{(1-T)Y}{1-e(X)}\bigg)$. Following ~\citet{smith2020tutorial}, we can simplify the following term
\begin{align*}
\mathbb{E} \bigg[\frac{TY}{e(X)}\bigg] &= \mathbb{E} [\mathbb{E}\bigg(\frac{TY}{e(X)}| T, X\bigg)] \\
&= \mathbb{E} [\bigg(\frac{T \mathbb{E}(Y|T, X)}{e(X)}\bigg)] \\
&= \mathbb{E} [\bigg(\frac{T \mathbb{E}(Y|T=1, X)}{e(X)}\bigg)] \\
&= \mathbb{E} [\mathbb{E}\bigg(\frac{T \mathbb{E}(Y|T=1, X)}{e(X)}|X \bigg)] \\
&= \mathbb{E} [\bigg(\frac{\mathbb{E}(Y|T=1, X) P(T=1|X)}{e(X)} \bigg)] \\
&= \mathbb{E} [\mathbb{E} (Y|T=1, X)].
\end{align*}
Similarly, 
\begin{align*}
\mathbb{E} \bigg[\frac{(1-T)Y}{1-e(X)}\bigg] &= \mathbb{E} [\mathbb{E} (Y|T=0, X)]. 
\end{align*}

Thus, we can show that ATE is indeed equivalent to $ \mathbb{E} \bigg(\frac{TY}{e(X)} - \frac{(1-T)Y}{1-e(X)}\bigg)$. 

Due to sensitivity of the IPTW estimator toward mis-specification of propensity score model, ~\citet{Robins1994estimation} propose doubly robust Augmented Inverse Propensity Weighted (AIPW) estimator for ATE. The AIPW estimate is asymptotically unbiased when either the treatment assignment (propensity) model or the outcome model is well-specified. %, but this assumption is rarely satisfied in real world. 

We define the outcome model as $f(X=x, T=t)$ to approximate the outcome $Y[X=x, T=t]$ as defined in Section~\ref{sec:background}.

With this, we define the AIPW estimator as 
\begin{align*}
    \hat{\tau} &= \frac{1}{n}\sum_{i=1}^n \Bigg[f(X_i, T=1) - f(X_i, T=0) + \frac{T_i (Y_i-f(X_i, T=1))}{e(X_i)} - \frac{(1-T_i)(Y_i-f(X_i, T=0))}{1-e(X_i)}\Bigg] 
\end{align*}
\section{ADDITIONAL DETAILS ON THE CALIBRATION ALGORITHM}
\label{apdx:additional-details-on-cal-algorithm}
Algorithm~\ref{alg:cal_prop_scoring} depends linearly on the number of data-splits created (training set and calibration set) in addition to the time-complexity of training the propensity model $Q(T|X)$ and recalibrator (Algorithm~\ref{alg:recalibrate}). The time complexity will also depend on an additive term corresponding to computing  $R \circ Q$ for all data-points in dataset $\mathcal{D}$. Space complexity depends linearly on the size of dataset $\mathcal{D}$ together with additive terms for model size of $Q(T|X)$ and $R$. 

\paragraph{Designing the Recalibration Method.} When the treatments are binary, we can choose between isotonic regression and logistic regression as the recalibrator. Since isotonic regression is prone to overfitting, we prefer to use logistic regression when the calibration dataset size is small (e.g., <1000 data points). %The choice of proper loss function is dependent on the choice of recalibrator. For example, log loss is an appropriate proper loss function for learning the logistic regression recalibrator model.
Leave-one-out cross-validation splits could be useful to generate the calibration dataset when the dataset size is small. When moving to the multiple treatment/ continuous treatment setup, designing the recalibrator may involve more choices (for example, we can have a simple neural network as a recalibrator in the case of continuous treatments). Using a separate cross-validation dataset would help select these hyperparameters.

\paragraph{Cross-validation Splits. }
The requirement to allocate a separate calibration dataset may reduce the size of dataset available for training the propensity score model $Q(T|X)$. Hence, we can use cross-validation splits in the dataset to calibrate a propensity score model. To implement this approach, we divide our dataset D into $k$ partitions ${S_1, S_2,..,S_k}$. For each dataset split $S_k$, we train the propensity score model $Q_k(T|X)$ on $S_k$ and and generate parts of recalibrator training dataset (as defined in Algorithm 2) as $C_k = \{Q_k(x), y | x, y \in D - S_k\}$. After this, we can take a union over all $C_k$ to generate the complete recalibrator training dataset. This allows us to use the entire available dataset for training the propensity score model as well as the recalibrator. This can be useful especially when the available dataset size is small. In our experiments, we have used leave-one-out cross-validation splits (thus, each partition $S_k$ is of size n-1 where n is the size of dataset D).

\section{DRUG EFFECTIVENESS SIMULATIONS}
The covariates contain gender ($x_1$), age ($x_2$) and disease severity ($x_3$), while treatment ($t$) corresponds to administration of drug. Outcome ($y$) is the time taken for recovery of patient. 

We simulate the covariates as
\begin{align*}
    x_1 \sim \text{Bernoulli}(0.5) && x_2 \sim \text{Gamma}(\alpha=8, \beta=4) && x_3 \sim \text{Beta}(\alpha=3, \beta=1.5).
\end{align*}
The outcome is simulated as 
$$y \sim \text{Poisson}(2+0.5 x_1+0.03 x_2+2 x_3-t).$$
The treatment $t$ is assigned on the basis of the covariates age, gender and severity of disease defined above. The simulations differ in their treatment assignment functions, which are described as follows
\begin{enumerate}
    \item Simulation A: If $(x_1=1)$, set $ t=(x_2 > 45)$ else set $t=(x_3 > 0.3).$
    \item Simulation B: If $(x_1=1)$, set $ t=(x_3 > 0.3)$ else set $t=(x_2 > 40).$
    \item Simulation C: If $x_2 > 50 \text{ AND } x_3>0.7$ then set $t=1$ else $t=0$.
    \item Simulation D: If $x_2 > 50 \text{ XOR } x_3>0.7$ then set $t=1$ else $t=0$.
\end{enumerate}
For a linear model predicting treatment given covariates, Simulation C is easier to learn as compared to A, B and D. 

Table~\ref{apdx:table:comp-basemodels} works with a slightly modified simulation D, where the treatment is set to 1 with probability of 0.99 when the XOR condition is true (otherwise 0), while it is set to 0 with probability 0.99 when the condition is false. 

\paragraph{Experimental Setup.} We model the outcome using random forests such that the covariates and treatment is taken as input. Logistic regression is used as the propensity score model and the inverse propensity scores are used to weigh each sample while training the outcome model. We use isotonic regression as the recalibrator. % and we use 10 cross-validation splits to generate the calibration dataset.
The treatment effect is expressed as the ratio $\mathbb{E}(Y(1))/\mathbb{E}(Y(0))$, where $Y(T)$ is the potential outcome $Y$ obtained by setting treatment to $T$. The outcome is time taken by the patient to make full recovery from the disease. We use 10 cross-val splits to generate the recalibration dataset. 

The trimming baseline clips propensity weights to threshold of 0.001. Thresholds of 0.001-0.01 are applied commonly when using causal effect estimators based on inverse propensities. 

The experiments were run on a laptop with 2.8GHz quad-core Intel i7 processor. 

In Figure~\ref{apdx:drug-effectiveness}, we see that the calibration curve of propensity score model gets closer to the diagonal after applying recalibration.
\begin{figure}[H]
\centering
\vspace{1 cm}
\includegraphics[scale=0.30]{images/calib_curve_simA.png}
\includegraphics[scale=0.30]{images/calib_curve_simB.png}
\includegraphics[scale=0.30]{images/calib_curve_simC.png}
\includegraphics[scale=0.30]{images/calib_curve_simD.png}

\caption{Calibration of propensity score model for Drug Effectiveness Study.} 
% \vspace{-1cm}
\label{fig:apdx:calib_curve_simA}
\end{figure}
\label{apdx:drug-effectiveness}
\section{UNSTRUCTURED COVARIATES EXPERIMENT}

\label{apdx:unstructured-covars}
Following ~\citet{louizos2017causal}, we generate a synthetic observational dataset consisting of binary variables $X, T, Y \sim \mathbb{P}$, such that
\begin{align*}
    \mathbb{P}(Z =1) = \mathbb{P}(Z=0) = 0.5 && 
    \mathbb{P}(X=1|Z=1) = 0.3 &&  \mathbb{P}(X=1|Z=0) = 0.1  \\
    \mathbb{P}(T=1|Z=1) = 0.4 &&   \mathbb{P}(T=1|Z=0) = 0.2 &&
    Y = T \oplus Z. \\    
\end{align*}
~\citet{louizos2017causal} show that the true ATE under this simulation is zero. The presented results include propensity weight trimming by threshold of 0.01. %We would like to note that the presence of hidden confounder $Z$ implies that ignorability is not satisfied in this experiment. 

The simulation generation as well as ATE estimation experiments were done on a laptop with 2.8GHz quad-core Intel i7 processor. 
\section{SIMULATED GWAS DATASETS}

\label{apdx:sim-gwas}
We have $N$ individuals and $M$ number of total SNPs for each individual. Thus, we need to simulate a SNP matrix $G \in \{0, 1\}^{N \times M}$ and an outcome vector $Y \in \mathbb{R}^N$. We also have a matrix of confounding variables $Z \in \mathbb{R}^{N \times D}$ for these $N$  individuals. We do not observe the confounding variables. Following \citet{wang2019blessings}, we generate the following genotype simulations. 

To generate the SNP matrix, we generate an allele frequency matrix $F \in \mathbb{R}^{N \times M}$ such that $F = S\Gamma^\top, $ where $S \in \mathbb{R}^{N \times D}$ encodes genetic population structure and $\Gamma \in \mathbb{R}^{M \times D}$ maps how structure affects alleles. 

Thus, $g_{ij} \sim \text{Binomial}(1, F_{ij})$. 

The outcome is modeled as 
$ Y = \beta^T G + \alpha^T Z + \epsilon,$
where $\beta$ is the vector of treatment effects for each SNP, $\alpha$ is the vector of coefficients corresponding to the hidden confounders in $Z$ and $\epsilon$ is noise distributed independently as a Gaussian. 

We simulate a high signal-to-noise ratio while simulating outcomes by replacing $\lambda_i = (\alpha^T Z)_i$ as  
\begin{align*}
    \lambda_i \leftarrow \Bigg[\frac{s.d.\{\sum_{j=1}^{m}\beta_jg_{ij}\}_{i=1}^N}{\sqrt{\nu_{gene}}}\Bigg]\Bigg[\frac{\sqrt{\nu_{conf}}}{s.d.\{\lambda_i\}_{i=1}^N}\Bigg]\lambda_i \\
    \epsilon_i \leftarrow \Bigg[\frac{s.d.\{\sum_{j=1}^{m}\beta_jg_{ij}\}_{i=1}^N}{\sqrt{\nu_{gene}}}\Bigg]\Bigg[\frac{\sqrt{\nu_{noise}}}{s.d.\{\epsilon_i\}_{i=1}^n}\Bigg]\epsilon_i,
\end{align*}
where $\nu_{gene} = 0.4, \nu_{conf} = 0.4,$ and $\nu_{noise} = 0.2$.

Below, we reproduce the simulation details as described by \citet{wang2019blessings}. $\Gamma$ and $S$ are simulated in different ways to generate the following datasets. 

\begin{enumerate}
    \item \textbf{Spatial Dataset}: The matrix $\Gamma$ was generated by sampling $\gamma_{ik} \sim 0.9 \times \text{Uniform}(0,0.5)$ ,
for $k = 1,2$ and setting $\gamma_{ik} = 0.05$. The first two rows of S correspond to coordinates for each individual on the unit square and were set to be independent and identically distributed samples from Beta$(\alpha, \alpha), \alpha = 0.1, 0.3, 0.5,$ while the third row of $S$ was set to be 1, i.e. an intercept. As $\alpha \implies 0$, the individuals are placed closer to the corners of the unit square, while when $\alpha = 1$, the individuals are distributed uniformly. 
    \item \textbf{Balding-Nichols Model (BN)}: Each row i of $\Gamma$ has three independent and identically distributed draws taken from the Balding- Nichols model: $\gamma_{ik} \sim BN(p_i, F_i)$, where $k \in {1,2,3}$. The pairs $(p_i,F_i)$ are computed by randomly selecting a SNP in the HapMap data set, calculating its observed allele frequency and estimating its FST value using the Weir \& Cockerham estimator~\citep{weir1984estimating}. The columns of $S$ were Multinomial(60/210,60/210,90/210), which reflect the subpopulation proportions in the HapMap dataset. 
    \item \textbf{1000 Genomes Project (TGP)}~\citep{1000_Genomes_Project_Consortium2015-wg}: The matrix $\Gamma$ was generated by sampling $\gamma_{ik} \sim 0.9 \text{Uniform} \times (0,0.5)$ ,
for $k = 1,2$ and setting $\gamma_{ik} = 0.05$. In order to generate $S$, we compute the first two principal components of the TGP genotype matrix after mean centering each SNP. We then transformed each principal com- ponent to be between (0,1) and set the first two rows of $S$ to be the transformed principal components. The third row of $S$ was set to 1, i.e. an intercept.
    \item \textbf{Human Genome Diversity Project (HGDP)}~\citep{hgdp, hgdp2020}: Same as TGP but generating S with the HGDP genotype matrix.
\end{enumerate}

These simulations and the ATE estimation experiments were all done on a laptop with 2.8GHz quad-core Intel i7 processor.  The presented results include propensity weight trimming by threshold of 0.01 (after applying a possible calibration step).

\section{ADDITIONAL EXPERIMENTAL RESULTS}

\label{apdx:additional-experiments}

For the Drug Effectiveness simulations, Table~\ref{apdx:table:comp-basemodels}, we  compare a range of base propensity score models where the true treatment assignment function is non-linear logical XOR (Appendix ~\ref{apdx:drug-effectiveness}). We see the benefits of calibration across varying degrees of mis-specification in the base model. After calibration, non-linear MLP and SVM (RBF) show the best $\varepsilon_{ATE}$, while mis-specified linear models like logistic regression also show consistent reduction in $\varepsilon_{ATE}$. We observe a greater reduction in bias ($\varepsilon_{ATE}$) with lowering ECE.

Table~\ref{table:toy-expr-pehe} extends Table~\ref{table:toy-expr} with the PEHE metric on all the simulation settings. 

For the GWAS experiments, we provide a complete table of dataset simulations and  acomparison against different base propensity models in Table~\ref{table:apdx:gwas-basic} and Table~\ref{table:apdx: gwas-comp} respectively.  %Table~\ref{apdx:table:calib_nv_vs_lr} also shows the comparison of calibrated naive bayes with logistic regression with both IPTW and AIPW estimators. 

\begin{table*}[ht]
\caption{Recalibrating the Output of Propensity Score Model Results in Lower Error in Estimating Causal Effects. Reduction in ECE ($\Delta (ECE)$) implies that the calibration of the model improves with our technique. Results consisting of  PEHE are averaged over 10 experimental repetitions and braces contain the standard error.}
%The baselines consist of weighing with plain propensities (Plain), trimmed propensities (Trim), stabilized weights (SW) and covariate balancing (Cov. Bal.).
% \vspace{-0.6cm}
% \caption{Recalibrating Propensities. True ATE in all the simulations below is 0.368.}
%Reduction in ECE ($\Delta (ECE)$) implies that the calibration of the model improves with our technique. Results consisting of  $\varepsilon_{ATE}$ are averaged over 10 experimental repetitions and braces contain the standard error.
\small
\centering

\begin{tabular}{llcccc}
\toprule
% 
%&{Setting} & \multicolumn{4}{c}{$\varepsilon_{ATE}$} \\%& \multicolumn{4}{c}{PEHE} \\
 & Setting & Sim A & Sim B & Sim C & Sim D \\%& Sim A & Sim B & Sim C & Sim D \\
\midrule
 & Naive & 0.263 (0.002) & 0.075 (0.002) & 0.105 (0.002) & 0.103 (0.003)\\
% & T learner & 0.098 (0.005) & 0.200 (0.001) & 0.153 (0.002) &  0.058 (0.006) & 0.017 (0.001) & 0.062 (0.001) & 0.047 (0.001) & 0.035 (0.001)\\
\midrule
 & Plain propensities & 0.149 (0.024) & 0.068 (0.001) & 0.052 (0.001) & \textbf{0.031 (0.001)}\\
 & Trimmed ~\citep{Lee2011-nv} & 0.245 (0.004) & 0.067 (0.001) & 0.046 (0.001) & \textbf{0.031 (0.001)}\\
& Stablized Wt~\citep{Xu2010-jv} & 0.195 (0.013) & 0.076 (0.002) & 0.114 (0.004) & 0.112 (0.005)\\
\midrule
 & Covariate Balancing~\citep{tan2020regularized} &\multirow{1}{*}{0.280 (0.003)} & \multirow{1}{*}{\textbf{0.056 (0.001)}} & \multirow{1}{*}{0.050 (0.003)} & \multirow{1}{*}{0.107 (0.007)}\\
% & \citet{tan2020regularized} & & & & & & & & \\
\midrule
 & Calibrated (Ours) & 0.047 (0.010) & \textbf{0.057 (0.001)} & \textbf{0.042 (0.001)} & \textbf{0.032 (0.001)}\\
  & Calibrated + Trimmed & 0.049 (0.010) & \textbf{0.057 (0.001)} & \textbf{0.042 (0.001)} &  \textbf{0.032 (0.001)} \\
 & Calibrated + Stablized Wt & \textbf{0.030 (0.007)} & \textbf{0.057 (0.001)} & \textbf{0.042 (0.001)} &  0.033 (0.001)\\
 \midrule
 & $\Delta(ECE)$ & 0.010 (0.001) & 0.014 (0.001) & 0.025 (0.002) & 0.019 (0.001) \\
\bottomrule
\end{tabular}
\vspace{-2mm}
\label{table:toy-expr-pehe}
\end{table*}
\begin{table}
% \begin{wraptable}{r}{5.5cm}
% \vspace{-0.8cm}
\caption{Comparison of different base propensity score models. (Sim D)}
% \hspace{0.1cm}
\vspace{0.1cm}
\centering
\small
\begin{tabular}{lccccccr}
\toprule
Base model & $\varepsilon_{ATE}$(Plain) & ECE (Plain) &$\varepsilon_{ATE}$ (Calib) & ECE (Calib)\\
\midrule
Log. Reg. & 0.031 (0.003) & 0.124 (0.001) & 0.016 (0.002) & 0.018 (0.001) \\

 MLP & 0.014 (0.005)& 0.075 (0.002) & 0.008 (0.003) & 0.012 (0.002) \\

 SVM (Linear)  & 0.032  {(0.005)} & 0.126 (0.001)
 & 0.015 (0.003) & 0.017 (0.001)
 \\
 SVM (RBF) & 0.012 (0.003) & 0.020 (0.000)  & 0.009 (0.004) & 0.011 (0.001)
\\
% Random Forests & 0.048
% (0.007) & 0.033
% (0.007) \\
Adaboost & 0.039 (0.003) & 0.296 (0.001) 
 & 0.022 (0.004) & 0.037 (0.008) \\
Naive Bayes & 0.022 (0.004) & 0.146 (0.001) & 0.017 (0.003) & 0.016 (0.002) \\
%  Decision Tree & 0.506 (0.003) & 0.000
% (0.000)
%  & 0.504
% (0.003) & 0.000
% (0.000)\\

\bottomrule

\end{tabular}
\vspace{-0.2cm}
\label{apdx:table:comp-basemodels}
% \end{wraptable}
\end{table}

\begin{table}[H]
% \vspace{-0.8cm}
\caption{GWAS with calibrated propensities. We compare IPTW and AIPW estimates using calibrated propensity scores against several standard GWAS baselines. $\varepsilon_{ATE}$ is the $l_2$ norm of difference between true and estimated marginal treatment effect vector. Under all setups, calibrated propensities improve the treatment effect estimates.}
\hspace{0.1cm}

\centering
\small
\begin{tabular}{lcccccc}
\toprule
Dataset	& Spatial & 	Spatial & 	Spatial & 	Balding & 	HGDP	& TGP \\
& ($\alpha$=0.1)& 	 ($\alpha$=0.3)&  ($\alpha$=0.5)&  Nichols& 	& \\
\midrule
Naive	& 16.23 (0.91)	& 11.76 (0.84)	& 9.81 (0.69)& 	19.25 (1.17)	& 11.82 (0.11)	&  12.24 (0.71) \\
PCA	& 9.60 (0.37)	& 9.54 (0.41)	& 9.38 (0.38) & 	14.12 (1.28) &  	11.69 (0.20) & 	10.73 (0.38) \\
FA	& 9.55 (0.34) & 9.53 (0.44) & 	9.23 (0.30) & \textbf{12.59 (1.05)}	& 11.65 (0.16)	& 10.59 (0.32) \\
LMM	 & 10.24 (0.41) & 9.58 (0.45) & \textbf{8.15 (0.40)} & \textbf{13.13 (1.09)} & \textbf{10.09 (0.35)} & \textbf{9.44 (0.57)} \\
IPTW (Calib) 	& \textbf{8.13 (0.35)} & 	\textbf{8.69 (0.56)} & 	\textbf{8.32 (0.34)}	 & \textbf{13.62 (0.68)} & 	10.86 (0.13) & 	\textbf{9.57 (0.58)} \\
IPTW (Plain) & 12.56 (1.25) & 10.22 (0.81) & 9.09 (0.48) & 14.36 (0.74) & 11.62 (0.12)	& 11.76 (0.86) \\
AIPW (Calib)	& 8.94 (0.29)	& 9.00 (0.58)& 	8.59 (0.39) & 	16.81 (1.39) & 	11.06 (0.12) & 	10.32 (0.43) \\
AIPW (Plain)	& 13.89 (0.76) & 	10.46 (0.72) & 	8.99 (0.51)	& 17.66 (1.33)	& 11.38 (0.11)	& 11.56 (0.65) \\
$\Delta_{ECE}$ & 0.022 (0.001) & 0.016 (0.007) & 0.015 (0.001)& 0.013 (0.002)& 0.011 (0.001)& 0.022 (0.001) \\
\bottomrule
\end{tabular}

\label{table:apdx:gwas-basic}
\end{table}

\begin{table}[ht]
% \vspace{-0.8cm}
\caption{Comparing propensity score models. We compare the AIPW estimate using calibrated propensities and observe reduction in error across a range of base propensity score models.}
\hspace{0.1cm}

\centering
\small
\begin{tabular}{lcccccc}
\toprule
Dataset  &  Metrics  & LR  & MLP  &  Random Forest  & Adaboost  &  NB \\
\midrule
Spatial  & $\varepsilon_{ATE}$ (plain)
 & 13.886 (0.755)
 & 17.403 (1.070)
 & 12.911 (0.612)
 & 16.234 (0.916)
 & 582.731 (64.514) \\
 ($\alpha$=0.1) &  $\varepsilon_{ATE}$ (calib)
 & 8.942 (0.287)
 & 14.661  (0.762)
 & 8.706 (0.322)
 & 8.524 (0.297)
 & 8.526 (0.472) \\
 & $\Delta_{ECE}$ 
 & 0.022 (0.001)
 & 0.072 (0.003)
 & 0.060 (0.001)
 & 0.252 (0.006)
 & 0.281 (0.002) \\
 \midrule
 Spatial 
 &  $\varepsilon_{ATE}$ (plain)
 & 10.460 (0.720)
 & 12.636 (0.730)
 & 10.578 (0.768)
 & 11.764 (0.839)
 & 400.643 (49.301) \\
 ($\alpha$=0.3)  & $\varepsilon_{ATE}$ (calib)
 & 9.000 (0.58)
 & 11.550 (0.747)
 & 9.277 (0.532)
 & 8.909 (0.549)
 & 9.121 (0.535) \\
 & $\Delta_{ECE}$
 & 0.016 (0.007)
 & 0.070 (0.003)
 & 0.063 (0.001)
 & 0.244 (0.006)
 & 0.281 (0.002) \\
 \midrule
 Spatial 
 &  $\varepsilon_{ATE}$ (plain)
 & 8.990 (0.510)
 & 10.408 (0.694)
 & 9.277 (0.518)
 & 9.814 (0.691)
 & 276.017 (24.183) \\
 ($\alpha$=0.5) &  $\varepsilon_{ATE}$ (calib)
 & 8.590 (0.390)
 & 9.728 (0.650) 
 & 8.687 (0.224)
 & 8.520 (0.286)
 & 8.592 (0.216) \\
 & $\Delta_{ECE}$
 & 0.015 (0.001)
 & 0.070 (0.002)
 & 0.065 (0.001)
 & 0.239 (0.007)
 & 0.269 (0.003) \\
 \midrule
 Balding 
 & $\varepsilon_{ATE}$ (plain)
 & 17.660 (1.330)
 & 18.282 (1.267)
 & 18.419 (1.210)
 & 19.248 (1.169)
 & 95.892 (6.350) \\
 Nichols & $\varepsilon_{ATE}$ (calib)
 & 16.810 (1.390)
 & 17.033 (1.391)
 & 16.611 (1.385)
 & 16.938 (1.367)
 & 16.833 (1.392) \\
 & $\Delta_{ECE}$
 & 0.013 (0.002)
 & 0.041 (0.002)
 & 0.052 (0.002)
 & 0.259 (0.010)
 & 0.261 (0.009) \\
 \midrule
HGDP
 &  $\varepsilon_{ATE}$ (plain)
 & 11.380 (0.110)
 & 12.358 (0.197)
 & 11.529 (0.107)
 & 11.816 (0.108)
 & 138.086 (5.086) \\
 &  $\varepsilon_{ATE}$ (calib)
 & 11.060 (0.120)
 & 11.198 (0.106)
 & 11.299 (0.143)
 & 11.070 (0.123)
 & 11.430 (0.133) \\
 & $\Delta_{ECE}$
 & 0.011 (0.001)
 & 0.069 (0.002)
 & 0.053 (0.001)
 & 0.275 (0.006)
 & 0.206 (0.003) \\
 \midrule
TGP
 & $\varepsilon_{ATE}$ (plain)
 & 11.560 (0.650)
 & 11.965 (0.754)
 & 11.677 (0.614)
 & 12.246 (0.713)
 & 87.329 (5.716)\\
 & $\varepsilon_{ATE}$ (calib)
 & 10.320 (0.430)
 & 11.530 (0.633)
 & 10.519 (0.402)
 & 10.244 (0.398)
 & 9.070 (0.316) \\
 & $\Delta_{ECE}$
 & 0.022 (0.001)
 & 0.061 (0.002)
 & 0.070 (0.002)
 & 0.204 (0.007)
 & 0.267 (0.004) \\
\bottomrule
\end{tabular}

\label{table:apdx: gwas-comp}
\end{table}

\section{THEORETICAL ANALYSIS}

\label{apdx:theoretical}
\subsection{Notation}
As described in Section~\ref{sec:background}, we are given an observational dataset $\mathcal{D}=\{(x^{(i)},y^{(i)},t^{(i)})\}_{i=1}^n$ consisting of $n$ units, each characterized by features $x^{(i)} \in \mathcal{X} \subseteq \mathbb{R}^d$, a binary treatment $t^{(i)} \in \{0,1\}$, and a scalar outcome $y^{(i)} \in \mathcal{Y} \subseteq \mathbb{R}$. 
We assume $\mathcal{D}$ consists of i.i.d.~realizations of random variables $X, Y, T \sim P$ from a data distribution $P$.
Although we assume binary treatments and scalar outcomes, our approach naturally extends beyond this setting.
The feature space $\mathcal{X}$ can be any continuous or discrete set.

\subsection{Calibration: a Necessary Condition for Propensity Scoring Models}
\label{apdx:calibration-necessary}
\begin{theorem}
 When $Q(T|X)$ is not calibrated, there exists an outcome function such that an IPTW estimator based on $Q$ yields an incorrect estimate of the true causal effect almost surely.
% For each uncalibrated model $Q(T|X)$, the set of data distributions $P$ for which an IPTW estimator yields incorrect  probabilities has measure one.
\end{theorem}
% \sd{We don't want to show that when $Q(T|X)$ is calibrated then IPTW based on $Q$ yields correct estimates..}
\begin{proof}[Example]
Consider a toy binary setting where $\mathcal{X} = \mathcal{T} = \{0,1\}, \mathcal{Y} = \{0,1\}^2$.

We set $Y = (X \oplus T, \bar{X} \oplus \bar{T}) $, $ P(T=1|X=0)=p_0,  P(T=1|X=1)=p_1$ and $P(X=1)=0.5$ such that $\oplus$ is logical `AND' and $\bar{V}$ denotes logical negation of binary variable $V$. We see that true ATE is $\tau=(0.5, -0.5)$. Let us assume that $Q(T=1|X=0) = q_0$ and $Q(T=1|X=1)=q_1$. Thus, with IPTW estimator based on $Q$, we estimate $\tau' = \mathbb{E} \bigg(\frac{TY}{Q(T=1|X)} - \frac{(1-T)Y}{1-Q(T=1|X)}\bigg) = (\frac{0.5.p_1}{q_1}, -\frac{0.5(1-p_0)}{1-q_0}).$ The treatment effect $\tau'=\tau$ only when $q_0=p_0$ and $q_1=p_1$, which is not true if $Q$ is not calibrated. Although this example allows multidimensional outcomes, this shows that we can pick an outcome function such that uncalibrated model $Q$ produces inaccurate treatment effect estimates using the IPTW estimator.
\end{proof}

\begin{proof} 
Let $\mathcal{P}$ be a space of valid probability distributions on $\mathcal{Y}$. We would like to prove that
$ \exists P'(Y|X=x, T=t) \in \mathcal{P}$ such that $$\lim_{n \rightarrow \infty} \text{Probability}(\hat{\tau}_n=\tau)=0$$ where 
\begin{itemize}
  
    \item $\tau$ is the true ATE
    \item $\hat{\tau}_n$ is the ATE estimated using IPTW estimator such that we have $n$ individuals and propensity score model is $Q(T=1|X)$
    \item The probability is taken over all propensity models $Q(T=1|X)$ such that $\exists q \in [0, 1], P(T=1|Q(T=1|X)=q) \neq q$, and all data-generating distributions $P'(Y, T, X) = P'(Y|X, T).P(T, X)$. 
\end{itemize}

Let $S_Q = \{q | \exists X \in \mathcal{X},  Q(T=1|X) = q\}$.
We partition $\mathcal{X}$ into buckets $\{B_q\}_{q \in S_Q}$ such that $B_q = \{X | Q(T=1|X)=q\}$. 

Let $\hat{\tau}(Q) = \lim_{n \rightarrow \infty} \tau_n$. 
Thus, for discrete $\mathcal{X},$ we could write 
\begin{align*}
    \hat{\tau}(Q) &= \mathbb{E}_{Y \sim P'(.|T, X); T, X \sim P}\left[\left(\frac{TY}{Q(T=1|X)} - \frac{(1-T)Y}{1-Q(T=1|X)}\right)\right]\\
    &\text{Computing expectation over $X$}\\
    &=  \sum_{X \in \mathcal{X}}\mathbb{E}_{Y \sim P'(.|T, X); T \sim P(.|X)}\left[\left(\frac{TY}{Q(T=1|X)} - \frac{(1-T)Y}{1-Q(T=1|X)}\right) P(X)\right]\\
    &\text{Computing expectation over $T$}\\
    &=\sum_{X \in \mathcal{X}}\mathbb{E}_{Y \sim P'(.|X, T=1)} \left[\left(\frac{P(T=1|X)Y}{Q(T=1|X)}\right) P(X)\right] + \sum_{X \in \mathcal{X}}\mathbb{E}_{Y \sim P'(.|X, T=0)} \left[\left(- \frac{(1-P(T=1|X))Y}{1-Q(T=1|X)}\right) P(X)\right]\\
    &=\sum_{X \in \mathcal{X}} \left(\mathbb{E}_{Y \sim P'(.|X, T=1)} \left[\left(\frac{P(T=1|X)Y}{Q(T=1|X)}\right)\right] - \mathbb{E}_{Y \sim P'(.|X, T=0)} \left[\left(\frac{(1-P(T=1|X))Y}{1-Q(T=1|X)}\right)\right] \right)P(X)\\
    &\text{Expressing the summation over $X$ differently}\\
    &=\sum_{q \in S_Q}\sum_{X \in {B_q}} \left(\mathbb{E}_{Y \sim P'(.|X, T=1)} \left[\left(\frac{P(T=1|X)Y}{Q(T=1|X)}\right)\right] - \mathbb{E}_{Y \sim P'(.|X, T=0)} \left[\left(\frac{(1-P(T=1|X))Y}{1-Q(T=1|X)}\right) \right]\right)P(X)\\
\end{align*}

Since $Q(T=1|X)$ is not calibrated, we know that $\exists q \in [0, 1], P(T=1|Q(T=1|X)=q) \neq q.$ Let us pick $q' \in S_Q$ such that $ P(T=1|Q(T=1|X)=q') \neq q'$.

We could design $P'(Y|X, T) = \mathbb{I}(Y=T.\mathbb{I}(X \in B_{q'}))/ P(X \in B_{q'})$. 

Now, we can write

\begin{align*}
    \hat{\tau}(Q)
    &=\sum_{q \in S_Q}\sum_{X \in {B_q}} \left(\mathbb{E}_{Y \sim P'(.|X, T=1)} \left[\left(\frac{P(T=1|X)Y}{Q(T=1|X)}\right)\right] - \mathbb{E}_{Y \sim P'(.|X, T=0)} \left[\left(\frac{(1-P(T=1|X))Y}{1-Q(T=1|X)}\right) \right]\right)P(X)\\
    & \text{ (Since $Y=0$ when $T=0$ or $X \notin B_{q'}$)} \\
    &= \sum_{X \in {B_{q'}}} \left(\left(\frac{P(T=1|X)P(X)}{Q(T=1|X) P(X \in B_{q'})}\right)\right) \\
    &= \sum_{X \in {B_{q'}}} \left(\left(\frac{P(T=1|X)P(X)}{q'P(X \in B_{q'})}\right)\right) \\
    &= \frac{P(T=1|X \in {B_{q'}}))}{q'}
\end{align*}
Also, for the above data-generation process, 
\begin{align*}
    \tau = \hat{\tau}(P) &= \sum_{X\in \mathcal{X}} (\mathbb{E}_{Y \sim P'(Y|X, do(T=1))}[Y] - \mathbb{E}_{Y \sim P'(Y|X, do(T=0))}[Y]) .P(X)\\
    &=\sum_{q \in S_Q}\sum_{X \in {B_q}} (\mathbb{E}_{Y \sim P'(Y|X, do(T=1))}[Y] - \mathbb{E}_{Y \sim P'(Y|X, do(T=0))}[Y]).P(X) \\
    &= \sum_{X \in {B_{q'}}} P(X) / P(X \in B_{q'}) \\
    &=  1 \\
\end{align*} Thus, 
\begin{align*}
  \lim_{n \rightarrow \infty}\text{Probability}(\tau_n = \tau) &= P(\hat{\tau}(Q) = \tau) \\
  &= \text{Probability}\left(\frac{P(T=1|X \in {B_{q'}})}{q'} = 1\right)\\  
  &= \text{Probability}\left(P(T=1|X \in {B_{q'}}) = q'\right)\\  
  &= \text{Probability}\left(P(T=1|Q(T=1|X) = q') = q'\right)\\  
  &=0,
\end{align*}
since we began with the assumption that $ P(T=1|Q(T=1|X)=q') \neq q'$. 

Please note that we could have defined a set of outcome functions that produce $Y=0$ for $X \in B_{q'}$, thus, potentially letting us compute unbiased treatment effects despite working with a miscalibrated model. However, we want our IPTW estimator to provide unbiased ATE estimates over all possible outcome functions. Here, we can see that IPTW estimator for ATE that uses a miscalibrated propensity score model cannot obtain unbiased treatment effect estimates on all possible outcome functions. 
% \sd{We could have easily defined a set of outcome functions that produce $Y=0$ for $X \in B_{q'}$, thus letting us compute accurate treatment effects despite miscalibrated model $Q(T=1|X)$, assuming $\forall q \neq q', P(T=|Q(T=1|X)=q)=q$.}

% For discrete X, assume $\mathcal{Y} = \{0, 1\}^{|\mathcal{X}|}$ such that $Y_i=X_i$. 
 
\end{proof}

% % \vk{todo: maybe complete this with some working numbers; if no space, put in appendix}

\subsection{Calibrated Uncertainties Improve Propensity Scoring Models}

\label{apdx:calibrated-uncertainties-improve-propensity}

% See Appendix~\ref{apdx:cal-improves-accuracy} for the proof. 
We define the true ATE as 
\begin{align*}
\tau &= \mathbb{E}_{y \sim P(Y=y | do(T=1))}[y] - \mathbb{E}_{y \sim P(Y=y | do(T=0))}[y]\\
&=\sum_y y(\sum_X P(Y=y | X, do(T=1)) P(X) - \sum_X P(Y=y | X, do(T=0)) P(X))\\
&=\sum_y y(\sum_X P(Y=y | X, T=1) P(X) - \sum_X P(Y=y | X, T=0) P(X))
\end{align*}

% We define the naive (biased) estimate as
% \begin{align*}
% \hat{\tau} &= \mathbb{E}_{y \sim P(Y=y |T=1)}[y] -\mathbb{E}_{y \sim P(Y=y |T=0)}[y]\\
% &= \sum_y y[\sum_X P(Y=y | X, T=1) P(X|T=1) - \sum_X P(Y=y | X, T=0) P(X|T=0)] \\
% & \approx \sum_y y[\sum_X P(Y=y | X, T=1) \frac{P(T=1|X) P(X)P(T=1)}{P(T=1)} - \sum_X P(Y=y | X, T=0) \frac{P(T=0|X) P(X)P(T=0)}{P(T=0)}] \\
% &= \sum_y y[\sum_X P(Y=y | X, T=1) {P(T=1|X) P(X)} - \sum_X P(Y=y | X, T=0) {P(T=0|X) P(X)}].
% \end{align*}
% % where $k_t$ is constant w.r.t variables X and Y. 
% % \sd{I get this additional constant in my derivation. The proof idea present here originally could get rid of the constant as we only looked at $\pi_{y, t}$ for a fixed T=t at a time and not the difference $\pi_{y, 1} - \pi_{y, 0}$. I might be missing something simple here. }
% Here, in line 3, we apply a multiplicative factor $P(T=t)$ to each term because we divide each of the two terms by the total number of samples n instead of $\sum_n \mathbb{I}(T=t)$ when computing a finite-sample formula. The `correct' Monte-Carlo estimate should divide each term by $\sum_n \mathbb{I}(T=t)$, but the one we use will divide by $n$. 

Next, recall that the finite-sample Inverse Propensity of Treatment Weight (IPTW) estimator with a model $Q(T=1|X)$ of $P(T=1|X)$ produces an estimate $\hat{\tau}_n(Q)$ of the ATE, which is computed as $$ \hat{\tau}_n(Q) = \frac{1}{n}\sum_{i=1}^n \bigg( \frac{t^{(i)}y^{(i)}}{Q(T=1|x^{(i)})} - \frac{(1-t^{(i)})y^{(i)}}{1-Q(T=1|x^{(i)})}\bigg).$$
We define $\hat{\tau}(Q)$ as the limit $lim_{n \rightarrow \infty}\hat{\tau}_n(Q)$ when the amount of data goes to infinity. Notice that we can write
$$\lim_{n \rightarrow \infty}(\hat{\tau}_n(Q)) = \hat{\tau}(Q) = \sum_y y[\pi_{y, 1}(Q) - \pi_{y, 0}(Q)]$$ where
% \sd{The statement below was taken from previous writeup but maybe there was a typo?}
% $$
% \pi_{y, t}(Q) = \sum_X P(Y=y | X, T=t) \frac{P(X|T=t)}{Q(X|T=t)} = {k'}_t. \sum_X P(Y=y | X, T=t) \frac{P(T=t|X)}{Q(T=t|X)} \frac{P(X)}{Q(X)}
% $$
% \sd{Cannot get rid of Q(X) in the denominator above? hence, I am defining this as}
% \sd{I am defining this as}
$$
\pi_{y, t}(Q) = P(T=t) \sum_X  P(Y=y | X, T=t) \frac{P(X|T=t)}{Q(T=t|X)} = \sum_X P(Y=y | X, T=t) \frac{P(T=t|X)}{Q(T=t|X)} {P(X)}
$$
We have a multiplicative term $P(T=t)$ in the above expression since we are dividing by $n$ in the finite-sample formula as opposed to $n_t$ (the number of samples with treatment $t$). In other words, in order for the finite-sample formula to be a valid Monte Carlo estimator with samples coming from $P(X|T=t)$, there needs to be an "effective adjustment factor" of $n_t / n$ (such that $(n_t / n) \cdot (1 / n_t) = (1 / n)$), and this term is $P(T=t)$ in the limit of infinite data.

Clearly, if $Q=P$ we have $\hat{\tau}(Q) = \hat{\tau}(P) = \tau$. 
If not, we can consider the error
$$E = |(\hat\tau(P) - \hat\tau(Q))|. $$

\subsubsection{Bounding the Error of Causal Effect Estimation Using Proper Losses}
\label{apdx:error-bound}
We can form a bound on $E$ as
\begin{align*}
E 
& = |[\hat\tau(P) - \hat\tau(Q)]| & \\
& = \left| \sum_y y[(\pi_{y, 1}(P) - \pi_{y, 0}(P)) - (\pi_{y, 1}(Q) - \pi_{y, 0}(Q))] \right| & \\
% & \leq  \mathbb{E}_y[|(\pi_{y, 1}(P) - \pi_{y, 0}(P)) - (\pi_{y, 1}(Q) - \pi_{y, 0}(Q))|] \\
& \leq  \sum_t \left|\sum_y y[(\pi_{y, t}(P) - \pi_{y, t}(Q)]\right| & \\
& \leq  \sum_t\sum_y[|y| |\pi_{y, t}(P) - \pi_{y, t}(Q)|] & \\
& = \sum_t  \sum_y |y| [\left|\sum_X  P(Y=y | X, T=t) {P(X)} \left(1- \frac{P(T=t|X)}{Q(T=t|X)}\right) \right|] & \\
& \leq \sum_t  \sum_y |y| [\sum_X  P(Y=y | X, T=t) {P(X)} \left| 1- \frac{P(T=t|X)}{Q(T=t|X)} \right|] & \\
& = \sum_t  \sum_y |y|.[\sum_X  P(Y=y | X, T=t) P(X) \ell_X(P_t,Q_t)^{1/2}] & \text{where } \ell_X(P_t,Q_t)=\left(1- \frac{P(T=t|X)}{Q(T=t|X)}\right)^2\\
% & = \sum_t {k'}\sum_y |y|. \mathbb{E}_{X \sim k. P(Y=y | X, T=t) P(X)} [\ell_X(P,Q)^{1/2}] & \text{Assuming ${k'} = 1/k$}\\
& = \sum_t \sum_y |y|.\mathbb{E}_{X \sim R_{y, t}} [\ell_X(P_t,Q_t)^{1/2}]& \\
\end{align*}
where $R_{t, y} \propto P(Y=y | X, T=t) P(X)$ (i.e. $R_{t, y} \sim k.P(Y=y | X, T=t) P(X)$, $k$ is constant) and $\ell_X(P,Q)$ is a type of expected Chi-Squared divergence between $P, Q$. It is a type of proper score. Thus when $P = Q$, we get zero error, and otherwise we get a bound.

% \sd{Question: Our constant $k_t = \frac{1}{P(T=t)}$ stays and changes the theorem statement in main paper slightly. Is that okay?}.

In the above derivation, we see that the expected error $|\pi_{y,t}(P) - \pi_{y,t}(Q)|$ induced by an IPTW estimator with propensity score model $Q$ is bounded as
$$|\pi_{y,t}(P) - \pi_{y,t}(Q)| \leq \mathbb{E}_{X \sim R_{y,t}}[ \ell_\chi(P_t,Q_t)^\frac{1}{2}]. $$

\subsubsection{Calibration Reduces Variance of Inverse Probability Estimators}
\label{apdx:variance-reduction}

\begin{theorem}
    Let $P$ be the data distribution, and suppose that $1 - \delta > P(T|X) > \delta$ for all $T, X$ and let $Q$ be a calibrated model relative to $P$. Then $1 - \delta > Q(T|X) > \delta$ for all $T, X$ as well.
\end{theorem}
\begin{proof}[Proof]
Suppose $Q(T=1|x) = q$ for some $x$ and $q < \delta$. 
Since $Q$ is calibrated, we have 
$P(T=1|Q(T=1|X) = q) = q <\delta$. 

However $P(T=1|x) > \delta$ for every $x$. Hence, $P(T=1|X \in A) > \delta$, for all sets $A \subseteq \mathcal{X}$. This implies that $P(T=1|Q(T=1|X) = q) > \delta$ for all $q \in [0, 1].$ 

Thus, we have a contradiction. 
\end{proof}
% \sd{Above proof from main paper is already good?}
\subsubsection{Calibration Improves Error Bounds on Causal Effect Estimate} 
\label{apdx:iptw-error-bound}
%with Accurate Propensity Models

We show that calibration strictly improves our $\ell_\chi$ bound on the IPTW error.
\begin{theorem}
\label{apdx:thrm:iptw-error-bound-lx}
    Let $\ell_1$ be the expected bound on the error of an uncalibrated IPTW estimator $Q_1$ in Corollary~\ref{corollary}, and let $\ell_2$ be the bound for $Q_2$, the recalibrated version of $Q_1$ with $\ell_\chi^{1/2}$ as the choice of loss $L$ to train the recalibrator. Then as the size of the calibration set $n \to \infty$ we have $\ell_2 \leq \ell_1$ with equality iff $Q_1 = Q_2$.
\end{theorem}
\vspace{-4mm}
% \sd{The proof refers to contents from subsequent section; can move it to a different location}
\begin{proof}[Proof]

Corollary~\ref{corollary} states that the error of an IPTW estimator with propensity score model $Q$ is bounded by $2|\mathcal{Y}| K \max_{y,t} \mathbb{E}_{R_{y,t}} \ell_\chi(P,Q)^\frac{1}{2},$ where $|y| \leq K$ for all $y\in\mathcal{Y}$, $R_{y,t} \propto P(Y=y | X, T=t) P(X)$ is a data distribution and $\ell_\chi(P,Q)= \left( 1- \frac{P(T=t|X)}{Q(T=t|X)} \right)^2$ is the $chi$-squared loss between the true propensity score and the model $Q$.

Thus, $\ell_1 = 2|\mathcal{Y}| K \max_{y,t} \mathbb{E}_{R_{y,t}} \ell_\chi(P,Q_1)^\frac{1}{2}$ and $\ell_2 = 2|\mathcal{Y}| K \max_{y,t} \mathbb{E}_{R_{y,t}} \ell_\chi(P,Q_2)^\frac{1}{2}$. Clearly, the upper bound $\ell_i$ depends on $\ell_\chi(P,Q_i)$ where $i \in \{1, 2\}$. 
%The part of $\ell_1, \ell_2$ that depends on $Q \in \{Q_1, Q_2\}$ is $L(Q, T) = \mathbb{E}_X \mathbb{E}_{T|X} \ell_\chi(Q(T=1|X), T)^{1/2}$. 

When we use Algorithm~\ref{alg:recalibrate} to perform recalibration, we obtain $Q_2 = R \circ Q_1.$ Here, we can choose the loss function  $L(Q, T) = \mathbb{E}_X \mathbb{E}_{T|X} \ell_\chi(Q(T=1|X), T)^{1/2}$. From Theorem~\ref{lem:loss}, it follows that $L(Q_2, T) = L(R \circ Q_1, T) \leq L(Q_1, T) + o(n)$ for a recalibrator $R$. 

As $n \to \infty$, $R \to B$ (Bayes optimal recalibrator; see Task~\ref{ass:density}). 

If $Q_1 \neq Q_2$, then $L(Q_2, T) \neq L(Q_1, T)$ because $L$ is strictly proper. Conversely, when $Q_1=Q_2$ clearly $\ell_1 = \ell_2$. Hence, the claim follows.
\end{proof}
Theorem~\ref{apdx:thrm:error-bound-lx} in Appendix~\ref{apdx:doubly-robust} proves a similar result for the AIPW estimator when the outcome model is inaccurate.
\subsubsection{Calibration Improves the Accuracy of Causal Effect Estimation}
\label{apdx:cal-improves-accuracy1}
Unfortunately, calibration by itself is not sufficient to correctly estimate treatment effects. For example, consider defining $Q(T|X)$ as the marginal $P(T)$: this $Q$ is calibrated, but cannot accurately estimate treatment effects.
However, if the model $Q$ is sufficiently accurate (as might be the case with a powerful neural network), calibration becomes the missing piece for an accurate IPTW estimator.

% Here and in the next section, we describe settings in which a calibrated $Q$ yields accurate IPTW estimates of causal effect. 
% Specifically, we identify two conditions that are weaker than full Bayes optimality: 
% First, we show that a separable and calibrated model yields accurate IPTW estimates; in the next section, we show that a post-hoc recalibrated model $Q'$ of a base model $Q$ can only improve the calibration bound, i.e., $\ell_\chi(P,Q') \leq \ell_\chi(P,Q) + o(n)$.

Specifically, we define
% Our first condition, 
separability, a condition which states that when $P(T|X_1) \neq P(T|X_2)$ for $X_1, X_2 \in \mathcal{X}$, then the model $Q$ satisfies $Q(T|X_1) \neq Q(T|X_2)$. Intuitively, the model $Q$ is able to discriminate between various $T$---something that might be achievable with an expressive neural $Q$ that has high classification accuracy. We show that a model that is separable and also calibrated achieves accurate causal effect estimation.
% \sd{Edit this section}

\begin{theorem}
% A model that is separable and also calibrated achieves accurate causal effect estimation
The error of an IPTW estimator with propensity model $Q$ tends to zero as $n \to \infty$ if:
% A model $Q$ achieves accurate causal effect estimation with the IPTW estimator if
\begin{enumerate}
    \item Separability holds, i.e., $\forall X_1, X_2 \in \mathcal{X}, P(T|X_1) \neq P(T|X_2) \implies Q(T|X_1) \neq Q(T|X_2)$
    \item The model $Q$ is calibrated, i.e., $\forall q \in (0, 1), P(T=1|Q(T=1|X)=q)=q$
\end{enumerate}
\end{theorem}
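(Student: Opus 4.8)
The plan is to show that the two hypotheses together force $Q(T\mid X)=P(T\mid X)$ for $P$-almost every $X$, and then to read off the conclusion from the error bound already proved. Concretely, I would decompose the error of the finite-sample estimator by the triangle inequality as $|\hat\tau_n(Q)-\tau|\le |\hat\tau_n(Q)-\hat\tau(Q)|+|\hat\tau(Q)-\tau|$, where $\hat\tau(Q)=\lim_n \hat\tau_n(Q)$. The second term is exactly the asymptotic bias $E$ studied in the Lemma and its Corollary, which is controlled by $\max_{y,t}\mathbb{E}_{R_{y,t}}\ell_\chi(P,Q)^{1/2}$, and $\ell_\chi(P,Q)=\left(1-\tfrac{P(T=t\mid X)}{Q(T=t\mid X)}\right)^2$ vanishes precisely when $Q(T\mid X)=P(T\mid X)$. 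So once I establish $Q=P$ a.s., the bias term is zero, and the first term is a pure Monte Carlo fluctuation that tends to zero as $n\to\infty$ by the law of large numbers; here the overlap assumption $0<P(T\mid X)<1$, which $Q$ inherits via the previous Theorem on variance reduction, guarantees the summands are integrable so that consistency applies.

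The heart of the argument is the claim $Q=P$, and I would prove it by partitioning $\mathcal{X}$ into the level sets $B_q=\{X:Q(T=1\mid X)=q\}$ indexed by $q$ in the range of $Q$; every point lies in exactly one such set. Fix a level $q$. All points of $B_q$ share the same forecast, so the contrapositive of separability — namely $Q(T\mid X_1)=Q(T\mid X_2)\implies P(T\mid X_1)=P(T\mid X_2)$ — shows that $P(T=1\mid X)$ is constant on $B_q$; call this constant $p_q$. By the tower property, $P(T=1\mid X\in B_q)=\mathbb{E}[P(T=1\mid X)\mid X\in B_q]=p_q$. But the event $\{X\in B_q\}$ is exactly $\{Q(T=1\mid X)=q\}$, so calibration gives $P(T=1\mid X\in B_q)=q$. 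Hence $p_q=q$, i.e. $P(T=1\mid X)=q=Q(T=1\mid X)$ for every $X\in B_q$. Letting $q$ range over the whole range of $Q$ yields $P(T=1\mid X)=Q(T=1\mid X)$ everywhere, and the binary structure gives the $T=0$ identity for free. Substituting $Q=P$ into $\hat\tau(Q)=\sum_y y[\pi_{y,1}(Q)-\pi_{y,0}(Q)]$ then gives $\hat\tau(Q)=\hat\tau(P)=\tau$, closing the bias term.

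The main obstacle I anticipate is the measure-theoretic bookkeeping when $\mathcal{X}$ is continuous: individual level sets $B_q$ may have probability zero, so the phrases ``$P(T=1\mid X)$ is constant on $B_q$'' and ``$P(T=1\mid X\in B_q)$'' must be interpreted through a regular conditional distribution of $X$ given the $\sigma$-algebra $\sigma(Q(T=1\mid X))$, and the final identity should be read as holding $P$-almost surely. In the discrete setting the partition is literal and the chain of equalities above is immediate. In general I would fix $\sigma(Q(T=1\mid X))$-measurable versions of both calibration and separability and verify that the ``constant on a level set'' step survives as an almost-everywhere statement, which is all that the integral bound in the Corollary requires; establishing this almost-sure version carefully, rather than the pointwise one, is the step I expect to demand the most care.
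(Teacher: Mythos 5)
Your proof is correct and follows essentially the same route as the paper's: both partition $\mathcal{X}$ into the level sets of $Q$, use the contrapositive of separability to make $P(T=1\mid X)$ constant on each level set, and invoke calibration to identify that constant with the predicted value $q$, concluding $Q=P$ and hence zero asymptotic bias. Your additions---routing the conclusion through the $\ell_\chi$ bound rather than direct substitution into $\hat\tau(Q)$, the explicit triangle-inequality/law-of-large-numbers treatment of the finite-sample term, and the measure-theoretic care with null level sets (which the paper's continuous-input case glosses over)---are refinements of steps the paper treats informally, not a different argument.
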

% Idea: use this bound on the calibration error.
% First, we define
% $$
% \tau = P(Y=1 | do(T=1)) = \sum_X P(Y=1 | X, do(T=1)) P(X) = \sum_X P(Y=1 | X, T=1) P(X)
% $$
% as the true treatment effect. We define
% $$
% \tau_0 = P(Y=1 |T=1) = \sum_X P(Y=1 | X, T=1) P(X|T=1) \propto \sum_X P(Y=1 | X, T=1) P(T=1|X) P(X)
% $$
% to be the naive (biased) estimate. We define
% $$
% \hat\tau(Q) = \sum_X P(Y=1 | X, T=1) \frac{P(X|T=1)}{Q(X|T=1)} \propto \sum_X P(Y=1 | X, T=1) \frac{P(T=1|X)}{Q(T=1|X)} P(X)
% $$
% as being the propensity score estimate with model $Q$. Clearly, if $Q=P$ we get the correct value. If not, we can consider the error
% $$E = (\hat\tau(P) - \hat\tau(Q))^2. $$
% We can form a bound on $E$ as
% \begin{align*}
% E 
% & = (\hat\tau(P) - \hat\tau(Q))^2 \\
% & \leq \approx \sum_X P(Y=1 | X, T=1) P(X) \left( 1- \frac{P(T=1|X)}{Q(T=1|X)} \right)^2, \\
% & \leq \approx \sum_X P(Y=1 | X, T=1) P(X) \ell(P,Q)
% \end{align*}
% which is a type of expected Chi-Squared divergence between $P, Q$, a type of proper score. Thus when $P = Q$, we get zero error, and otherwise we get a bound.

% We have treatment $T \in \{0, 1\}$, observed covariates $X \in \mathcal{X}$ and scalar outcome $Y \in \mathcal{R}$. $P(T=1|X)$ is the true treatment assignment mechanism and we train a propensity score model $Q(T=1|X)$. 

% For $X_1, X_2 \in \mathcal{X}$ such that $X_1 \neq X_2$, we assume that if $P(T=1|X_1) \neq P(T=1|X_2)$ then $Q(T=1|X_1) \neq Q(T=1|X_2)$. 
% \newpage
\begin{proof}
We prove this for discrete inputs at first and then prove it for continuous inputs.

{\textbf{Discrete Input Space.}}    

If our input space $\mathcal{X}$ is discrete, then the number of distinct values that $Q(T=1|X)$ can take is countable. Let us assume that $Q(T=1|X)$ takes values $\{q_i\}_{i=1}^M$. Thus, we can partition $\mathcal{X}$ into buckets $\{B_i\}_{i=1}^{M}$ such that $B_i = \{X| Q(T=1|X) = q_i\}$. Due to separability, we have $\forall X_1, X_2 \in \mathcal{X}, Q(T|X_1) = Q(T|X_2) \implies P(T|X_1) = P(T|X_2)$. Thus, we have $\forall i, \forall X_1, X_2 \in B_i, Q(T=1|X_1)=Q(T=1|X_2),$ and $  P(T=1|X_1)=P(T=1|X_2).$

Let us assume that for each bucket $B_i$, our true propensity $P(T=1|X)$ is $p_i$, i.e, if $X \in B_i$ then $Q(T=1|X)=q_i$ and $P(T=1|X)=p_i$.

% Thus, the model $Q(T=1|X)$ outputs values $\{q_i\}_{i=1}^{M}$ and true propensity $P(T=1|X)$ takes values $\{p_i\}_{i=1}^{M}.$ 

Assuming positivity, $0 < p_i < 1$.

Now, for all $i$, we can write 
\begin{align*}P(T=1|Q(T=1|X)=q_i) &= P(T=1| X \in B_i) \\ &= p_i.
\end{align*}

If $Q$ is calibrated, then by definition $p_i=q_i$. 

% \sd{Below is not necessary if we directly claim that P(T|X) and Q(T|X) are identical}

Now, we can write the expression for ATE $\tau$ as  
\begin{align*}
   {\tau} = \hat{\tau}(P) &= \mathbb{E}_{Y, T, X}[\frac{TY}{P(T=1|X)} - \frac{(1-T)Y}{(1-P(T=1|X))}] \\
   &=\sum_{i=1}^{N}
   P(X \in B_i) \mathbb{E}_{Y, T} \left(\frac{TY}{p_i} - \frac{(1-T)Y}{(1-p_i)}\right) \\
   % &=\sum_{i=1}^{N}
   % P(X \in B_i) E_{T, Y | X \in B_i}\left(\frac{TY}{p_i} - \frac{(1-T)Y}{(1-p_i)}\right) \\
\end{align*}
Using our propensity score model $Q(T=1|X)$, we estimate $\hat{\tau}$ as  
\begin{align*}
   \hat{\tau}(Q) &= \mathbb{E}_{Y, T, X}[\frac{TY}{Q(T=1|X)} - \frac{(1-T)Y}{(1-Q(T=1|X))}] \\
   &=\sum_{i=1}^{N}
   P(X \in B_i) \mathbb{E}_{Y, T} \left(\frac{TY}{q_i} - \frac{(1-T)Y}{(1-q_i)}\right) \\
   % &=\sum_{i=1}^{N}
   % P(X \in B_i) E_{X \in B_i}(\frac{TY}{q_i} - \frac{(1-T)Y}{(1-q_i)}) \\
\end{align*}

If our model $Q$ is calibrated, then $p_i = q_i$. Hence, $0 < q_i < 1$ and $\hat{\tau}$ is well-defined. Also, $\tau = \hat{\tau}(P) = \hat{\tau}(Q)$. 

When our observational data contains $n$ units, the IPTW estimator based on model $Q(T=1|X)$ is $\hat{\tau}_n = \frac{1}{n} \sum_{i=0}^n \left(\frac{T^{(i)} Y^{(i)}}{Q(T=1|X^{(i)})} - \frac{(1-T^{(i)})Y^{(i)}}{1-Q(T=1|X^{(i)})}\right).$

Hence, $\lim_{n \rightarrow \infty} \hat{\tau}_n = \hat{\tau}(Q) = \hat{\tau}(P) = \tau. $

% Thus, for $\mathcal{X}$ is discrete, we have a 

%if $\forall X_1 \neq X_2, P(T=1|X_1) \neq P(T=1|X_2) \implies Q(T=1|X_1) \neq Q(T=1|X_2)$ and  our model $Q$ is perfectly calibrated, then we can obtain unbiased estimate of treatment effect using the IPTW estimator.
% \newline

% \newline
{\textbf{Continuous Input Space.}}    

% \sd{Can retain the proof just for continuous X to make it concise.}

When $X$ is continuous, the number of buckets can be uncountable. The buckets can now be formed as $B_q = \{X | Q(T=1|X)=q\}, \forall q \in [0, 1]$. It is easy to see that $\{B_q\}_{q \in [0, 1]}$ partitions $\mathcal{X}$. Note that $B_q$ can be empty if there exists no $X$ such that $Q(T=1|X)=q$.

Due to separability, $\forall X_1, X_2 \in \mathcal{X}, Q(T|X_1) = Q(T|X_2) \implies P(T|X_1) = P(T|X_2)$. 
Thus, for all $q$, $P(T=1|X)$ takes on a unique value for all $X \in B_q$, i.e., $\forall q \in [0, 1], P(T=1|X \in B_q) = f(q),$ where function $f: [0,1] \rightarrow [0,1]$.

Hence, we can write 
\begin{align*}
\forall q \in [0, 1], P(T=1| Q(T=1|X)=q) &= P(T=1| X \in B_q) \\
&=f(q).
\end{align*} 
When model $Q(T=1|X)$ is calibrated by our definition, then $\forall q \in [0, 1], q = f(q).$

Therefore, $\forall q \in [0,1], Q(T=1|X \in B_q) = q = f(q) = P(T=1| X \in B_q)$.

Since $\{B_q\}_{q \in [0, 1]}$ partitions $\mathcal{X}$, we have $\forall X \in \mathcal{X}, P(T=1|X) = Q(T=1|X)$. Thus, $\hat{\tau} (P) = \hat{\tau} (Q)$.

\end{proof}

 % \subsection{Weak Separability}
 
% $$\hat{\tau} = \frac{1}{M} \sum_{i=0}^M (\frac{T_i Y_i}{Q(T=1|X_i)} - \frac{(1-T_i)Y_i}{1-Q(T=1|X_i)})$$

% Other ideas
% \begin{enumerate}
%     \item Bounds on the ratio $\tau'/\tau$ when a subset of buckets results in the same outcome $Q(T=1|X)$ (calibrated but not sharp for a subset of buckets)
%     \item When X is continuous, we might still derive the above when the number of buckets is countable. When buckets are uncountable, can we redo this proof using integral over buckets? Might need to add some conditions
%     \item Bounding $\tau'/\tau$ using upper bound on calibration error $\sum_{i=1}^N w_i|p_i - q_i|^2 = \sum_{i=1}^N P(X_i \in B_i)|p_i - q_i|^2 $
%     \item Calibration error as function of dataset size and relating that with $\tau'/\tau$
% \end{enumerate}
\subsection{Doubly Robust Estimators and Error Bounds on Causal Effect Estimation}
\label{apdx:doubly-robust}

Given a dataset $\{x_i, t_i, y_i\}_{i=1}^n$, the doubly robust AIPW (Augmented Inverse Propensity Weight) estimator can be used to compute ATE estimate as
$$ \hat{\tau'}_n(Q, f) = \frac{1}{n}\sum_{i=1}^n \bigg( f(x_i, 1) - f(x_i, 0) + \frac{t^{(i)}(y^{(i)} - f(x_i, 1))}{Q(T=1|x^{(i)})}  - \frac{(1-t^{(i)})(y^{(i)} - f(x, 0))}{1-Q(T=1|x^{(i)})}\bigg).$$

The outcome model $f(X=x, T=t)$ can be learned from available data to predict potential outcome $Y[X=x, do(T=t)]$, where the input covariates are set to $x$ and the applied  intervention is $T=t$. Let us assume that $f(X=x, T=t)$ produces an error of $\epsilon(X=x, T=t)$, i.e. 
$f(X, T) = Y[X, do(T)] + \epsilon(X, T)$.

Thus, we can rewrite the causal effect estimate $\hat{\tau'}_n(Q, \epsilon)$ as

$$ \hat{\tau'}_n(Q, \epsilon) = \frac{1}{n}\sum_{i=1}^n \bigg( Y[x_i, do(t=1)] - Y[x_i, do(t=0)] + \epsilon(x_i, 1) - \epsilon(x_i, 0) - \frac{t^{(i)}(\epsilon(x_i, 1))}{Q(T=1|x^{(i)})}  + \frac{(1-t^{(i)})( \epsilon(x, 0))}{1-Q(T=1|x^{(i)})}\bigg).$$

When $n \rightarrow \infty$, we have 

$$ \lim_{n \rightarrow \infty}\hat{\tau'}_n(Q, \epsilon) = \hat{\tau'}(Q, \epsilon) = \hat{\tau'}(Q, 0) + \mathbb{E}_{X, T}[\epsilon(X, 1) (1 - \frac{T}{Q(T=1|X)} -\epsilon(X, 0) (1 - \frac{1-T}{1-Q(T=1|X)}))],$$
%Similar to Corollary~\ref{corollary} for the IPTW estimator, 
where second equality is true due to doubly robust property. We state the following error bound for the AIPW estimator:
\begin{corollary}
\label{apdx:corollary}
Let $|\epsilon(X, T)| \leq \epsilon_{max}$ for all $X \in \mathcal{X}, T \in \{0, 1\}$.
The error of an AIPW estimator with propensity score model $Q$ and error in outcome model $\epsilon$ is bounded by $\epsilon_{max} \sum_{t} \mathbb{E}_{X}[l_{X} (P_t, Q_t)^{1/2}]$ where $P_t = P(T=t|X), Q_t = Q(T=t|X).$
\end{corollary}
Due to the doubly robust property, we know that the true ATE estimate $\tau = \hat{\tau'}(Q, 0) = \hat{\tau'}(P, \epsilon)$ for any propensity model $Q(T=1|X)$ and error function $\epsilon(X, T)$. 

The L1 error in our ATE estimate $\hat{\tau'}_n(Q, \epsilon)$ (after seeing infinite samples) can be expressed as 
$$E = |\hat{\tau'}(Q, \epsilon) - {\tau}| = | \hat{\tau'}(Q, \epsilon) - \hat{\tau'}(Q, 0)|$$
Thus, 
\begin{align*}
E 
& = 
\left|\mathbb{E}_{X, T}[\epsilon(X, 1) (1 - \frac{T}{Q(T=1|X)} -\epsilon(X, 0) (1 - \frac{1-T}{1-Q(T=1|X)}))] \right| &\\
& = 
\left|\mathbb{E}_{X}\mathbb{E}_{T|X}[\epsilon(X, 1) (1 - \frac{T}{Q(T=1|X)} -\epsilon(X, 0) (1 - \frac{1-T}{1-Q(T=1|X)}))] \right|&\\
& = 
\left|\mathbb{E}_{X}[\epsilon(X, 1) (1 - \frac{P(T=1|X)}{Q(T=1|X)} -\epsilon(X, 0) (1 - \frac{1-P(T=1|X)}{1-Q(T=1|X)}))] \right|&\\
& \leq 
\mathbb{E}_{X}[\left|\epsilon(X, 1) (1 - \frac{P(T=1|X)}{Q(T=1|X)}\right| + \left|\epsilon(X, 0) (1 - \frac{P(T=0|X)}{Q(T=0|X)}))\right|] &\\
& \leq 
\epsilon_{max} \mathbb{E}_{X}[\left|(1 - \frac{P(T=1|X)}{Q(T=1|X)}\right| + \left| (1 - \frac{P(T=0|X)}{Q(T=0|X)}))\right|] & \text{where } \epsilon_{max} = \max_{X, T} |\epsilon(X, T)|\\
& \leq 
\epsilon_{max} \sum_t \mathbb{E}_{X}[l_{X} (P_t, Q_t)^{1/2}] & \text{where } P_t = P(T=t|X), Q_t = Q(T=t|X)\\
\end{align*}
Thus, we have an error bound on the asymptotic ATE estimate that relates with the chi-squared divergence. Thus, given that the learned outcome model is inaccurate (due to possible mis-specification), training a recalibrator for the propensity score model with $l_X$ as loss function reduces the chi-squared divergence and improves the error bound. %(similar to Theorem~\ref{thrm:error-bound-lx} for IPTW estimator).

\begin{theorem}
\label{apdx:thrm:error-bound-lx}
    Let $\ell$ be the expected bound on the error of an uncalibrated AIPW estimator $Q$ in Corollary~\ref{apdx:corollary}, and let $\ell'$ be the bound for $Q'$, the recalibrated version of $Q$ obtained using Algorithm~\ref{alg:recalibrate} with $\ell_\chi^{1/2}$ as the choice of loss $L$. Then as the size of the calibration set $n \to \infty$ we have $\ell' \leq \ell$ with equality iff $Q = Q'$.
\end{theorem}
\vspace{-4mm}
% \sd{The proof refers to contents from subsequent section; can move it to a different location}
\begin{proof}[Proof]
    %Similar to the proof for Theorem~\ref{thrm:error-bound-lx}, 
    Corollary~\ref{apdx:corollary} states that the error of an AIPW estimator with propensity score model $Q$ and error in outcome model $\epsilon$ is bounded by $\epsilon_{max} \sum_{t} \mathbb{E}_{X}[l_{X} (P_t, Q_t)^{1/2}]$ where $|\epsilon(X, T)| \leq \epsilon_{max}$ for all $X \in \mathcal{X}, T \in \{0, 1\}$, $P_t = P(T=t|X), Q_t = Q(T=t|X)$ and $\ell_\chi(P_t,Q_t)= \left( 1- \frac{P(T=t|X)}{Q(T=t|X)} \right)^2$ is the $chi$-squared loss between the true propensity score and the model $Q$.

Thus, $\ell = \epsilon_{max} \sum_{t} \mathbb{E}_{X}[l_{X} (P_t, Q_t)^{1/2}]$ and $\ell' = \epsilon_{max} \sum_{t} \mathbb{E}_{X}[l_{X} (P_t, Q_t')^{1/2}]$. Clearly, the upper bound on $\ell$ and $\ell'$ depends on $\ell_\chi(P,Q)$ and $\ell_\chi(P,Q')$ respectively. 
%The part of $\ell_1, \ell_2$ that depends on $Q \in \{Q_1, Q_2\}$ is $L(Q, T) = \mathbb{E}_X \mathbb{E}_{T|X} \ell_\chi(Q(T=1|X), T)^{1/2}$. 

When we use Algorithm~\ref{alg:recalibrate} to perform recalibration, we obtain $Q' = R \circ Q.$ Here, we can choose the loss function  $L(Q, T) = \mathbb{E}_X \mathbb{E}_{T|X} \ell_\chi(Q(T=1|X), T)^{1/2}$. From Theorem~\ref{lem:loss}, it follows that $L(Q', T) = L(R \circ Q, T) \leq L(Q, T) + o(n)$ for a recalibrator $R$. 

As $n \to \infty$, $R \to B$ (Bayes optimal recalibrator; see Task~\ref{ass:density}). 

If $Q \neq Q'$, then $L(Q', T) \neq L(Q, T)$ because $L$ is strictly proper. Conversely, when $Q=Q'$ clearly $\ell = \ell'$. Hence, the claim follows.
\end{proof}

Now, we prove that calibration is a necessary condition for accurate causal effect estimation when the outcome model in AIPW estimator is inaccurate. 

\begin{theorem}
\label{apdx:thrm-calibration-necessary}
When propensity model $Q(T|X)$ is not calibrated and the  outcome model f(X, T) is inaccurate for $X \in \{X: Q(T=1|X)=q\} \subseteq \mathcal{X}$ such that $q \in (0, 1), P(T=1| Q(T=1|X')=q) \neq q$, then there exists true outcome function such that the doubly robust AIPW estimator based on Q and f yields an incorrect estimate of true causal effects almost surely.
\end{theorem}
\begin{proof}

Following the setup in \ref{apdx:calibration-necessary}, we let $S_Q = \{q | \exists X \in \mathcal{X},  Q(T=1|X) = q\}$.
We partition $\mathcal{X}$ into buckets $\{B_q\}_{q \in S_q}$ such that $B_q = \{X | Q(T=1|X)=q\}$. When $Q(T=1|X)$ is not calibrated, we know that $\exists q \in [0, 1], P(T=1|Q(T=1|X)=q) \neq q.$ 

We design the true outcome function $Y[X, do(T=t)]$ such that $Y[X, do(T=0)] = 0$. Since the learned outcome model $f(X, T) = Y[X, do(T=t)] + \epsilon(X, T)$ is inaccurate (possibly from learning a mis-specified model), let us define $\mathcal{X}_\epsilon \subseteq \mathcal{X}$ such that $\forall X \in \mathcal{X}_\epsilon,  \epsilon(X, T) \neq 0$ and  $\forall X \in \mathcal{X} /\mathcal{X}_\epsilon,  \epsilon(X, T) = 0$. For the sake of simplicity, we assume that the outcome model $f(x, t)$ can learn the true outcome function whenever $T=0$, since the true outcome is $0$ whenever $T=0$, Thus, $\forall X \in \mathcal{X}, \epsilon(X, T=0) = 0$. 

Now, let us pick $q' \in S_Q$ such that $ P(T=1|Q(T=1|X)=q') \neq q'$ and $B_{q'} \cap \mathcal{X}_\epsilon \neq \phi$. We can always pick such a $q'$ as long as $Q$ is uncalibrated and $ \exists X \in B_{q'}$ such that $, \epsilon (X, T=1) \neq 0$ (i.e. the learned outcome model $f(X, T)$ is inaccurate where the learned propensity model produces inaccurate uncertainties). 

With this, we can write the expression for PEHE (Precision in Estimation of Heterogenous Treatment Effect) estimate with $n$ samples $F_n (Q, \epsilon)$ as

$ F_n (Q, \epsilon) = \frac{1}{n}\sum_{i=1}^n \bigg( Y[x_i, do(t=1)] - Y[x_i, do(t=0)] + \epsilon(x_i, 1) - \epsilon(x_i, 0) - \frac{t^{(i)}(\epsilon(x_i, 1))}{Q(T=1|x^{(i)})}  + \frac{(1-t^{(i)})( \epsilon(x, 0))}{1-Q(T=1|x^{(i)}) } - (Y[x_i, do(t=1)] - Y[x_i, do(t=0)])\bigg)^2 = \frac{1}{n}\sum_{i=1}^n \bigg(\epsilon(x_i, 1) - \epsilon(x_i, 0) - \frac{t^{(i)}(\epsilon(x_i, 1))}{Q(T=1|x^{(i)})}  + \frac{(1-t^{(i)})( \epsilon(x, 0))}{1-Q(T=1|x^{(i)}))}\bigg)^2.$

Now, we will try to establish a lower bound on the error $F_n (Q, \epsilon)$ when $n \rightarrow \infty$. 
\begin{align*}
    F
    &= \lim_{n \rightarrow \infty} F_n (Q, \epsilon) & \\
    &= \mathbb{E}_{X, T}[(\epsilon(X, 1) (1 - \frac{T}{Q(T=1|X)} -\epsilon(X, 0) (1 - \frac{1-T}{1-Q(T=1|X)})))^2] \\
    &= \mathbb{E}_{X}\mathbb{E}_{T|X}[(\epsilon(X, 1) (1 - \frac{T}{Q(T=1|X)} -\epsilon(X, 0) (1 - \frac{1-T}{1-Q(T=1|X)})))^2] \\
    & \text{Following the setup in \ref{apdx:calibration-necessary}, we expand the expectation over X} & \\
    & \text{(similar expression can be written with} \int_X \text{if X is continuous)} & \\
    &=   \sum_{q \in S_Q}\sum_{X \in {B_q}} (\epsilon(X, 1) (1 - \frac{P(T=1|X)}{Q(T=1|X)} -\epsilon(X, 0) (1 - \frac{1-P(T=1|X)}{1-Q(T=1|X)})))^2 P(X)  & \\
    &=   \sum_{q \in S_Q}\sum_{X \in {B_q \cap \mathcal{X}_\epsilon}} (\epsilon(X, 1) (1 - \frac{P(T=1|X)}{Q(T=1|X)} -\epsilon(X, 0) (1 - \frac{1-P(T=1|X)}{1-Q(T=1|X)})))^2 P(X)  & \forall X \in \mathcal{X} /\mathcal{X}_{\epsilon}, \epsilon(X, T) = 0\\
    &\text{Since we assume that }\forall x \in \mathcal{X}, \epsilon(x, 0) = 0, & \\
    &=   \sum_{q \in S_Q}\sum_{X \in {B_q \cap \mathcal{X}_\epsilon}} (\epsilon(X, 1) (1 - \frac{P(T=1|X)}{q})^2 P(X)  & \\
    & \geq  \sum_{X \in {B_{q'} \cap \mathcal{X}_\epsilon}} (\epsilon(X, 1) (1 - \frac{P(T=1|X)}{q'})^2 P(X)  & P(T=1|Q(T=1|X)=q') \neq q' \\
    & \geq  \epsilon_{min} \sum_{X \in {B_{q'} \cap \mathcal{X}_\epsilon}} ((1 - \frac{P(T=1|X)}{q'})^2 P(X)  & \epsilon_{min} = \min_{X \in {B_q \cap \mathcal{X}_\epsilon}} \epsilon(X, 1)\\
\end{align*}
    
The above expression is non-zero since $P(T=1|Q(T=1|X)=q') \neq q'$ and $\epsilon_{min} \neq 0$ by design. Thus, when $Q(T|X)$ is not calibrated and the learned outcome model f(X, T) is inaccurate over the regions where $P(T=1|Q(T=1|X)=q) \neq q$, then there exists true outcome function such that the AIPW estimator based on Q and f yields an incorrect estimate of true causal effects almost surely.  
\end{proof}
% where $R_{y,t} \propto P(Y=y | X, T=t) P(X)$ is a data distribution and $\ell_\chi(Q,P)= \left( 1- \frac{P(T=t|X)}{Q(T=t|X)} \right)^2$ is the $chi$-squared loss between the true propensity score and the model $Q$. 

% \begin{proof}[Proof]

% \begin{align*}
%  R_{y,t} \sim k. P(Y=y | X, T=t) P(X), \\
% \end{align*}
% where k is constant w.r.t. variable X. 
% \begin{align*}
% |\pi_{y,t}(P) - \pi_{y,t}(Q)| &= \left| \sum_x P(y|x,t)\frac{P(t|x)}{P(t|x)}P(x) - \sum_x P(y|x,t)\frac{P(t|x)}{Q(t|x)}P(x) \right| \\
% &= \left|\sum_x (1-\frac{P(t|x)}{Q(t|x)})P(y|x,t) P(x)\right| \\
% & \leq \sum_x \left|(1-\frac{P(t|x)}{Q(t|x)})\right| P(y|x,t) P(x) \\
% & = \frac{1}{k}\sum_x [ \left| \left( 1- \frac{P(t|x)}{Q(t|x)} \right) \right|]{k. P(y | x, t) P(x)} \\
% & = \frac{1}{k} \mathbb{E}_{X \sim {k. P(y | x, t) P(x)}}[ \left| \left( 1- \frac{P(T=t|X)}{Q(T=t|X)} \right) \right|] \\
% &=\frac{1}{k} \mathbb{E}_{X \sim R_{y,t}}[ \ell_\chi(P,Q)^\frac{1}{2}]
% \end{align*}

% \end{proof}

\subsection{Algorithms for Calibrated Propensity Scoring}
% \sd{I am taking the following from ICML2022 paper and have also cited it.}
\label{apdx:algorithms-calibrated}
\subsubsection{Asymptotic Calibration Guarantee}
\label{apdx:asymptotic-calibration}

\begin{theorem}%[Calibration]
The model $R \circ Q$ is asymptotically calibrated and
the calibration error $\mathbb{E}[L_c(R \circ Q,S)] < \delta(m)$ for $\delta(m) = o(m^{-k}), k>0$ w.h.p.
% $\mathcal{L}_C \leq \epsilon$ when $T \to \infty$.
%\vk{todo: copy statement and proof to appendix}
\end{theorem}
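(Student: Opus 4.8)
The plan is to express the calibration error as an \emph{excess proper loss} and then control it using Task~\ref{ass:density} together with a refinement (sharpness) comparison. I will use the calibration--refinement decomposition of a proper loss $L$ \citep{gneiting2007probabilistic}: for any forecast $f$ with calibrated counterpart $\pi_f := P(T=1\mid f)$, conditioning on $f$ makes $T$ Bernoulli with mean $\pi_f$, so the associated Bregman divergence $L_c$ satisfies $\mathbb{E}[L_c(f,\pi_f)] = \mathbb{E}[L(f,T)] - \mathbb{E}[L(\pi_f,T)]$. Applying this with $f = R\circ Q$ and $\pi_f = S$ gives the identity $\mathbb{E}[L_c(R\circ Q,S)] = \mathbb{E}[L(R\circ Q,T)] - \mathbb{E}[L(S,T)]$, which is exactly the quantity the theorem bounds.

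First I would upper-bound $\mathbb{E}[L(R\circ Q,T)]$ using the assumption that Task~\ref{ass:density} is solved: w.h.p.\ $\mathbb{E}[L(R\circ Q,T)] < \mathbb{E}[L(B\circ Q,T)] + \delta$ with $\delta = o(m^{-k})$. Next I would lower-bound $\mathbb{E}[L(S,T)]$ by $\mathbb{E}[L(B\circ Q,T)]$. The idea is that $B = P(T=1\mid Q(X))$ and $S = P(T=1\mid (R\circ Q)(X))$ are both calibrated forecasts, but $B$ conditions on the finer $\sigma$-algebra $\sigma(Q)\supseteq \sigma(R\circ Q)$ because $R\circ Q$ is a function of $Q$; by the tower property $\mathbb{E}[\,B\circ Q \mid R\circ Q\,] = S$. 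Writing the loss of a calibrated forecast as the Bayes risk $\mathbb{E}[\underline{L}(\cdot)]$, where $\underline{L}$ is the concave generalized entropy of $L$, conditional Jensen then yields $\mathbb{E}[L(B\circ Q,T)] = \mathbb{E}[\underline{L}(B\circ Q)] \leq \mathbb{E}[\underline{L}(S)] = \mathbb{E}[L(S,T)]$. Chaining the three steps gives $\mathbb{E}[L_c(R\circ Q,S)] = \mathbb{E}[L(R\circ Q,T)] - \mathbb{E}[L(S,T)] < \mathbb{E}[L(B\circ Q,T)] + \delta - \mathbb{E}[L(S,T)] \leq \delta$, so the calibration error is $o(m^{-k})$ and $R\circ Q$ is asymptotically calibrated. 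The ``w.h.p.'' qualifier is inherited directly from Task~\ref{ass:density}.

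The main obstacle is the refinement inequality $\mathbb{E}[L(B\circ Q,T)] \leq \mathbb{E}[L(S,T)]$: it is intuitive (a sharper calibrated forecast cannot incur larger loss) but relies on the Bregman / generalized-entropy structure of proper scoring rules rather than on $L$ being any specific loss. I would therefore establish, for a general proper $L$, both the decomposition identity of the first paragraph and the concavity of $\underline{L}$, and confirm that $L_c$ is a genuine divergence vanishing exactly when a forecast equals its calibrated version. A secondary check is that this loss-based bound implies the intended operational notion of calibration (e.g.\ the ECE-style metric), which holds because for the standard choices of $L$ the divergence $L_c$ dominates, up to constants, the squared gap between $R\circ Q$ and $S$.
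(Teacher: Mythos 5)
Your proof is correct, and it reaches the theorem by a route that is mathematically equivalent to the paper's but more self-contained. The paper also starts from the calibration--refinement decomposition of a proper loss and the sandwich supplied by Task~\ref{ass:density}; it then invokes the decomposition of \citet{kullflach2015novel}, $\mathbb{E}[L_r(S)] = \mathbb{E}[L_g(S, B\circ Q)] + \mathbb{E}[L(B\circ Q, T)]$, substitutes into the density-estimation bound to get $\mathbb{E}[L_c(R\circ Q,S)] + \mathbb{E}[L_g(S,B\circ Q)] \le \delta(m)$, and concludes by (implicitly) using non-negativity of the grouping loss $L_g$. You instead collapse this three-term decomposition into the two-term identity $\mathbb{E}[L_c(R\circ Q,S)] = \mathbb{E}[L(R\circ Q,T)] - \mathbb{E}[L(S,T)]$ and prove the needed inequality $\mathbb{E}[L(B\circ Q,T)] \le \mathbb{E}[L(S,T)]$ from first principles: $\sigma(R\circ Q)\subseteq\sigma(Q)$ because $R$ is a deterministic measurable map, the tower property gives $\mathbb{E}[B\circ Q \mid R\circ Q] = S$, and conditional Jensen applied to the concave Bayes risk $\underline{L}$ yields the refinement (data-processing) inequality. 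That inequality is precisely the statement that the paper's grouping loss has non-negative expectation, so the two arguments carry the same content; yours buys self-containedness (no appeal to the grouping-loss machinery, only the Savage representation and concavity of $\underline{L}$), while the paper's route yields the slightly stronger byproduct that $\mathbb{E}[L_g(S,B\circ Q)]$ is itself $O(\delta(m))$, which your telescoping discards. Your closing remark connecting the divergence bound to the operational (ECE-style) notion of calibration is a useful addition that the paper leaves implicit; to make it rigorous you would invoke, e.g., Pinsker's inequality for the log-loss case, and note that for the Brier score $L_c$ is exactly the expected squared gap.
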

\begin{proof}
    Any proper loss can be decomposed as: 
    proper loss = calibration - sharpness + irreducible term~\citep{guo2017calibration}. The calibration term consists of the error $\mathbb{E}[L_c(R \circ Q,S)]$. The sharpness and irreducible term can be represented as the refinement term $\mathbb{E}(L_r(S))$. Table~\ref{tbl:properlosses} provides examples of some proper loss functions and the respective decompositions. The rest of our proof uses the techniques of \citet{pmlr-v162-kuleshov22a} in the context of propensity scores.
    
     \citet{kullflach2015novel} show that the refinement term can be further divided as $\mathbb{E}(L_r(S)) = \mathbb{E} (L_g(S, B \circ Q)) + \mathbb{E}(L(B \circ Q, T)).$ Here, $B$ is the Bayes optimal recalibrator $P(T=1|Q(T=1|X))$ and $S$ is $P(T=1|R \circ Q).$ 

    Recall that if we solve the Task~\ref{ass:density}, we have for $\delta(m)=o(1)$
\begin{flalign*}
\mathbb{E}(L(B \circ Q, T)) & \leq \mathbb{E}(L(R \circ Q, T)) \leq \mathbb{E}(L(B \circ Q, T)) + \delta(m)&&\\
\text{Using \citet{gneiting2007probabilistic}}, & \text{~\citet{kullflach2015novel} we decompose $\mathbb{E}(L(R \circ Q, T))$} &&\\
\implies \mathbb{E}(L(B \circ Q, T)) &\leq \mathbb{E}(L_c(R \circ Q, S)) + \mathbb{E}(L_g(S, B \circ Q)) + \mathbb{E}(L(B \circ Q, T)) \leq \mathbb{E}(L(B \circ Q, T)) + \delta(m)&&\\
\implies \mathbb{E}(L_c(R \circ Q, S)) & + \mathbb{E}(L_g(S, B \circ Q)) \leq  \delta(m)&&\\
\implies \mathbb{E}(L_c(R \circ Q, S)) & \leq  \delta(m)&&\\
\end{flalign*}
\begin{table}
\begin{center}
\begin{tabular}{l|c|c|c}
\toprule
{\bf Proper Score} & {\bf Loss} & {\bf Calibration} & {\bf Refinement} \\
& $L(F,G)$ & $L_c(F,S)$ & $L_r(S)$ \\
\midrule
Logarithmic & $\mathbb{E}_{y\sim G}$ $\log f(y)$ & $\text{KL}(s||f)$ & $H(s)$ \\
CRPS & {\small $\mathbb{E}_{y\sim G}$ $(F(y) - G(y))^2$} & {\small $\int^{\infty}_{-\infty}(F(y) - S(y))^2$dy} & {\small $\int^{\infty}_{-\infty} S(y) (1 - S(y))dy$} \\
Quantile & {\small $\mathbb{E}^{\tau\in U[0,1]}_{y\sim G} \rho_\tau(y-F^{-1}(\tau))$} & {\small $\int_0^1 \int^{F^{-1}(\tau)}_{S^{-1}(\tau)}(S(y) - \tau)dyd\tau$} & {\small $\mathbb{E}^{\tau\in U[0,1]}_{y\sim S} \rho_\tau(y-S^{-1}(\tau))$} \\
\bottomrule
\end{tabular}
\end{center}
\caption{Proper loss functions. A proper loss is a function $L(F,G)$ over a forecast $F$ targeting a variable $y \in \mathcal{Y}$ whose true distribution is $G$ and for which $S(F,G) \geq S(G,G)$ for all $F$. Each $L(F,G)$ decomposes into the sum of a calibration loss term $L_c(F,S)$ (also known as reliability) and a refinement loss term $L_r(S)$ (which itself decomposes into sharpness and an uncertainty term). Here, $S(y)$ denotes the cumulative distribution function of the conditional distribution $\mathbb{P}(Y=y \mid F_X = F)$ of $Y$ given a forecast $F$, and $s(y), f(y)$ are the probability density functions of $S$ and $F$, respectively. We give three examples of proper losses: the log-loss, the continuous ranked probability score (CRPS), and the quantile loss.}\label{tbl:properlosses}
\end{table}
    Thus, solving Task~\ref{ass:density} allows us to obtain asymptotically calibrated $R\circ Q$ such that the calibration error is bounded as $\mathbb{E}[L_c(R \circ Q,S)] < \delta(m)$.  

\end{proof}
\subsubsection{No-Regret Calibration}
\label{apdx:no-regret}
\begin{theorem}
\label{lem:loss2}
The recalibrated model has asymptotically vanishing regret relative to the base model: $\mathbb{E}[L(R \circ Q,T)] \leq \mathbb{E}[L(Q,T)] + \delta,$ where $\delta >0, \delta=o(m^{-k}), k>0$. % is a bound that decreases with $m$. 
% \vk{todo: proof to appendix}
\end{theorem}

\begin{proof}[Proof]
%The claim holds by empirical risk minimization. Since $R \circ H$ minimizes $L$, but is more expressive than $H$ and $R$ can represent the identity map (by Assumption \ref{ass:density}).
Solving Task \ref{ass:density} implies $\mathbb{E}[L(R \circ Q,T)] \leq \mathbb{E}[L(B \circ Q,T)] + \delta \leq \mathbb{E}[L(Q,T)] + \delta$. The first inequality comes from definition of Task~\ref{ass:density} and the second inequality holds because a Bayes-optimal $B$ has lower loss than an identity mapping~\citep{pmlr-v162-kuleshov22a}.

\end{proof}
% \sd{Above proof from main paper is already good?}

\end{document}